\newtheorem{theorem}{Theorem}
\newtheorem{lemma}{Lemma}
\newtheorem{corollary}{Corollary}[lemma]
\newtheorem{remark}{Remark}[lemma]
\newcommand{\ab}{|}
\newcommand{\der}{\partial}
\newcommand{\de}{\mathrm{d}}
\newcommand{\vevc}{\chi}
\newcommand{\sdil}{\phi}
\newcommand{\tsdil}{\tilde{\phi}}
\newcommand{\tomega}{\tilde{\omega}}
\newcommand{\tdelta}{\tilde{\delta}}
\newcommand{\tsigma}{\tilde{\sigma}}
\newcommand{\tvarphi}{\tilde{\varphi}}
\newcommand{\ttau}{\tilde{\tau}}
\newcommand{\e}{\mathrm{e}}
\newcommand{\I}{\mathrm{i}}
\newcommand{\p}{\mathrm{P}}
\newcommand{\x}{\mathrm{x}}
\newcommand{\y}{\mathrm{y}}
\newcommand{\bx}{\overline{\x}}
\newcommand{\by}{\overline{\y}}
\newcommand{\bc}{\overline{\mathrm{c}}}
\newcommand{\hx}{\hat{\x}}
\newcommand{\hy}{\hat{\y}}
\newcommand{\tvec}[2]{\left(\!\!\begin{array}{c}
    #1 \\
    #2
\end{array}\!\!\right)}
\newcommand{\matr}[4]{\left(\!\!\!\begin{array}{cc}
    #1 & #2 \\
    #3 & #4
\end{array}\!\!\!\right)}
\begin{document}
\numberwithin{equation}{section}

\title{Late-time attractors and cosmic acceleration}
\author{Gary Shiu}
\email{shiu@physics.wisc.edu}
\author{Flavio Tonioni}
\email{tonioni@wisc.edu}
\affiliation{Department of Physics, University of Wisconsin-Madison, 1150 University Avenue, Madison, WI 53706, USA}
\author{Hung V. Tran}
\email{hung@math.wisc.edu}
\affiliation{Department of Mathematics, University of Wisconsin-Madison, 480
Lincoln Drive, Madison, WI 53706, USA}

\begin{abstract}
We prove the conditions under which scaling cosmologies are inevitable late-time attractors of multi-field multi-exponential potentials, independently of initial conditions. The advantage of such scaling cosmologies is that the time dependence of the fields and of the scale factor is known analytically, thus allowing late-time observables to be determined exactly. Expanding the earlier results of ref. \href{https://arxiv.org/abs/2303.03418}{arXiv:hep-th/2303.03418}, here we continue the program of analytically characterizing the late-time behavior of cosmological solutions. Our results are general in that they are derived without relying on any approximation nor are they based on any assumption on the sources of the potential, such as their higher-dimensional or string-theoretic origin. We point out a number of model-independent features that follow from our analytic results, including a convex-hull criterion for cosmic acceleration. When applied to string theory, our analytic knowledge of late-time cosmological solutions enables us to single out potentials that can describe an accelerating universe from those which cannot and to quantitatively test several conjectured Swampland criteria.
\end{abstract}

\maketitle

\section{Introduction}

Among the possible mechanisms for cosmic acceleration, the simplest possibility is for it to be sourced by a positive cosmological constant. Realizing this possibility in string theory however is a notoriously difficult task. This is because the source of cosmic expansion should be derived from an ultraviolet-complete theory. It is a highly non-trivial balancing act that the microphysics that stabilize the moduli in string theory would lead to a metastable de Sitter vacuum. The Dine-Seiberg problem \cite{Dine:1985he} further highlights the tension between stabilizing moduli and computational control. This tension is not insurmountable, as a number of sophisticated scenarios for realizing de Sitter vacua in string theory have been developed over the years (for a recent review, see ref. \cite{Flauger:2022hie}). However, the lack of fully explicit models has led some to contemplate the possibility that metastable stable de Sitter vacua may not exist in string theory \cite{Danielsson:2018ztv}. In fact, efforts to find de Sitter vacua reveal more: it is also rare to find potentials derived in string compactifications that are positive and sufficiently flat. These and related considerations are the backdrop for various de Sitter conjectures \cite{Obied:2018sgi, Ooguri:2018wrx, Bedroya:2019snp, Rudelius:2021azq} (see e.g. also refs. \cite{Andriot:2018wzk, Garg:2018reu}) which bound in one way or another the gradient and/or the curvature of the potential. While some of these conjectures, if proven, would forbid de Sitter vacua, inflationary and quintessence scenarios are not a priori ruled out unless the order-1 numbers in the bounds can be decisively determined. This makes the current cosmic acceleration driven by a rolling potential an interesting possibility, especially in view of the Dine-Seiberg problem mentioned above. As the scalar fields roll asymptotically, we reach the boundary of field space, which in string theory is characterized by weak couplings and restoration of symmetries. It has long been speculated that the observed hierarchies and unnaturally small couplings in nature may be attributed to the universe approaching the asymptotic regime, dating back even to Dirac \cite{dirac1974cosmological}. The Dark Dimension scenario  \cite{Montero:2022prj} is another realization of this idea. Whether cosmic acceleration can take place at the boundary of the moduli space in string theory has recently been explored in a variety of approaches \cite{Conlon:2022pnx, Rudelius:2022gbz, Calderon-Infante:2022nxb, Marconnet:2022fmx, Bedroya:2022tbh, Shiu:2023nph}.

In this article, we continue the program -- initiated in ref. \cite{Shiu:2023nph} -- aimed at characterizing multi-field multi-exponential scalar-field cosmologies at late times. In particular, we prove and discuss universal convergence criteria to cosmologies with power-law scale factor. These so-called scaling solutions correspond to the critical points of the autonomous system of differential equations describing the time evolution of scalar fields coupled to gravity in an FLRW-spacetime. This is a very useful result since we have complete analytic knowledge of scaling cosmologies \cite{Collinucci:2004iw}. Earlier work discussed the stability of these solutions at perturbative linear order and numerically \cite{Halliwell:1986ja, Copeland:1997et, Malik:1998gy, Coley:1999mj, Guo:2003eu, Guo:2003rs, Bergshoeff:2003vb, Collinucci:2004iw, Kim:2005ne, Hartong:2006rt}. Moreover, we also present a new general bound with a simple geometric interpretation. Jointly with the universal bound for late-time cosmic acceleration presented in ref. \cite{Shiu:2023nph}, which is also simple to interpret geometrically, we are now endowed with powerful analytic tools to characterize late-time multi-field multi-exponential scalar-field cosmologies.

Throughout our study, we do not work under the assumptions of slow-roll nor do we look specifically for de Sitter vacua. Accelerated expansion can take place beyond the slow-roll approximation as long as the time variation of the Hubble parameter is sufficiently slow, i.e $\epsilon \equiv -\dot{H}/H^2< 1$. As is well known, this $\epsilon$-parameter, rather than the norm of the potential gradient, provides the proper diagnostic for cosmic acceleration regardless of whether the potential or the kinetic term dominates (see e.g. ref. \cite{Bahamonde:2017ize} for a review). In fact, this $\epsilon$-parameter is related to the deceleration parameter  $q \equiv - \ddot{a} a / \dot{a}^2$ commonly used in the cosmology literature by $\epsilon=1+q$. Given this distinction between cosmic acceleration in general and the special case of slow-roll, we discuss the exact relationships between the scalar-potential directional derivative, the norm of the scalar-potential gradient and the fate of cosmic acceleration -- including its time dependence -- in a model-independent way. Scaling cosmologies are special in that they saturate many inequalities and allow for exact checks of more general bounds. This work therefore naturally connects with the de Sitter conjecture of the Swampland Program in that it provides analytic results for the late-time potential gradient norm of multi-field multi-exponential potentials commonly found in the asymptotic regions of the moduli space in string theory.

Importantly, our mathematical results are completely general, regardless of any higher-dimensional and/or string-theoretic assumption. Nonetheless, the structure of the potentials we characterize is ubiquitous in the asymptotic regions of the moduli space of string compactifications. This motivates us to discuss implications of our late-time convergence results for the Swampland Program. To help keep track of general results from string theoretical discussions, all instances in which we make a string-theoretic assumption will be pointed out explicitly. Concerning late-time cosmologies, we formulate and give an analytic proof of a convex-hull criterion for cosmic acceleration. With respect to the universal bound on late-time acceleration of ref. \cite{Shiu:2023nph}, we show analytically that the distance of the coupling convex hull from the origin gives the lower bound for the $\epsilon$-parameter, and we further show that scaling cosmologies saturate this bound. Moreover, we introduce a method to compute the lower bound for the norm of the scalar-potential gradient, as it is a defining quantity of the de Sitter conjecture. We also point out that there exist scaling solutions where a subset of the scalars are stabilized, while the others are rolling. The fact that scaling solutions are late-time attractors means that at sufficiently late time, the dynamics of the rolling fields still keeps the stabilized moduli intact. As our results do not rely on string-theoretic assumptions, what we prove is a mathematically-rigorous diagnostic of cosmic acceleration based on the convex hull of the exponential couplings. Nonetheless, this convex-hull criterion can be applied to string theoretical models to check whether the associated couplings allow for cosmic acceleration. The same holds for the characterization of the constant appearing in the de Sitter conjecture. For instance, it has already been discussed by ref. \cite{Shiu:2023nph} that a rolling $d$-dimensional dilaton poses a strong obstacle to acceleration since it couples universally to all scalar-potential terms and it does so with a very steep potential profile. Our formalism makes it clear that the specific dilaton couplings that appear in string theory rule out cosmic acceleration in a vast class of models. Recently, cosmologies with single-field exponential potentials have received considerable attention in the context of string theory \cite{Conlon:2022pnx, Rudelius:2022gbz, Bedroya:2022tbh, Apers:2022cyl}. Our general results presented here would enable one to substantially extend such studies.

This paper is organized as follows. In section \ref{sec: late-time cosmologies}, we extensively review the bound presented in ref. \cite{Shiu:2023nph} including its physical interpretation, and also present a new bound. In section \ref{sec: late-time scaling cosmologies}, we discuss the convergence criteria to scaling cosmologies and comment on their mathematical and physical properties. In section \ref{sec: late-time cosmologies and the swampland}, we discuss implications of our bounds and convergence results for the Swampland Program. In section \ref{sec: examples}, we discuss a few illustrative toy models. For ease of presentation, all analytic proofs are presented, in a detailed mathematical fashion, in appendix \ref{app: late-time cosmological attractors}; in the main text, we focus primarily on the physical interpretation of the results. Our conventions on reference frames and the terminologies we use for various dilatons and radions are summarized in appendix \ref{app: string dimensional reductions}. For completeness, in appendix \ref{app: string-compactification scalar potentials} we provide useful formulae for the string-theoretic scalar potentials generated by internal curvature, fluxes and localized sources as well as Casimir energies.

\section{Late-time cosmologies} \label{sec: late-time cosmologies}
In this paper, we consider low-energy effective theories in which a number of canonically-normalized scalar fields $\phi^a$, for $a=1,\dots,n$, are subject to a scalar potential of the form
\begin{equation} \label{generic exponential potential}
    V = \sum_{i = 1}^m \Lambda_i \, \e^{- \kappa_d \gamma_{i a} \phi^a}.
\end{equation}
We are agnostic about the origin of the potential, whether it descends from a string compactification, or it simply describes a phenomenological model without a higher-dimensional structure. Here, $\Lambda_i$ and $\gamma_{ia}$ are constants and they depend on the microscopic origin of the scalar-potential, if there is one, while $\kappa_d$ is the $d$-dimensional gravitational coupling. In the string-theory context, the set of scalars $\phi^a$ includes minimally the $d$-dimensional dilaton $\smash{\tdelta}$ and a radion $\smash{\tsigma}$ that controls the string-frame volume, unless these fields are stabilized at high energy scales. This general class of potentials also subsumes e.g. generalized assisted inflation \cite{Liddle:1998jc, Copeland:1999cs}. For $d>2$, let the non-compact $d$-dimensional spacetime be described by the FLRW-metric
\begin{equation} \label{FLRW-metric}
    d \tilde{s}_{1,d-1}^2 = - \de t^2 + a^2(t) \, \de l_{\mathbb{E}^{d-1}}^2
\end{equation}
with an Euclidean $(d-1)$-dimensional space; the Hubble parameter is $\smash{H = \dot{a}/a}$, where $a$ is the scale factor. Then, it can be shown that the scalar-field and Friedmann equations reduce to
\begin{subequations}
\begin{align}
    & \ddot{\phi}^a + (d-1) H \dot{\phi}^a + \dfrac{\der V}{\der \phi_a} = 0, \label{FRW-KG eq.} \\[1.0ex]
    & \dfrac{(d-1) (d-2)}{2} \, H^2 - \kappa_d^2 \biggl[ \dfrac{1}{2} \, \dot{\phi}_a \dot{\phi}^a + V \biggr] = 0, \label{Friedmann eq. 1} \\
    & \dot{H} = - \dfrac{\kappa_d^2}{d-2} \, \biggl[ \dfrac{1}{2} \, \dot{\phi}_a \dot{\phi}^a - V \biggr] - \dfrac{d-1}{2} H^2, \label{Friedmann eq. 2}
\end{align}
\end{subequations}
where for simplicity it has been assumed that the scalars only depend on the FLRW-metric time parameter. A combination of eq. (\ref{Friedmann eq. 1}) with eq. (\ref{Friedmann eq. 2}) gives
\begin{equation} \label{Friedmann eq.}
    \dot{H} = - \dfrac{\kappa_d^2}{d-2} \, \dot{\phi}_a \dot{\phi}^a.
\end{equation}
For multi-field multi-exponential scalar potentials of the form in eq. (\ref{generic exponential potential}), one can reformulate the scalar-field and Friedmann equations in terms of an autonomous system of ordinary differential equations \cite{Copeland:1997et, Coley:1999mj, Guo:2003eu}. On the one hand, this allows one to find and classify specific exact solutions to the cosmological equations in an easier way than by solving the original equations \cite{Collinucci:2004iw}. Perturbative and numerical analyses of the stability of such solutions are also made simpler in these terms \cite{Halliwell:1986ja, Copeland:1997et, Malik:1998gy, Coley:1999mj, Guo:2003eu, Guo:2003rs, Bergshoeff:2003vb, Collinucci:2004iw, Kim:2005ne, Hartong:2006rt}. On the other hand, as shown in ref. \cite{Shiu:2023nph}, this also allows one to single out universal late-time behaviors for all solutions to eqs. (\ref{FRW-KG eq.}, \ref{Friedmann eq. 1}, \ref{Friedmann eq. 2}). In this paper, we continue the program of characterizing late-time cosmologies by analyzing the asymptotic behavior of the cosmological autonomous system.

\subsubsection*{A note on slow-roll cosmologies}

In slow-roll scenarios, one assumes that the scalar fields evolve over time by slowly rolling down through the scalar potential. Mathematically, we define the slow-roll conditions to be the the conditions under which the solutions $\smash{\phi_{\mathrm{sr}}^a}$ and $\smash{H_{\mathrm{sr}}}$ to the field equations are such that the following approximations are consistent \cite{Lyth:1998xn}.

\begin{enumerate}[label=(\alph*)]
    \item The second-order scalar-field derivative is negligible compared to Hubble-friction term, i.e.
    \begin{align*}
        \frac{\ab \ddot{\phi}^a_{\mathrm{sr}} \ab}{H_{\mathrm{sr}} \ab \dot{\phi}^a_{\mathrm{sr}} \ab} \ll 1.
    \end{align*}
    In eq. (\ref{FRW-KG eq.}), one can thus neglect the second-derivative term, writing
    \begin{equation} \label{slow-roll FRW-KG eq.}
        (d-1) H_{\mathrm{sr}} \dot{\phi}_{\mathrm{sr}}^a = - \dfrac{\der V_{\mathrm{sr}}}{\der \phi_{\mathrm{sr} a}},
    \end{equation}
    where $\smash{V_{\mathrm{sr}} = V(\phi_{\mathrm{sr}})}$. By the latter expression, one can also approximate eq. (\ref{Friedmann eq.}) as
    \begin{equation} \label{slow-roll Friedmann eq. 2}
        \dot{H}_{\mathrm{sr}} = - \dfrac{\kappa_d^2}{(d-2)(d-1)^2} \dfrac{1}{H_{\mathrm{sr}}^2} \dfrac{\der V_{\mathrm{sr}}}{\der \phi_{\mathrm{sr} a}} \dfrac{\der V_{\mathrm{sr}}}{\der \phi_{\mathrm{sr}}^a}.
    \end{equation}
    \item The scalar-field kinetic energy is negligible compared to the scalar-field potential energy, i.e.
    \begin{align*}
        \dfrac{1}{2} \, \dot{\phi}_{\mathrm{sr} a} \dot{\phi}^a_{\mathrm{sr}} \ll V_{\mathrm{sr}}.
    \end{align*}
    In eq. (\ref{Friedmann eq. 1}), one neglects the kinetic energy compared to the potential energy, thus getting
    \begin{equation} \label{slow-roll Friedmann eq. 1}
        H_{\mathrm{sr}}^2 = \dfrac{2 \kappa_d^2 \, V_{\mathrm{sr}}}{(d-1) (d-2)}.
    \end{equation}
\end{enumerate}
Notice that slow roll is not equivalent to cosmic acceleration: while slow roll can imply cosmic acceleration, cosmic acceleration does not imply slow roll.\footnote{See ref. \cite{Achucarro:2018vey} for a discussion of this in the context of multi-field inflation, in relation to the Swampland Program; for an example, see e.g. ref. \cite{Aragam:2019omo}.} In the following, we will consider the conditions for accelerated cosmic expansion, but we will not assume slow roll: the equations we base our analysis on are eqs. (\ref{FRW-KG eq.}, \ref{Friedmann eq. 1}, \ref{Friedmann eq. 2}) and eq. (\ref{Friedmann eq.}), not eqs. (\ref{slow-roll FRW-KG eq.}, \ref{slow-roll Friedmann eq. 2}, \ref{slow-roll Friedmann eq. 1}). In fact, we will actually review how the bound in ref. \cite{Shiu:2023nph} generally prevents it from being a reasonable approximation for multi-field multi-exponential potentials and then we will discuss in detail the relationship of scaling cosmologies with the slow-roll approximation.

\subsection{Bounds on late-time acceleration} \label{subsec: bounds on late-time cosmic-acceleration}

In ref. \cite{Shiu:2023nph} it is shown that for multi-field multi-exponential potentials there is a universal bound on late-time cosmic acceleration that only depends on the $\gamma_{ia}$-coefficients. Here we review this bound and further elaborate on its physical meaning.

An accelerated cosmological expansion can only be achieved if the total scalar potential is positive. Therefore, we focus on scenarios in which, at least asymptotically, we have $V>0$. Let $\smash{\Lambda_{i_+} > 0}$ and $\smash{\Lambda_{i_-} < 0}$ denote the positive- and negative-definite scalar-potential coefficients, respectively, distinguishing by the indices $\smash{i = i_+, i_-}$. For each field $\phi^a$, we define $\smash{\gamma_\pm^a = \min_{i_\pm} {\gamma_{i_\pm}}^a}$ and $\smash{\Gamma_\pm^a = \max_{i_\pm} {\gamma_{i_\pm}}^a}$: each field such that $\smash{\gamma_+^a \geq \Gamma_-^a}$ or $\smash{\gamma_-^a \geq \Gamma_+^a}$ contributes towards a non-trivial lower bound for the $\epsilon$-parameter at sufficiently late times.

For all fields $\smash{\phi^a}$, for simplicity let us assume that $\smash{\gamma_+^a \geq \Gamma_-^a}$: if $\smash{\gamma_+^a > 0}$, we define $\smash{\gamma_\infty^a = \gamma_+^a}$; otherwise, we define $\smash{\gamma_\infty^a = 0}$. If $\smash{(\gamma_\infty)^2 \leq \Gamma_d^2 = 4 \, (d-1) / (d-2)}$, then, given a sufficiently large time $t_\infty$, the $\epsilon$-parameter is bounded from below at all times $t > t_\infty$ as
\begin{equation} \label{epsilon bound}
    \epsilon \geq \dfrac{d-2}{4} \, (\gamma_\infty)^2.
\end{equation}
Of course, the $\epsilon$-parameter is also bounded from above as $\smash{\epsilon \leq d - 1}$. If $\smash{\gamma_-^a \geq \Gamma_+^a}$ for a field, then one can redefine this field as $\smash{\phi'{}^a = -\phi^a}$ and find the same bound in terms of the flipped $\gamma$-coefficients. If for a field none of these orderings is in place, then we must set $\smash{\gamma_\infty^a = 0}$. If $\smash{(\gamma_\infty)^2 > \Gamma_d^2}$, irrespective of the ordering of the $\smash{\gamma_\pm^a}$- and $\smash{\Gamma_\pm^a}$-coefficients, then the $\epsilon$-parameter asymptotically approaches the value $\epsilon = d-1$.\footnote{As shown rigorously in ref. \cite{Shiu:2023nph}, the upper bound $\epsilon = d-1$ is universal for all multi-field positive-definite multi-exponential potential. Nonetheless, locally in field space the potential gradient norm $\gamma$ may be larger than $\smash{\gamma = 2 \sqrt{(d-1)/(d-2)}}$. As we detail in subsec. \ref{subsec: scalar-potential derivatives and acceleration}, the value of $\gamma$
is generally unrelated to cosmic acceleration. Also, notice that the upper bound $\epsilon=d-1$ is unrelated to slow roll.} This extreme scenario corresponds to a late-time realization of kination \cite{Apers:2022cyl}.

For a positive-definite scalar potential, the bound in eq. (\ref{epsilon bound}) is easily interpreted. If all couplings are positive, each field $\phi^a$ is pushed towards $\smash{\phi^a \sim + \infty}$ at sufficiently late times. Asymptotically, the dominating term is the one with the smallest coupling $\smash{\gamma_a = \min_i \gamma_{ia} >0}$. Such term determines the minimum steepness of the potential in that field direction, which determines the minimum kinetic energy from the field $\phi^a$, and thus the minimum contribution to cosmic deceleration, as apparent from eq. (\ref{Friedmann eq.}). If the potential is steep enough, there cannot be acceleration: the kinetic terms are large and provide a large $\epsilon$-parameter. There is however a limitation on how large such kinetic terms can be relative to the Hubble parameter due to energy-momentum conservation, as apparent from eq. (\ref{Friedmann eq. 1}). This is the reason why the $\epsilon$-parameter is bounded from above as well. If the minimum $\gamma_{ia}$-coupling is negative, and it cannot be made positive by a field sign flip, then the scalar potential has a valley and the field is not necessarily being driven towards the moduli-space boundary. In this case, we are not generally receiving a non-trivial contribution to cosmic deceleration due to a kinetic term, for this specific field direction. A couple of representations of the bound for positive-definite potentials are in figs. \ref{fig.: acceleration bound 1} and \ref{fig.: acceleration bound 2}.

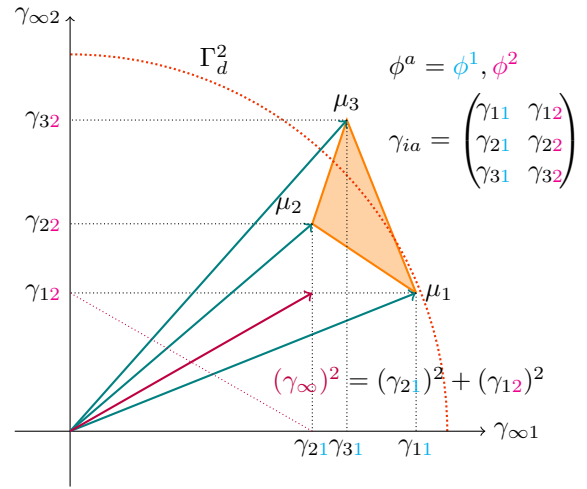
\begin{figure}[ht]
    \centering
    \begin{tikzpicture}[xscale=0.92,yscale=0.92,every node/.style={font=\normalsize}]
    
    \node[align=left] at (6,4.5){$\phi^a = {\color{cyan} \phi^1}, {\color{magenta} \phi^2}$ \\[1.0ex] $\gamma_{ia} = \left(\!\!\!\begin{array}{cc}
    \gamma_{1 \color{cyan} 1} & \gamma_{1 \color{magenta} 2} \\
    \gamma_{2 \color{cyan} 1} & \gamma_{2 \color{magenta} 2} \\
    \gamma_{3 \color{cyan} 1} & \gamma_{3 \color{magenta} 2}
    \end{array}\!\!\!\right)$};
 
    \draw[orange, thick, fill=orange!35!white] (5,2) -- (3.5,3) -- (4,4.5) -- (5,2);    

    \draw[densely dotted, thick, lime!20!red] (5.45,0) arc (0:90:5.45) node[above, black, pos=0.75]{$\Gamma_d^2$};
    
    \draw[->] (-0.8,0) -- (6,0) node[right]{$\gamma_{\infty 1}$};
    \draw[->] (0,-0.8) -- (0,6) node[left]{$\gamma_{\infty 2}$};
    
    \draw[->, thick, teal] (0,0) -- (5,2) node[right,black]{$\mu_1$};
    \draw[densely dotted] (5,2) -- (5,0) node[below]{$\gamma_{1 \color{cyan} 1}$};
    \draw[densely dotted] (5,2) -- (0,2) node[left]{$\gamma_{1 \color{magenta} 2}$};

    \draw[->, thick, teal] (0,0) -- (3.5,3) node[above left,black]{$\mu_2$};
    \draw[densely dotted] (3.5,3) -- (3.5,0) node[below]{$\gamma_{2 \color{cyan} 1}$};
    \draw[densely dotted] (3.5,3) -- (0,3) node[left]{$\gamma_{2 \color{magenta} 2}$};

    \draw[->, thick, teal] (0,0) -- (4,4.5) node[above,black]{$\mu_3$};
    \draw[densely dotted] (4,4.5) -- (4,0) node[below]{$\gamma_{3 \color{cyan} 1}$};
    \draw[densely dotted] (4,4.5) -- (0,4.5) node[left]{$\gamma_{3 \color{magenta} 2}$};

    \draw[densely dotted, purple] (3.5,0) -- (0,2) node[pos=0.2,black, above right]{${\color{purple} (\gamma_\infty)^2} = (\gamma_{2 \color{cyan} 1})^2 + (\gamma_{1 \color{magenta} 2})^2$};

    \draw[->, thick, purple] (0,0) -- (3.5,2);
    
    \end{tikzpicture}
    \caption{A representation of the late-time acceleration bound for a positive-definite scalar potential with non-negative exponential couplings. Here we also show the limiting value $\smash{\Gamma_d}$.}
    \label{fig.: acceleration bound 1}
\end{figure}

\begin{figure}[ht]
    \centering
    \begin{tikzpicture}[xscale=0.92,yscale=0.92,every node/.style={font=\normalsize}]
    
    \node[align=left] at (5.5,4.5){$\phi^a = {\color{cyan} \phi^1}, {\color{magenta} \phi^2}$ \\[1.0ex] $\gamma_{ia} = \left(\!\!\!\begin{array}{cc}
    \gamma_{1 \color{cyan} 1} & \gamma_{1 \color{magenta} 2} \\
    \gamma_{2 \color{cyan} 1} & \gamma_{2 \color{magenta} 2} \\
    \gamma_{3 \color{cyan} 1} & \gamma_{3 \color{magenta} 2}
    \end{array}\!\!\!\right)$};
 
    \draw[orange, thick, fill=orange!35!white] (5,2) -- (3.5,3) -- (-1.2,4) -- (5,2);
    
    \draw[->] (-1.5,0) -- (6,0) node[right]{$\gamma_{\infty 1}$};
    \draw[->] (0,-0.8) -- (0,5) node[left]{$\gamma_{\infty 2}$};
    
    \draw[->, thick, teal] (0,0) -- (5,2) node[right,black]{$\mu_1$};
    \draw[densely dotted] (5,2) -- (5,0) node[below]{$\gamma_{1 \color{cyan} 1}$};
    \draw[densely dotted] (5,2) -- (0,2) node[left]{$\gamma_{1 \color{magenta} 2}$};

    \draw[->, thick, teal] (0,0) -- (3.5,3) node[above left,black]{$\mu_2$};
    \draw[densely dotted] (3.5,3) -- (3.5,0) node[below]{$\gamma_{2 \color{cyan} 1}$};
    \draw[densely dotted] (3.5,3) -- (0,3) node[left]{$\gamma_{2 \color{magenta} 2}$};

    \draw[->, thick, teal] (0,0) -- (-1.2,4) node[above,black]{$\mu_3$};
    \draw[densely dotted] (-1.2,4) -- (-1.2,0) node[below]{$\gamma_{3 \color{cyan} 1}$};
    \draw[densely dotted] (-1.2,4) -- (0,4) node[above left]{$\gamma_{3 \color{magenta} 2}$};

    \draw[->, thick, purple] (0,0) -- (0,2) node[above right,black]{${\color{purple} (\gamma_\infty)^2} = (\gamma_{1 \color{magenta} 2})^2$};
    
    \end{tikzpicture}
    \caption{A representation of the late-time acceleration bound for a positive-definite scalar potential with both positive and negative exponential couplings: the bound is less restrictive than with non-negative-only couplings but still non-trivial.}
    \label{fig.: acceleration bound 2}
\end{figure}
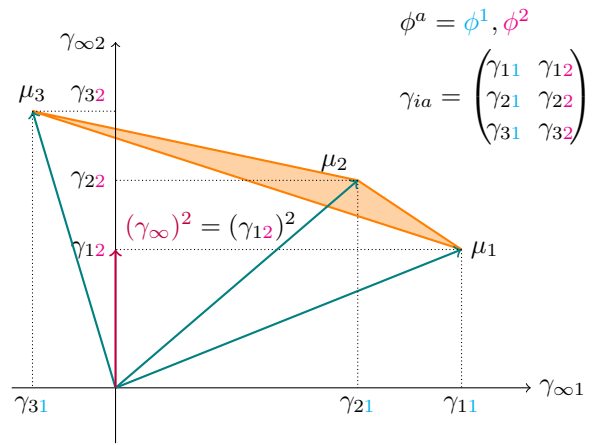

In the presence of both positive- and negative-definite terms in the potential, the interpretation of the bound is trickier, since it depends on the hierarchies and the relative sizes of all the couplings. It should not be forgotten that we are always assuming the total potential to be positive, as this is a prerequisite to even hope for acceleration. A representation of the bound with scalar-potential terms of both signs is in fig. \ref{fig.: acceleration bound 3}.

\begin{figure}[ht]
    \centering
    \begin{tikzpicture}[xscale=0.92,yscale=0.92,every node/.style={font=\normalsize}]
    
    \node[align=left] at (6.6,3.4){$\phi^a = {\color{cyan} \phi^1}, {\color{magenta} \phi^2}$ \\[0.5ex] $\gamma_{i_\pm a} = \left(\!\!\!\begin{array}{cc}
    \gamma_{1_\pm \color{cyan} 1} & \gamma_{1_\pm \color{magenta} 2} \\
    \gamma_{2_\pm \color{cyan} 1} & \gamma_{2_\pm \color{magenta} 2}
    \end{array}\!\!\!\right)$};
 
    \draw[orange, thick] (5.5,-2.5) -- (4.5,-3);
    
    \draw[->] (-0.8,0) -- (6,0) node[right]{$\gamma_{\infty 1}$};
    \draw[->] (0,-3.5) -- (0,4) node[left]{$\gamma_{\infty 2}$};
    
    \draw[->, thick, teal] (0,0) -- (5.5,-2.5) node[right,black]{};
    \draw[densely dotted] (5.5,-2.5) -- (5.5,0) node[below]{$\Gamma_{+ \color{cyan} 1}$};
    \draw[densely dotted] (5.5,-2.5) -- (0,-2.5) node[left]{$\Gamma_{+ \color{magenta} 2}$};

    \draw[->, thick, teal] (0,0) -- (4.5,-3) node[right,black]{};
    \draw[densely dotted] (4.5,-3) -- (4.5,0) node[below]{$\gamma_{+ \color{cyan} 1}$};
    \draw[densely dotted] (4.5,-3) -- (0,-3) node[left]{$\gamma_{+ \color{magenta} 2}$};

    \draw[->, thick, green] (0,0) -- (3.5,-1) node[right,black]{};
    \draw[densely dotted] (3.5,-1) -- (3.5,0) node[below]{$\Gamma_{- \color{cyan} 1}$};
    \draw[densely dotted] (3.5,-1) -- (0,-1) node[left]{$\Gamma_{- \color{magenta} 2}$};

    \draw[->, thick, green] (0,0) -- (2,-2) node[right,black]{};
    \draw[densely dotted] (2,-2) -- (2,0) node[below]{$\gamma_{- \color{cyan} 1}$};
    \draw[densely dotted] (2,-2) -- (0,-2) node[left]{$\gamma_{- \color{magenta} 2}$};

    \draw[->, densely dotted, teal] (0,0) -- (5.5,2.5) node[right,black]{};
    \draw[densely dotted] (5.5,2.5) -- (5.5,0);
    \draw[densely dotted] (5.5,2.5) -- (0,2.5) node[left]{$\gamma'{}_{+ \color{magenta} 2}$};

    \draw[->, densely dotted, teal] (0,0) -- (4.5,3) node[right,black]{};
    \draw[densely dotted] (4.5,3) -- (4.5,0);
    \draw[densely dotted] (4.5,3) -- (0,3) node[left]{$\Gamma'{}_{+ \color{magenta} 2}$};

    \draw[->, densely dotted, green] (0,0) -- (3.5,1) node[right,black]{};
    \draw[densely dotted] (3.5,1) -- (3.5,0);
    \draw[densely dotted] (3.5,1) -- (0,1) node[left]{$\gamma'{}_{- \color{magenta} 2}$};

    \draw[->, densely dotted, green] (0,0) -- (2,2) node[right,black]{};
    \draw[densely dotted] (2,2) -- (2,0);
    \draw[densely dotted] (2,2) -- (0,2) node[left]{$\Gamma'{}_{- \color{magenta} 2}$};

    \draw[densely dotted, purple, ->] (0,0) -- (4.5,2.5);

    \draw[->, thick, purple] (0,0) -- (4.5,-2.5) node[below right]{$(\gamma_\infty)^2$};
    
    \end{tikzpicture}
    \caption{A representation of the late-time acceleration bound for both positive- and negative-definite scalar-potential terms. Here, the solid teal and green lines represent the exponential couplings for the positive- and negative-definite terms in the potential, respectively; the dotted teal and green lines represent the couplings after flipping the sign of the field $\smash{\phi^2}$ (the field $\smash{\phi^1}$ does not require any flip). The definition of the $\smash{(\gamma_\infty)^2}$-term is analogous to the case of a positive-definite scalar potential.}
    \label{fig.: acceleration bound 3}
\end{figure}
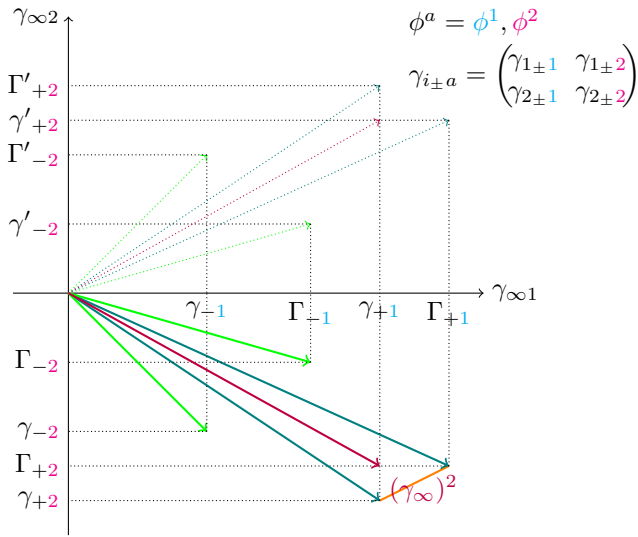

A noteworthy observation to make is that the bound in eq. (\ref{epsilon bound}) generically rules out late-time slow roll. Indeed, we can see immediately that the on-shell late-time $w$-parameter of the cosmological fluid of a multi-field multi-exponential potential is bounded from below as
\begin{equation}
    w = \dfrac{T - V}{T + V} \geq - 1 + \dfrac{1}{2} \, \dfrac{d-2}{d-1} \, (\gamma_\infty)^2.
\end{equation}
This clearly shows that, as long as the $\smash{(\gamma_\infty)^2}$-parameter is not infinitesimally small, or vanishing, the kinetic energy $T = \dot{\phi}_a \dot{\phi}^a/2$ is not parametrically suppressed with respect to the scalar potential energy $V$.

To conclude, we discuss a crucial observation on the bound in eq. (\ref{epsilon bound}). As therein stated, the bound is basis-dependent (as is also apparent e.g. in figs. \ref{fig.: acceleration bound 1} and \ref{fig.: acceleration bound 2}). Of course, it is clear by the inequality that one can optimize the bound via a choice of field basis. In fact, the action is invariant under field-space $\mathrm{O}(n)$-rotations $\smash{\hat{\phi}^a = \tensor{R}{^a_b} \phi^b}$, provided that one also redefines the exponential couplings as $\smash{\hat{\gamma}_{i a} = \gamma_{ib} \tensor{(R^{-1})}{^b_a}}$. Therefore, one can refine the bound by rotating the field-space basis so as to maximize the lowest-possible value that the $\epsilon$-parameter can take. In practice, the optimal version of the bound can be expressed as
\begin{equation} \label{optimal epsilon bound}
    \epsilon \geq \dfrac{d-2}{4} \, \max_{\mathrm{R} \in \mathrm{O}(n)} [\gamma_\infty(\mathrm{R})]^2,
\end{equation}
where $\smash{\mathrm{R} \in \mathrm{O}(n)}$ indicates all possible $\mathrm{O}(n)$-rotations in the $n$-dimensional field-space basis and $\smash{[\gamma_\infty(\mathrm{R})]^2}$ represents the $\smash{(\gamma_\infty)^2}$-coefficient computed in the $\smash{\mathrm{R}^{-1}}$-rotated field-space basis. Although this formulation of the bound is even stronger than the previous one, there can still be situations in which the bound happens to be trivial. From now on, we will express the optimal bound in eq. (\ref{optimal epsilon bound}) by referring to the quantity
\begin{equation}
    (\hat{\gamma}_\infty)^2 = \max_{\mathrm{R} \in \mathrm{O}(n)} [\gamma_\infty(\mathrm{R})]^2,
\end{equation}
which specifies the bound assuming that we have rotated the field-space basis in such a way as to reach the best bound among all the possible ones. Of course, all considerations made so far in terms of the quantity $\smash{(\gamma_\infty)^2}$ also immediately translate to the quantity $\smash{(\hat{\gamma}_\infty)^2}$.

In fact, we can explain the physical meaning of the analytic bound in eq. (\ref{optimal epsilon bound}). Let $\smash{(\hat{\phi}^a)_{a=1}^n = (\hat{\varphi}, (\hat{\phi}_{\check{a}})_{\check{a}=1}^{n-1})}$ be the field-space basis in which the $\smash{(\hat{\gamma}_\infty)^2}$-parameter is optimal and assume it is non-vanishing. Then the scalar potential in eq. (\ref{generic exponential potential}) can be written as
\begin{equation} \label{optimal-bound basis potential}
    V = V_\infty + \sum_{\iota=m_\infty+1}^m \Lambda_\iota \, \e^{- \kappa_d \hat{\gamma}_{\iota \hat{\varphi}} \hat{\varphi} - \kappa_d \hat{\gamma}_{\iota \check{a}} \hat{\phi}^{\check{a}}},
\end{equation}
where we define
\begin{equation} \label{V-infty potential}
    V_\infty = \biggl[ \sum_{\sigma=1}^{m_\infty} \Lambda_\sigma \, \e^{- \kappa_d \hat{\gamma}_{\sigma \check{a}} \hat{\phi}^{\check{a}}} \biggr] \, \e^{- \kappa_d \hat{\gamma}_\infty \hat{\varphi}}.
\end{equation}
Here, $\smash{\hat{\gamma}_\infty = \sqrt{(\hat{\gamma}_\infty)^2} > 0}$ is a universal coupling for a subset of the potential terms for the field $\smash{\hat{\varphi}}$ labelled by an index $\smash{\sigma = 1, \dots, m_\infty \leq m}$; by definition, this universal coupling is such that $\smash{\hat{\gamma}_\infty \leq \hat{\gamma}_{\iota \hat{\varphi}}}$ for the complementary subset of potential terms. All the other couplings are denoted as $\smash{\hat{\gamma}_{i \check{a}}}$. On the one hand, physical intuition suggests that the potential should drive the universally-coupled scalar towards the region $\smash{\hat{\varphi} \sim \infty}$. On the other hand, the rest of the terms should combine in a way that either stabilizes the fields (if couplings of opposite signs appear) or pushes them to also roll towards asymptotic regions (if all couplings are of equal sign). Therefore, the $\epsilon$-parameter is intuitively bounded from below by the minimal slope
\begin{align*}
    \hat{\gamma}_\infty = - \dfrac{1}{V_\infty} \dfrac{\der V_\infty}{\kappa_d \der \hat{\varphi}}.
\end{align*}
In a single-field theory with the single exponential potential $\smash{\hat{V}_\infty = \hat{\Lambda}_\infty \, \e^{- \kappa_d \hat{\gamma}_\infty \hat{\varphi}}}$, the late-time $\epsilon$-parameter would then be exactly the value $\smash{\epsilon = \hat{\epsilon}_\infty = [(d-2)/4] \, (\hat{\gamma}_\infty)^2}$, which corresponds to the lower bound in eq. (\ref{optimal epsilon bound}). If $\smash{\hat{\gamma}_\infty = 0}$, the discussion above is unchanged, but with the possibility that the leading term may give a de Sitter stationary point. A schematic interpretation of the bound of eq. (\ref{optimal epsilon bound}) is provided in figs. \ref{fig.: optimal acceleration bound} and \ref{fig.: trivial optimal acceleration bound}.

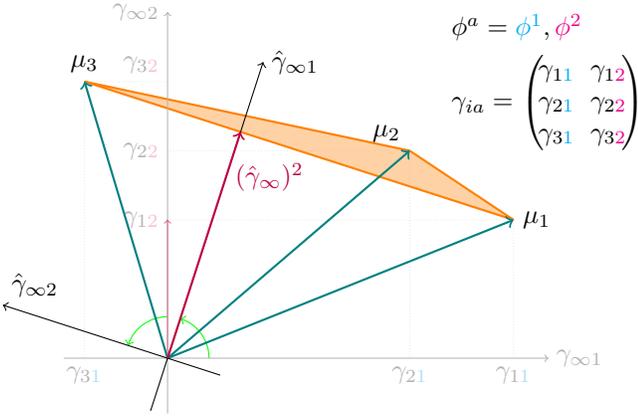
\begin{figure}[ht]
    \centering
    \begin{tikzpicture}[xscale=0.92,yscale=0.92,every node/.style={font=\normalsize}]
    
    \node[align=left] at (5.5,4){$\phi^a = {\color{cyan} \phi^1}, {\color{magenta} \phi^2}$ \\[1.0ex] $\gamma_{ia} = \left(\!\!\!\begin{array}{cc}
    \gamma_{1 \color{cyan} 1} & \gamma_{1 \color{magenta} 2} \\
    \gamma_{2 \color{cyan} 1} & \gamma_{2 \color{magenta} 2} \\
    \gamma_{3 \color{cyan} 1} & \gamma_{3 \color{magenta} 2}
    \end{array}\!\!\!\right)$};

    \draw[->, thin, green] (0.6,0) arc (0:atan(31/10):0.6);
    \draw[->, thin, green] (0,0.6) arc (90:90+atan(31/10):0.6);
 
    \draw[orange, thick, fill=orange!35!white] (5,2) -- (3.5,3) -- (-1.2,4) -- (5,2);
    
    \draw[->, ultra thin, gray!60!white] (-1.5,0) -- (5.5,0) node[right]{$\gamma_{\infty 1}$};
    \draw[->, ultra thin, gray!60!white] (0,-0.8) -- (0,5.0) node[left]{$\gamma_{\infty 2}$};
    
    \draw[->, thick, teal] (0,0) -- (5,2) node[right,black]{$\mu_1$};
    \draw[densely dotted, ultra thin, gray!60!white] (5,2) -- (5,0) node[below]{$\gamma_{1 \color{cyan!30!white} 1}$};
    \draw[densely dotted, ultra thin, gray!60!white] (5,2) -- (0,2) node[left]{$\gamma_{1 \color{magenta!30!white} 2}$};

    \draw[->, thick, teal] (0,0) -- (3.5,3) node[above left,black]{$\mu_2$};
    \draw[densely dotted, ultra thin, gray!60!white] (3.5,3) -- (3.5,0) node[below]{$\gamma_{2 \color{cyan!30!white} 1}$};
    \draw[densely dotted, ultra thin, gray!60!white] (3.5,3) -- (0,3) node[left]{$\gamma_{2 \color{magenta!30!white} 2}$};

    \draw[->, thick, teal] (0,0) -- (-1.2,4) node[above,black]{$\mu_3$};
    \draw[densely dotted, ultra thin, gray!60!white] (-1.2,4) -- (-1.2,0) node[below]{$\gamma_{3 \color{cyan!30!white} 1}$};
    \draw[densely dotted, ultra thin, gray!60!white] (-1.2,4) -- (0,4) node[above left]{$\gamma_{3 \color{magenta!30!white} 2}$};

    \draw[->, thick, ultra thin, purple!60!white] (0,0) -- (0,2) node[above right,black]{};

    \draw[->,rotate=atan(31/10)] (-0.8,0) -- (4.5,0) node[right]{$\hat{\gamma}_{\infty 1}$};
    \draw[->,rotate=atan(31/10)] (0,-0.8) -- (0,2.5) node[above right]{$\hat{\gamma}_{\infty 2}$};

    \draw[->, thick, purple] (0,0) -- (1120/1061,3472/1061) node[right,pos=0.8]{$(\hat{\gamma}_\infty)^2$};
    
    \end{tikzpicture}
    \caption{A representation of the optimal late-time acceleration bound $\smash{\epsilon \geq [(d-2)/4] \, (\hat{\gamma}_\infty)^2}$: lighter lines denote the original field basis, while darker lines denote the basis with the maximal lower bound.}
    \label{fig.: optimal acceleration bound}
\end{figure}

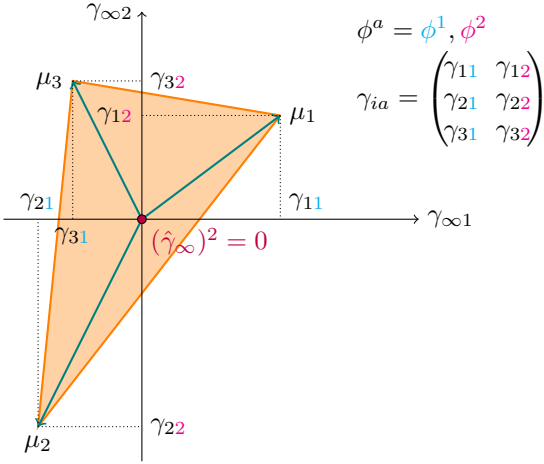
\begin{figure}[ht]
    \centering
    \begin{tikzpicture}[xscale=0.92,yscale=0.92,every node/.style={font=\normalsize}, ppoint/.style={draw,circle,minimum size=1.15mm,inner sep=0pt,outer sep=0pt,fill=purple,solid}]
    
    \node[align=left] at (4.5,2){$\phi^a = {\color{cyan} \phi^1}, {\color{magenta} \phi^2}$ \\[0.5ex] $\gamma_{ia} = \left(\!\!\!\begin{array}{cc}
    \gamma_{1 \color{cyan} 1} & \gamma_{1 \color{magenta} 2} \\
    \gamma_{2 \color{cyan} 1} & \gamma_{2 \color{magenta} 2} \\
    \gamma_{3 \color{cyan} 1} & \gamma_{3 \color{magenta} 2}
    \end{array}\!\!\!\right)$};

    \draw[orange, thick, fill=orange!35!white] (2,1.5) -- (-1,2) -- (-1.5,-3) -- (2,1.5);
    
    \draw[->] (-2,0) -- (4,0) node[right]{$\gamma_{\infty 1}$};
    \draw[->] (0,-3.5) -- (0,3) node[left]{$\gamma_{\infty 2}$};
    
    \draw[->, thick, teal] (0,0) -- (2,1.5) node[right,black]{$\mu_1$};
    \draw[densely dotted] (2,1.5) -- (2,0) node[above right]{$\gamma_{1 \color{cyan} 1}$};
    \draw[densely dotted] (2,1.5) -- (0,1.5) node[left]{$\gamma_{1 \color{magenta} 2}$};

    \draw[->, thick, teal] (0,0) -- (-1.5,-3) node[below,black]{$\mu_2$};
    \draw[densely dotted] (-1.5,-3) -- (-1.5,0) node[above]{$\gamma_{2 \color{cyan} 1}$};
    \draw[densely dotted] (-1.5,-3) -- (0,-3) node[right]{$\gamma_{2 \color{magenta} 2}$};

    \draw[->, thick, teal] (0,0) -- (-1,2) node[left,black]{$\mu_3$};
    \draw[densely dotted] (-1,2) -- (-1,0) node[below]{$\gamma_{3 \color{cyan} 1}$};
    \draw[densely dotted] (-1,2) -- (0,2) node[right]{$\gamma_{3 \color{magenta} 2}$};

    \node[ppoint] at (0,0){};
    \node[below right] at (0,0){$\color{purple} (\hat{\gamma}_\infty)^2 = 0$};
    
    \end{tikzpicture}
    \caption{A representation of a situation in which the acceleration bound is trivial. In a scenario like this, we do not see any obstruction for late-time acceleration.}
    \label{fig.: trivial optimal acceleration bound}
\end{figure}

\subsection{Alternative late-time acceleration bounds} \label{subsec: alternative late-time acceleration bounds}

As usual, we assume that the total scalar potential is positive, but make no assumption on the sign of the individual contributions. Let $\smash{{\overline{\gamma}_{\!(\sigma)}}^a}$ be the solutions to the system of equations
\begin{align*}
    \gamma_{ia} {\overline{\gamma}}^a = (\overline{\gamma})^2,
\end{align*}
for each index $i=1,\dots,m$, where the $\sigma$-subscript is a label for each of the solutions. Then, if we define $\smash{\overline{\gamma}^2 = \max_\sigma (\overline{\gamma}_{\!(\sigma)})^2}$ and $\smash{\overline{\gamma}_\infty^2 = \min \lbrace \Gamma_d^2, \overline{\gamma}^2 \rbrace}$, we can show that the late-time $\epsilon$-parameter is bounded from below as
\begin{equation} \label{alternative epsilon bound}
    \epsilon \geq \dfrac{d-2}{4} \, \overline{\gamma}^2_\infty.
\end{equation}
A mathematical proof of this is in appendix \ref{app: late-time cosmological attractors}: see corollary \ref{corollary: (X)^2-bound} and remarks \ref{remark: Xbar{X}-bound sign independence} and \ref{remark: alternative f-bound}. Such a bound is generally different from the bound in eqs. (\ref{epsilon bound}, \ref{optimal epsilon bound}). For ease of presentation, a clear physical interpretation of this bound is deferred to subsec. \ref{subsec: coupling convex hull and cosmic acceleration}.

\subsection{Comments on canonical normalization}

A comment on canonical normalization is in order. Our analysis is general, independently of any string-theoretic input. If it comes to string-theoretic realizations, however, one may worry that the assumption of canonically-normalized fields is not always justified. Here we argue that this is not the case. 

In string-theoretic realizations, the moduli space is not always flat, typically due to the presence of axions.\footnote{For instance, consider the type-IIB axio-dilaton $\tau = C_0 + \I \, \e^{-\sdil}$ and Kähler modulus $\rho = \alpha + \I \, \e^{4 \omega}$, where $C_0$ and $\alpha$ are the 0- and 4-RR-form axions. In the presence of 3-form flux $G_3 = F_3 - \tau H_3$, in a 4-dimensional Calabi-Yau orientifold compactification, their purely kinetic action can be read off the Kähler potential \cite{Giddings:2001yu, Grimm:2004uq}
\begin{align*}
    \kappa_4^2 K = - \mathrm{ln} \, [-\I (\tau - \overline{\tau})] - 3 \, \mathrm{ln} \, [-\I (\rho - \overline{\rho})] + \mathrm{ln} \, \dfrac{2}{\pi}.
\end{align*}
The cosmological equations of a set of real scalars $\varphi^a$ with field-space metric $k_{\alpha \beta} = k_{\alpha \beta}(\varphi)$ in an FLRW-background read
\begin{align*}
    & \ddot{\varphi}^\alpha + {\Gamma^\alpha}_{\beta \gamma} \, \dot{\varphi}^\beta \dot{\varphi}^\gamma + (d-1) H \dot{\varphi}^\alpha + k^{\alpha \beta} \dfrac{\der V}{\der \varphi^\beta} = 0, \\
    & \dfrac{(d-1) (d-2)}{2} \, H^2 - \kappa_d^2 \biggl[ \dfrac{1}{2} \, k_{\alpha \beta} \, \dot{\varphi}^\alpha \dot{\varphi}^\beta + V \biggr] = 0.
\end{align*}}
For instance, in type-II compactifications with $N_4=1$ supersymmetry, typically such axions $\theta$ belong to chiral supermultiplets as components of complex scalars $\xi = \theta + \I \, \e^{l \varphi}$, where $\varphi$ approaches the boundary as $l \varphi \to \infty$. Here, we can identify $\varphi$ as one of our canonically normalized scalars, provided the constant rescaling $\smash{\varphi = (\sqrt{2} \kappa_4 / \sqrt{n} l) \, \phi}$, and $l$ and $n$ are constants that depend on the details of the fields, with Kähler potentials of the form $\kappa_4^2 K = - n \, \mathrm{ln} \, [-\I (\xi - \overline{\xi})]$. In this case, the kinetic action takes the form
\begin{align*}
    S[\theta, \varphi] = \int_{\mathrm{X}_{1,3}} \de^{1,3} x \sqrt{- g_{1,3}} \; \dfrac{n}{4 \kappa_4^2} \, \bigl[ \e^{-2l \varphi} (\der \theta)^2 + l^2 (\der \varphi)^2 \bigr],
\end{align*}
and the field equations thus read
\begin{align*}
    & \ddot{\theta} - 2l \, \dot{\theta} \dot{\varphi} + 3 H \dot{\theta} + \dfrac{2 \kappa_4^2}{n} \, \e^{2l \varphi} \dfrac{\der V}{\der \theta} = 0, \\
    & \ddot{\varphi} + \dfrac{1}{l} \, \e^{-2l \varphi} \, \dot{\theta}^2 + 3 H \dot{\varphi} + \dfrac{2 \kappa_4^2}{n l^2} \, \dfrac{\der V}{\der \varphi} = 0, \\
    & 3 H^2 - \Bigl[ \dfrac{n}{4} \, \e^{-2l \varphi} \dot{\theta}^2 + \dfrac{n}{4} \, l^2 \dot{\varphi}^2 + V \Bigr] = 0.
\end{align*}
If the initial conditions are such that $\varphi_0 \gg 1$, then
the $\smash{\dot{\theta^2}}$-term in the $\varphi$-field equation and the axion kinetic term in the Friedmann equation are highly suppressed, unless the axionic time derivatives are large. On the other hand, the $\theta$-field equation can have an exponential suppression of the time-derivative terms or not, depending on the potential. If the axion is stabilized, then our analysis applies unchanged, since the axionic dynamics is asymptotically decoupled. If the axion is not stabilized, instead, the axionic kinetic energy could in principle be large and modify non-negligibly the other equations too. Remarkably, this latter scenario may accommodate late-time cosmic acceleration with less restringent bounds \cite{Sonner:2006yn, Cicoli:2020cfj, Cicoli:2020noz, Russo:2022pgo, Brinkmann:2022oxy}.

Although the discussion above has been referred to specific string models, we expect analogous conclusions to hold quite generally. Indeed, the factorized structure of the kinetic terms for pairs of fields involving an axion and a canonically-normalized scalar that has been discussed above is fairly generic in string compactifications. It does not apply just for the dilaton and the radion of a trivial compactification, but also for the radion controlling the overall compactification volume. This structure is also fairly generic for complex-structure moduli, at least asymptotically \cite{Grimm:2019ixq, Calderon-Infante:2022nxb}. However, for compactifications with multiple non-trivial intersection numbers, this may not be true for radions controlling the size of internal cycles. If such fields are rolling in the asymptotics of moduli space, they require a generalization of our results. If such fields are stabilized, our analysis still applies.

\section{Late-time scaling cosmologies} \label{sec: late-time scaling cosmologies}

Although the bounds in eqs. (\ref{epsilon bound}, \ref{optimal epsilon bound}) and eq. (\ref{alternative epsilon bound}) are strong ones, in certain conditions we can do even more and compute the late-time $\epsilon$-parameter analytically. We discuss how to do this below.

\subsection{Scaling cosmologies} \label{subsec: scaling cosmologies}

Scaling cosmologies are defined as solutions to the Friedmann equations in which the scale factor is of power-law form, meaning that it evolves over time as
\begin{equation} \label{power-law scale factor}
    a(t) = a_0 \, \Bigl( \dfrac{t}{t_0} \Bigr)^p,
\end{equation}
where the constant and positive power $p$ is related to the Hubble parameter through the identity $H = p / t$ and to the $\epsilon$-parameter as $\epsilon = 1 / p$, which is necessarily constant and positive. For a multi-field multi-exponential potential, scaling cosmologies are well-known exact solutions to the cosmological equations and, in particular, they correspond to the critical points of the cosmological autonomous system. In this subsection we consider the scaling cosmologies that generically always exist, following the classification of  ref. \cite{Collinucci:2004iw}; more details can be found in appendix \ref{app: late-time cosmological attractors}: see lemmas \ref{lemma: (X,Y)-plane critical points (rank=m)} and \ref{lemma: (X,Y)-plane critical points (x^2=0)}.

In detail, we consider the case in which the rank of the $\gamma_{ia}$-matrix matches the number of terms in the scalar potential, i.e. $\mathrm{rank} \, \gamma_{ia} = m$. This can easily be the case whenever the number of fields is not smaller than the number of scalar-potential terms, i.e. $n \geq m$. If $\mathrm{rank} \, \gamma_{ia} = m$ and also $n=m$, then the scalar potential can be regarded as the non-trivial multi-field extension of a single-field exponential potential; if instead $\mathrm{rank} \, \gamma_{ia} = m$ and $n>m$, the scalar fields outnumber the scalar-potential terms, but then we can rotate the field-space basis and obtain a theory where $n-m$ scalars are flat directions, thus reducing the problem to the previous case. If this rank-condition $\mathrm{rank} \, \gamma_{ia} = m$ is in place, then, given the matrix $\smash{M_{ij} = \gamma_{i a} {\gamma_j}^a}$, rolling-scalar solutions exist of the form
\begin{equation} \label{rank-condition scalar-field trajectories}
    \phi^a_* (t) = \phi^a_0 + \dfrac{2}{\kappa_d} \, \biggl[ \sum_{i=1}^m \sum_{j=1}^m {\gamma_i}^a (M^{-1})^{ij} \biggr] \, \mathrm{ln} \, \dfrac{t}{t_0},
\end{equation}
where the scale-factor power is
\begin{equation} \label{rank-condition scale-factor power}
    p = \dfrac{4}{d-2} \, \sum_{i=1}^m \sum_{j=1}^m (M^{-1})^{ij}.
\end{equation}
It can also be shown that in this case there are no de Sitter stationary points. Physically, this is because the shape of the multi-dimensional exponential potential is not complicated enough to constrain all the fields into a stationary point. It may also be the case that $n \geq m$, but $\mathrm{rank} \, \gamma_{ia} < m$, in which case scaling solutions may exist but are not of the form above. Scaling cosmologies in cases with $\mathrm{rank} \, \gamma_{ia} < m$ are discussed in subsec. \ref{subsec: further scaling solitions}.

Before moving on, we stress an obvious but important point: scaling cosmologies do not respect the slow-roll approximation, by which one drops the second-derivative term and the kinetic energy in eqs. (\ref{FRW-KG eq.}, \ref{Friedmann eq. 1}, \ref{Friedmann eq. 2}), and thanks to which one manages to express the $\epsilon$-parameter through the gradient of the scalar potential. This is obvious from eq. (\ref{rank-condition scalar-field trajectories}) and it will be commented on also in subsubsecs. \ref{subsubsec: scalar-potential directional derivative} and \ref{subsubsec: accidental gradient-flow trajectories}. All this means that, in a scaling cosmology, the slow-roll conditions are not fulfilled. Nonetheless, accelerated expansion is still possible if $p>1$.

\subsection{Scaling cosmologies as late-time attractors} \label{subsec: scaling cosmologies as late-time attractors}

Scaling cosmologies can be perturbatively-stable attractors of theories with multi-field multi-exponential potentials \cite{Copeland:1997et, Malik:1998gy, Coley:1999mj, Guo:2003eu, Guo:2003rs, Bergshoeff:2003vb, Kim:2005ne, Hartong:2006rt}. Moreover, at sufficiently late times, if $\smash{(\hat{\gamma}_\infty)^2 \leq \Gamma_d^2}$, in view of eq. (\ref{optimal epsilon bound}), the scale factor is bounded from below and from above by power-law evolution; if $\smash{(\hat{\gamma}_\infty)^2 > \Gamma_d^2}$, scaling solutions are inevitable, with a power $\smash{p = 1/(d-1)}$. In this paper, we extend these observations by showing the sufficient conditions under which scaling cosmologies are late-time attractors independently of the initial conditions, thus going beyond a perturbative analysis.

Given $n$ canonically-normalized scalars $\phi^a$, let a multi-exponential potential $V$ of the form in eq. (\ref{generic exponential potential}), in a $d$-dimensional FLRW-metric, as in eq. (\ref{FLRW-metric}). Under certain assumptions, we are able to prove analytically the late-time form of any cosmological solution to eqs. (\ref{FRW-KG eq.}, \ref{Friedmann eq. 1}, \ref{Friedmann eq. 2}). We focus on two distinct cases.

First, we consider scenarios that correspond to proper scaling solutions. In particular, let the following conditions be in place.
\begin{enumerate}[label=(\roman*)]
    \item \label{condition (a)} All scalar potential terms are positive-definite, i.e. $\Lambda_i > 0$, for all $i$-indices. Also, let $\smash{\mathrm{rank} \, \gamma_{ia} = m}$: this is a necessary condition for the scaling solutions in eqs. (\ref{power-law scale factor}, \ref{rank-condition scalar-field trajectories}, \ref{rank-condition scale-factor power}) to exist.
    \item \label{condition (b)} Given the matrix $\smash{M_{ij} = \gamma_{ia} {\gamma_{j}}^a}$, for all $i$-indices we have the inequalities
    \begin{align*}
        \lambda^i = \sum_{j=1}^m (M^{-1})^{ij} \geq 0,
    \end{align*}
    subject to the additional condition
    \begin{align*}
        \sum_{i=1}^m \lambda^i > 0.
    \end{align*}
    Physically, this means that the scalar-potential terms are such that there cannot be any apparent notion of subdominant terms (see e.g. the physical interpretation of the late-time convergence result below and the discussions in subsecs. \ref{subsec: partial moduli stabilization} and \ref{subsec: coupling convex hull and cosmic acceleration}).
    \item \label{condition (c)} The time-integral of $\kappa_d^2 V/H$ is divergent, i.e.
    \begin{align*}
        \Delta = \int_{t_0}^\infty \de t \; [(d-1) - \epsilon(t)] \, H(t) = \infty.
    \end{align*}
    Intuitively, this represents the fact that the $\gamma_{ia}$-couplings are not too large, since, if they were, then we would have $\Delta < \infty$ \cite{Shiu:2023nph}.
\end{enumerate}
If conditions \ref{condition (a)}, \ref{condition (b)} and \ref{condition (c)} are satisfied, we can show that the late-time solutions to the scalar-field and Friedmann equations are the scaling solutions in eqs. (\ref{power-law scale factor}, \ref{rank-condition scalar-field trajectories}, \ref{rank-condition scale-factor power}). This is proven in appendix \ref{app: late-time cosmological attractors}: see theorem \ref{theorem: critical-point convergence}. This is the first analytic proof that scaling cosmologies are attractors of the scalar-field and Friedmann equations for a multi-scalar multi-exponential scalar potential and it represents a fundamental result of this paper.

Second, we consider scenarios that correspond to degenerate non-proper scaling cosmologies: from now on, we define them to be the aforementioned scenarios in which the integral $\Delta$ is finite. If we have
\begin{align*}
    \Delta = \int_{t_0}^\infty \de t \; [(d-1) - \epsilon(t)] \, H(t) < \infty,
\end{align*}
then the $\epsilon$-parameter asymptotically approaches the value $\epsilon = d-1$ \cite{Shiu:2023nph}. Notice that this is the case whenever we have $\smash{(\hat{\gamma}_\infty)^2 > \Gamma_d^2}$, in the terminology of subsec. \ref{subsec: bounds on late-time cosmic-acceleration}; in this case, we do not even need to consider conditions \ref{condition (a)}-\ref{condition (b)}, but it is enough to have a positive-definite potential.

Before going on, we provide a simple physical interpretation of the late-time convergence to a scaling cosmology. In view of eqs. (\ref{optimal-bound basis potential}, \ref{V-infty potential}), in an appropriate field-space basis, if there are no manifestly subdominant terms in the potential, a multi-exponential potential is asymptotically equivalent to a 1-field 1-term potential. Indeed, all the field besides the minimally necessarily-rolling one -- namely a field $\smash{\hat{\varphi}}$, with a coupling $\smash{\hat{\gamma}_\infty}$ -- are asymptotically stable. This corresponds to the case where $V = V_\infty$. Then, once the problem is reduced to the study of a single-field exponential potential, it is easy to check that $\smash{\hat{\varphi}_*(t) = \hat{\varphi}_0 + 2/(\kappa_d \hat{\gamma}_\infty) \, \mathrm{ln} \, t/t_0}$ is a simple solution to the field equations. Of course, if the potential is too steep, one reaches the condition of pure kination. In case there are subdominant terms, it is intuitively clear that one should be able to follow the same arguments, after dropping such subdominant terms, but we have not analytically proven late-time convergence in the presence of subdominant terms yet; for developments, see ref. \cite{criticalpoints2}.

Now, an important point to notice is the following. Assuming conditions \ref{condition (a)}-\ref{condition (b)} to be verified by the scalar potential of interest, we are faced with a conceptual impasse: in order to compute $\Delta$, we need to know the field-space trajectories, but to determine the field-space trajectories, we need to know whether $\Delta$ is divergent. Unfortunately, the only situation in which this impasse is known to be bypassed is if we have $\smash{(\hat{\gamma}_\infty)^2 > \Gamma_d^2}$, as explained above. In all other cases, what we can conclude is only the following: we either have the proper scaling solution or the degenerate non-proper solution, namely the $\epsilon$-parameter is either $\smash{\epsilon = [(d-2)/4] / \bigl[\sum_{i=1}^m \lambda^i \bigr]}$, or $\epsilon=d-1$.\footnote{Although conceptually this impasse also appears in the bound proven in ref. \cite{Shiu:2023nph}, namely the bound reviewed here in eq. (\ref{optimal epsilon bound}), that situation is easier. Indeed, if $\smash{(\hat{\gamma}_\infty)^2 \leq \Gamma_d^2}$, then we have $\smash{[(d-2)/4] \, (\hat{\gamma}_\infty)^2 \leq \epsilon \leq d-1}$ if $\Delta = \infty$ and $\epsilon = d-1$ if $\Delta < \infty$: therefore, whichever the value of $\Delta$ is, a conservative claim is that, at least, one has $\smash{\epsilon \geq [(d-2)/4] (\hat{\gamma}_\infty)^2}$.} In what follows, if conditions \ref{condition (a)}-\ref{condition (b)} are met, we will primarily discuss the proper scaling solution, but with the caveat in mind that the degenerate non-proper scaling solution may also be the actual attractor (as a priori we do not know whether $\Delta$ is infinite or not). Since we are concerned with cosmic acceleration, therefore, even if we find an accelerating proper scaling solution, this is just a necessary check that accelerated expansion is possible, but without any sufficiency condition being known to be in place as well.

In case some of the terms $\lambda^i$ are negative, we are currently unable to prove any convergence results. However, physical intuition suggest us that it happens that $\lambda^i < 0$ in cases in which the scalar-potential term $V_i$ is asymptotically subdominant. Our intuition is based on the physical interpretation of the late-time convergence result above and on the considerations in subsecs. \ref{subsec: partial moduli stabilization} and \ref{subsec: coupling convex hull and cosmic acceleration} and on earlier perturbative and numerical analyses in refs. \cite{Guo:2003eu, Guo:2003rs, Collinucci:2004iw, Hartong:2006rt}. Therefore, we expect that one may able to prove convergence to the scaling solution that one obtains after effectively truncating the potential to the sum $\smash{V = \sum_{\sigma=1}^{m_\infty} V_\sigma}$, with $\sigma = 1, \dots, m_\infty \leq m$, where the subset of $\sigma$-indices denotes a set of potentials such that $\smash{\lambda^\sigma \geq 0}$ \cite{criticalpoints2}.

At present, we do not have a fundamental physical interpretation of the integral $\Delta$. However, we can observe a few noteworthy features. To start, given the parameter $\smash{\eta = - \dot{\epsilon} / (\epsilon H)}$, one can qualitatively see that $\Delta < \infty$ requires the asymptotic behaviors $\smash{\epsilon(t \sim \infty) \lesssim (d-1) - \ell / t}$ and $\smash{- \eta (t \sim \infty) \gtrsim \ell/t}$, for some positive constant $\ell$.\footnote{For a finite integral $\Delta < \infty$, we know the late-time behaviors $\smash{H (t \sim \infty) \simeq 1/[(d-1) \, t]}$ and $\smash{\epsilon (t \sim \infty) \simeq d-1}$. For the inverse time $s = 1/t$, in view of the asymptotic expansions
\begin{align*}
    \epsilon(s) & = [\epsilon]_{s=0} + \Bigl[\dfrac{\de \epsilon}{\de s}\Bigr]_{s=0} \cdot s + O(s)^2 \\
    & = (d-1) + \Bigl[ \dfrac{1}{s} \, \eta \Bigr]_{s=0} \cdot s + O(s)^2, \\
    H(s) & = \dfrac{s}{d-1} + O(s)^2,
\end{align*}
because the integral is finite, i.e.
\begin{align*}
    \Delta = \int_{0}^{\frac{1}{t_0}} \dfrac{\de s}{s^2} \; [(d-1) - \epsilon(s)] \, H(s) < \infty,
\end{align*}
the integration near $s=0$ must provide a finite contribution, thus bounding $\eta$ through the expansion
\begin{align*}
    \dfrac{1}{s^2} \; [(d-1) - \epsilon(s)] \, H(s) = - \dfrac{1}{d-1} \, \Bigl[ \dfrac{1}{s} \, \eta \Bigr]_{s=0} + O(s).
\end{align*}} It is harder to prove 
the corresponding
implications for $\Delta=\infty$ since in this case we only generally know bounds and not exact identities. By eqs. (\ref{Friedmann eq. 1}, \ref{Friedmann eq.}), as already anticipated, we can observe that we can express $\Delta$ as
\begin{align*}
    \Delta = \dfrac{2 \kappa_d^2}{d-2} \int_{t_0}^\infty \de t \; \dfrac{V[\phi(t)]}{H(t)}.
\end{align*}
In fact, we can see that the integral is finite if the $\epsilon$-parameter has a quick-enough time variation. Finally, we can check what the integral $\Delta$ represents in special cases. In a de Sitter spacetime with a cosmological constant $\Lambda$, we can write it as
\begin{align*}
    \Delta = \sqrt{2} \, \dfrac{\sqrt{d-2}}{\sqrt{d-1}} \, \dfrac{1}{t_\Lambda} \, \lim_{t \to \infty} (t - t_0),
\end{align*}
where $\smash{t_\Lambda = 1 / \kappa_d \sqrt{\Lambda}}$ is the timescale associated to the vacuum-energy scale, so effectively $\Delta$ scales like the duration of the de Sitter spacetime. In a proper scaling cosmology, we find instead
\begin{align*}
    \Delta = [(d-1) \, p - 1] \, \lim_{t \to \infty} \mathrm{ln} \, \dfrac{t}{t_0},
\end{align*}
which is of course divergent (as an infinite $\Delta$ is a prerequisite for a late-time non-degenerate scaling-cosmology attractor), but in a less quick way than a de Sitter space. In fact, the simple behavior of scaling solutions also allow us to check the identity
\begin{align*}
    \kappa_d \, \int_{t_0}^\infty \de t \; \sqrt{\dot{\phi}_{* a} (t) \, \dot{\phi}_*^a (t)} = \dfrac{\sqrt{p} \sqrt{d-2}}{(d-1) \, p - 1} \, \Delta = \infty,
\end{align*}
which means that the distance travelled in field space in a scaling cosmology is divergent. This is not surprising since scaling-cosmology trajectories in field space are infinite straight lines (see subsec. \ref{subsubsec: time-measuring and quintessence-like scalars}) and the field-space metric is a Kronecker-delta.

\subsection{Analytic properties of scaling cosmologies} \label{subsec: analytic properties of scaling cosmologies}

Since for scaling solutions we know the field-space trajectories exactly, we can identify a number of analytic properties. Below we report such general features, reviewing and commenting on results from ref. \cite{Shiu:2023nph}.

\subsubsection{Scalar-potential directional derivative} \label{subsubsec: scalar-potential directional derivative}
In a scaling cosmology, scalars evolve with a logarithmic dependence on time as (see eq. (\ref{rank-condition scalar-field trajectories}), but also eq. (\ref{non-rank-condition scalar-field trajectories}))
\begin{equation} \label{scaling-cosmology scalar-field trajectory}
    \phi^a_* (t) = \phi^a_\infty + \dfrac{1}{\kappa_d} \, \alpha^a \; \mathrm{ln} \, \dfrac{t}{t_\infty},
\end{equation}
where the $\alpha^a$-slopes are easily seen by eq. (\ref{Friedmann eq.}) to be subject to the constraint $\smash{\alpha^a \alpha_a = (d-2) \, p}$. In other words, the kinetic energy evolves as
\begin{equation} \label{scaling-cosmology kinetic energy}
    \dfrac{1}{2} \, \dot{\phi}_{*a} \dot{\phi}_*^a = \dfrac{1}{2 \kappa_d^2 t^2} \, (d-2) p.
\end{equation}
Let $\smash{\theta_*^a = \alpha^a / \sqrt{\alpha_b \alpha^b}}$ be the unit vector following the field-space trajectory. In view of eq. (\ref{Friedmann eq. 1}), the scalar potential takes the form
\begin{equation} \label{scaling-cosmology potential}
    V(\phi_*) = \dfrac{1}{2 \kappa_d^2 t^2} \, \bigl[ (d-1) \, p - 1 \bigr] \, (d-2) \, p.
\end{equation}
As a comparison, in a de Sitter spacetime, the scalar potential -- i.e. the cosmological constant -- is related to the Hubble scale as $\smash{\kappa_d^2 \Lambda = (d-1) (d-2) H^2 /2}$. Moreover, in view of eq. (\ref{FRW-KG eq.}), the scalar-potential derivatives can be expressed as
\begin{equation} \label{scaling-cosmology potential gradient}
    - \dfrac{\der V}{\der \phi_{* a}} = \dfrac{1}{\kappa_d t^2} \, \bigl[ (d-1) \, p - 1 \bigr] \, \alpha^a.
\end{equation}
As an extra relationship, we also highlight that the scalar-field pressure $\smash{p = \dot{\phi}^a \dot{\phi}_a / 2 - V}$ and energy density $\smash{\rho = \dot{\phi}^a \dot{\phi}_a / 2 + V}$ give the on-shell equation of state
\begin{align*}
    w = \dfrac{p_*}{\rho_*} = - 1 + \dfrac{2}{(d-1) \, p}.
\end{align*}
Furthermore, we notice that we can write the scale factor as $\smash{a(t) = a_\infty \, (t/t_\infty)^{2 / [(d-1) (w+1)]}}$.

The expressions above are sufficient to express the $\epsilon$-parameter in terms of the scalar-potential directional derivative along the scalar-field trajectory. In particular, in view of eqs. (\ref{scaling-cosmology scalar-field trajectory}, \ref{scaling-cosmology potential}, \ref{scaling-cosmology potential gradient}), we have
\begin{equation} \label{scaling-cosmology directional gamma}
    \gamma_* = - \dfrac{1}{V(\phi_*)} \, \theta_*^a \dfrac{\der V}{\kappa_d \, \der \phi^a_*} (\phi_*) = \dfrac{2 \sqrt{\epsilon}}{\sqrt{d-2}}.
\end{equation}
This is an exact identity. In particular, the $\epsilon$-parameter is expressed via a directional derivative of the scalar potential: the kinetic energy does not enter directly in the expression, but the information of the non-zero field speed is stored within the field-space trajectory; the scalar potential instead appears only through the direction that is effectively experienced by the scalars. In fact, one may rotate the scalars and identify a single scalar $\smash{\tvarphi}$ that evolves over time: to be able to do so, however, one needs the details of all the scalar potentials, as reviewed in eq. (\ref{quintessence-field time evolution}).

In view of eqs. (\ref{scaling-cosmology potential}), we can also express the norm of the scalar-potential gradient as
\begin{equation} \label{scaling-cosmology potential gradient norm}
    \dfrac{1}{\kappa_d V (\phi_*)} \sqrt{\dfrac{\der V}{\der \phi_{* a}} \dfrac{\der V}{\der \phi_*^a} (\phi_*)} = \dfrac{2 \sqrt{\epsilon}}{\sqrt{d-2}} = \gamma_*.
\end{equation}
As will be discussed thoroughly in subsec. \ref{subsec: scalar-potential derivatives and acceleration}, the fact that this numerically coincides with the directional derivative of the scalar potential $\gamma_*$ defined in eq. (\ref{scaling-cosmology directional gamma}) is an accidental feature of scaling cosmologies. One may wonder why $\epsilon$ is proportional to the squared gradient norm for scaling cosmologies even when the kinetic term is non-negligible compared with the potential. This is a peculiarity of scaling solutions. Even though $\dot{H}$ receives a contribution from a sizable kinetic term, so does $H^2$ and the ratio $\epsilon = - \dot{H}/H^2$ still comes out proportional to the squared gradient norm.\footnote{All results here in subsubsec. \ref{subsubsec: scalar-potential directional derivative} apply to both solutions in eqs. (\ref{rank-condition scalar-field trajectories}, \ref{rank-condition scale-factor power}) and eqs. (\ref{non-rank-condition scalar-field trajectories}, \ref{non-rank-condition scale-factor power}). However, we do not have a proof that the latter are inevitable late-time attractors.}

\subsubsection{Time-measuring and quintessence-like scalars} \label{subsubsec: time-measuring and quintessence-like scalars}
One can always identify a single scalar that provides a measure of time. This is because all scalar-potential terms participate in the cosmological evolution with the same parametric weight to the total energy density. Indeed, as one can see explicitly, each term in the scalar potential of a scaling cosmology evolves as
\begin{align*}
    V_i \bigl[\phi_*^a(t)\bigr] = \Lambda_i \, \e^{- \kappa_d \gamma_{i a} \phi_\infty^a} \, \Bigl(\dfrac{t_\infty}{t}\Bigr)^2.
\end{align*}
In other words, whatever combination of scalar fields appears in each $V_i$-term, this provides a $\smash{-2 \, \mathrm{ln} \, (t/t_\infty)}$-behavior. So, for an arbitrary term $\smash{V_{i_0}}$, we can define a canonically-normalized scalar field via the redefinition
\begin{equation} \label{clock-field definition}
    \tilde{\gamma} \ttau = \gamma_{i_0 a} \phi^a,
\end{equation}
where the parameter $\smash{\tilde{\gamma}}$ and the field $\smash{\ttau}$ are defined by the $\mathrm{O}(n)$-rotation for the specific $\smash{\gamma_{i_0 a}}$-coefficients, and where the field evolves over time as
\begin{equation} \label{clock-field time evolution}
    \ttau_*(t) = \ttau_\infty + \dfrac{1}{\kappa_d} \dfrac{2}{\tilde{\gamma}} \; \mathrm{ln} \, \dfrac{t}{t_\infty}.
\end{equation}
In the remaining $m-1$ scalar-potential terms, one has linear combinations of the field $\smash{\ttau}$ and further $n-1$ canonically-normalized time-dependent scalar fields.

After the field-space trajectory $\smash{\phi_*^a}$ has been identified, we can also define a scalar $\smash{\tvarphi}$ that is aligned with the field trajectory in the moduli space. This can can be done via an $\mathrm{O}(n)$-rotation where $\smash{\tvarphi}$ is parallel to the vector $\smash{\theta_*^a}$ and the remaining fields $\smash{\check{\phi}^{\check{a}}}$ are orthogonal to it, with the $\smash{\check{a}}$-index not including $\smash{\tvarphi}$. All the scalar-potential terms can then be written as
\begin{align*}
    V_i = \Lambda_i \, \e^{- \kappa_d \gamma_* \tvarphi - \kappa_d \check{\gamma}_{i \check{a}} \check{\phi}^{\check{a}}},
\end{align*}
where $\smash{\gamma_*}$ is the directional derivative of eq. (\ref{scaling-cosmology directional gamma}) and the coefficients $\smash{\check{\gamma}_{i \check{a}}}$ are instead defined by the inverse rotation. By construction, the field $\smash{\tvarphi}$ evolves as
\begin{equation} \label{quintessence-field time evolution}
    \tvarphi_*(t) = \tvarphi_\infty + \dfrac{1}{\kappa_d} \dfrac{2}{\gamma_*} \; \mathrm{ln} \, \dfrac{t}{t_\infty},
\end{equation}
while the other fields are constants $\smash{\check{\phi}_*^{\check{a}} = \check{\phi}_\infty^{\check{a}}}$. All the fields $\smash{\check{\phi}_*^{\check{a}}}$ can be absorbed into redefinitions of the constants $\Lambda_i$, so as to have a total on-shell potential
\begin{align}
    V = \Lambda \, \e^{- \kappa_d \gamma_* \tvarphi_*}.
\end{align}
Neither the constant $\smash{\Lambda}$ nor the coefficient $\smash{\gamma_*}$ can be read off simply from the dimensional reduction of a single dominating term. Of course, we can also use $\smash{\tvarphi}$ to measure time instead of $\smash{\ttau}$. A representation of the field-space trajectory for scaling solutions is in fig. \ref{fig.: field-space trajectory}.

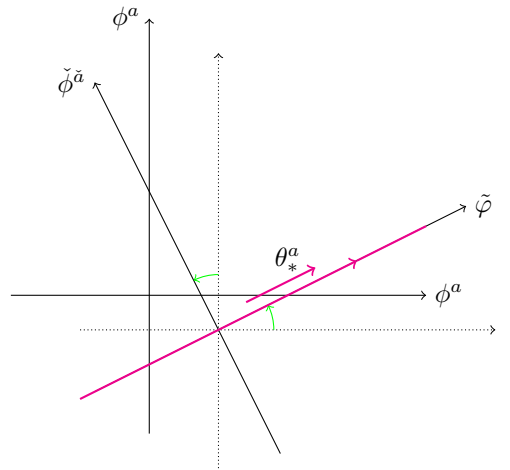
\begin{figure}[ht]
    \centering
    
    \begin{tikzpicture}[xscale=0.92,yscale=0.92,every node/.style={font=\normalsize}]

    \draw[->] (-2-1,0+0.5) -- (4-1,0+0.5) node[right]{$\phi^a$};
    \draw[->] (0-1,-2+0.5) -- (0-1,4+0.5) node[left]{$\phi^a$};

    \draw[->, densely dotted] (-2,0) -- (4,0);
    \draw[->, densely dotted] (0,-2) -- (0,4);

    \draw[->, rotate=atan(1/2)] (-2,0) -- (4,0) node[right]{$\tilde{\varphi}$};
    \draw[->, rotate=atan(1/2)] (0,-2) -- (0,4) node[left]{$\check{\phi}^{\check{a}}$};

    \draw[->, thin, green] (0.8,0) arc (0:atan(1/2):0.8);
    \draw[->, thin, green] (0,0.8) arc (90:90+atan(1/2):0.8);

    \draw[thick, magenta, ->] (-2,-1) -- (2,1);
    \draw[thick, magenta] (2,1) -- (3,1.5);

    \draw[thick, magenta, ->] (0.4,0.2+0.2) -- (1.4,0.7+0.2) node[above, black, pos=0.6] {$\theta_*^a$};
    
    \end{tikzpicture}
    \caption{A sketch of the field-space trajectory of scaling cosmologies; a rotation in field space always allows one to work with a single quintessence-like scalar.}
    \label{fig.: field-space trajectory}
\end{figure}

For a scaling solution $\smash{\phi_*^a(t)}$ of the form in eq. (\ref{scaling-cosmology scalar-field trajectory}), the initial conditions $\smash{\phi_\infty^a}$ are tied to the choice of the initial time $t_\infty$. In particular, plugging the solutions of eq. (\ref{scaling-cosmology scalar-field trajectory}) into eq. (\ref{FRW-KG eq.}), we get the compatibility condition
\begin{equation} \label{compatibility condition}
    \Biggl[\sum_{i=1}^m \Lambda_i \gamma_{ia} \, \e^{- \kappa_d \gamma_{i a} \phi_\infty^a} \Biggr] \, t_\infty^2 = \dfrac{1}{\kappa_d^2} \biggl[\dfrac{d-1}{\epsilon} - 1 \biggr] \, \alpha_a.
\end{equation}
For a fixed initial time $t_\infty$, the initial conditions $\smash{\phi_\infty^a}$ satisfy the above equation. Seen in the other way around, for a given set of initial conditions, one is choosing the initial time accordingly. For $\smash{\mathrm{rank} \, \gamma_{ia} = m}$, if $m=n$, the compatibility condition has a unique solution.

Notice that the system has an additional shift symmetry (nevertheless, the compatibility condition still has to be satisfied). For arbitrary real values $\kappa_d \xi^a \in \mathbb{R}$, the problem is invariant under the transformations \cite{Hartong:2006rt}
\begin{align*}
    \phi^a & \to \breve{\phi}^a = \phi^a + \xi^a, \\
    \Lambda_i & \to \breve{\Lambda}_i = \Lambda_i \, \e^{- \kappa_d \gamma_{i a} \xi^a}.
\end{align*}
In case $\smash{\mathrm{rank} \, \gamma_{ia} = m}$ and $m=n$, we can solve the system $\gamma_{ia} \xi^a = \omega_i$ for arbitrary values of $\omega_i$, which means that we can always rescale the $\Lambda_i$-terms to arbitrary values. Given a model, we always work in a fixed shift-symmetry gauge, i.e. for given values of $\smash{\phi_\infty^a}$ and $\Lambda_i$.

\subsubsection{Accidental gradient-flow trajectories} \label{subsubsec: accidental gradient-flow trajectories}

In this subsubsection, we observe an accidental feature of scaling cosmologies: the field-space trajectory of a scaling cosmology happens to be parallel to a gradient-flow trajectory of the scalar potential.

In essence, because the scaling solution is a straight line in field space, we can rotate the field basis in such a way as to have only one scalar that evolves over time, as described in subsubsec. \ref{subsubsec: time-measuring and quintessence-like scalars}. Because of this, along the late-time trajectory solving the field equations, a problem with a scaling cosmology as a late-time attractor is always equivalent to a 1-field 1-potential problem, for some universal scalar $\smash{\tilde{\varphi}}$. In fact, for solutions of the form in eq. (\ref{scaling-cosmology scalar-field trajectory}), we can write the second-order scalar-field derivative and the Hubble-friction terms in eq. (\ref{FRW-KG eq.}) as
\begin{align*}
    \ddot{\phi}^a_*(t) & = - \dfrac{1}{\kappa_d t^2} \, \alpha^a, \\
    (d-1) H \dot{\phi}^a_*(t) & = \dfrac{1}{\kappa_d t^2} \, \dfrac{d-1}{\epsilon} \, \alpha^a.
\end{align*}
Therefore, assuming that we have solved the field equations and that we know that they are a scaling cosmology, we may write the identities
\begin{equation} \label{scaling-solution gradient flow}
    \biggl[ 1 - \dfrac{\epsilon}{d-1} \biggr] \, (d-1) H \dot{\phi}^a_* = - \dfrac{\der V}{\der \phi_a} (\phi_*).
\end{equation}
As the overall factor in front of each $\smash{\dot{\phi}^a_*}$-term is common to all equations of the form in eq. (\ref{scaling-solution gradient flow}), the field-space trajectory is a gradient-flow trajectory of the potential. Therefore, the field-space trajectory happens to follow one of the steepest-descent directions of the potential.

A priori, the trajectory dicated by eq. (\ref{scaling-solution gradient flow}) is just one of the infinitely-many steepest-descent trajectories, as such trajectories depend critically on the initial conditions, while the scaling-cosmology trajectory is a line forever, which corresponds to a specific initial condition. It is only through the compatibility condition in eq. (\ref{compatibility condition}) that we see that the one gradient-flow trajectory that one would select is exactly the scaling solution at all times. This can be seen because the initial conditions must be such that the compatibility condition in eq. (\ref{compatibility condition}) is fulfilled. Also, as a general comment, gradient-flow trajectories can be unstable and one may incur in following the wrong one, leading to a wrong corner of the moduli space. Therefore, we consider the fact that at late times scaling cosmologies are parallel to a gradient-flow trajectory as a noteworthy feature, but we cannot consider calculating the gradient-flow trajectory as a trustworthy method to find the scaling-cosmology attractor. All these points can be seen explicitly in an example in sec. \ref{subsec: comments on gradient flow}.

One more obvious but crucial fact is the following. Without knowledge of the actual fully-fledged solution to the cosmological equations, including the scale-factor and scalar-field time dependences, knowledge of the field-space trajectory is empty: one does not even know for sure the $\epsilon$-parameter. Indeed, several solutions may correspond to the same field-space trajectory, but with a different time dependence. In particular, as is well known, the steepest-descent trajectory also characterizes the solutions to the slow-roll equations. Nonetheless, it is apparent that, even though eq. (\ref{scaling-solution gradient flow}) and eq. (\ref{slow-roll FRW-KG eq.}) happen to give the same late-time field-space trajectory, this is with a completely different physical motivation. Of course, the physical interpretation is not the only thing that changes: scaling cosmologies, which are solutions to eqs. (\ref{FRW-KG eq.}) -- i.e. eq. (\ref{scaling-solution gradient flow}) -- feature scalars evolving with a $\smash{\mathrm{ln} \, (t/t_0)}$-behavior, whereas solutions to eqs. (\ref{slow-roll FRW-KG eq.}), if they exist at all, have a different time dependence (or, if they have the same time dependence, then the slow-roll approximation is inconsistent). Proper scaling cosmologies are however the only correct universal late-time attractor, by subsec. \ref{subsec: scaling cosmologies as late-time attractors}.

Also, we stress that the results stemming from eq. (\ref{scaling-solution gradient flow}) have nothing to do with accelerated expansion since they are true independently of the value of $\epsilon$, whether it be $\epsilon < 1$ or $\epsilon \geq 1$. All is a consequence of the fact that scaling solutions are straight lines in field space, since otherwise the proportionality factor between the second- and first-derivative terms would be different for each field.

Finally, we stress that in all cases in which we cannot determine whether a scaling cosmology is the universal late-time attractor, we have shown that the slow-roll approximation is hard to justify in all models with non-zero $\smash{(\hat{\gamma}_\infty)^2}$-coefficient, thus leaving no analytical evidence for any statement on steepest-descent trajectories as late-time approximate cosmological solutions for multi-field multi-exponential potentials.

\subsubsection{Curvature-induced potentials and scale separation} \label{subsubsec: curvature-induced potentials and scale separation}

As we have seen, in a scaling cosmology we can measure time through a scalar field $\smash{\ttau}$ that we can define based on an arbitrary scalar-potential term. Because we can express time as $\smash{t = t_\infty \, \e^{ \kappa_d \tilde{\gamma} \, (\ttau_* - \ttau_\infty) / 2}}$, we can also express the Hubble scale as
\begin{equation} \label{Hubble-scale evolution}
    l_H = \dfrac{1}{H} = l_\infty \, \e^{\frac{1}{2} \, \kappa_d \tilde{\gamma} \ttau_*},
\end{equation}
where $\smash{l_\infty = (t_\infty / q) \, \e^{- \frac{1}{2} \, \kappa_d \tilde{\gamma} \ttau_\infty}}$ is the initial value. On the other hand, in an isotropic compactification, the Kaluza-Klein scale evolves as
\begin{equation} \label{KK-scale evolution}
    l_{\mathrm{KK}, d} = \Bigl( \dfrac{4 \pi}{g_s^2} \Bigr)^{\frac{1}{d-2}} \, l_{\p, d} \, \e^{- \frac{\kappa_d \tdelta}{\sqrt{d-2}} + \frac{\kappa_d \tsigma}{\sqrt{10-d}}}.
\end{equation}
In particular, the moduli dependence of the Kaluza-Klein scale is proportional to the moduli dependence of the potential induced by a non-trivial curvature $\smash{\breve{R}}$, as computed in terms of a fiducial metric with a string-size fiducial volume. In fact, we can write
\begin{align*}
    V_{R} = - \dfrac{l_s^2 \breve{R}}{2 \kappa_d^2} \, \dfrac{1}{l_{\mathrm{KK}, d}^2}.
\end{align*}
Because this potential, if present, can be used to measure the Hubble scale, we can define a scalar $\ttau$ through $\smash{2 \tdelta / \sqrt{d-2} - 2 \tsigma / \sqrt{10-d} = - \tilde{\gamma} \ttau}$ and find the ratio
\begin{equation}
    \dfrac{l_H^2}{l_{\mathrm{KK}, d}^2} = \dfrac{1}{- l_s^2 \breve{R}} \dfrac{1}{\xi} \biggl[ (d-1) - \dfrac{1}{p} \biggr] \, (d-2),
\end{equation}
where $\xi = V / V_R$ is the order-1 ratio between the on-shell total potential $V$ and the curvature-induced potential $V_R$. Since there is no parametric difference in the Hubble and Kaluza-Klein scales, the theory is by no means genuinely $d$-dimensional. Before going on, we notice that, technically, we can have scale separation with curvature-induced potentials, at the cost of acceleration: we have no parametric scale separation unless in the degenerate non-proper solution: here $1/p=d-1$, so there is scale separation. However, this is an extreme scenario since it is the case where the scaling solution would asymptotically require $V = 0$. If the internal curvature is trivial, the details of the dilaton and radion time evolution are necessary to assess the ratio.

\subsection{Scalar-potential derivatives and acceleration} \label{subsec: scalar-potential derivatives and acceleration}
Here we characterize generally the relationship between scalar-potential derivatives and the $\epsilon$-parameter. This allows us to comment on the exceptionality of scaling cosmologies, whose scalar-potential derivatives are related to the $\epsilon$-parameter through eqs. (\ref{scaling-cosmology directional gamma}, \ref{scaling-cosmology potential gradient norm}).

\subsubsection{Scalar-potential directional derivative}
Let the field-space trajectory $\smash{\phi_\star^a = \phi_\star^a(t)}$ be a solution to eqs. (\ref{FRW-KG eq.}, \ref{Friedmann eq. 1}, \ref{Friedmann eq. 2}); note that the subscript ``$_\star$'' denotes a solution to the field equations, whereas the subscript ``$_*$'' denotes specifically a scaling solution. As the field velocity $\smash{\dot{\phi}_\star^a(t)}$ is tangent to the trajectory, we define the general scalar-potential directional derivative as
\begin{equation} \label{directional derivative}
    \gamma_\star(t) = - \dfrac{1}{V (\phi)} \, \theta^a \dfrac{\der V}{\kappa_d \, \der \phi^a} [\phi_\star(t)],
\end{equation}
where $\smash{\theta^a[\phi] = \dot{\phi}^a / \sqrt{\dot{\phi}_b \dot{\phi}^b}}$ is the normalized field velocity. In view of eqs. (\ref{FRW-KG eq.}, \ref{Friedmann eq. 1}), we are able to express the directional derivative as
\begin{align*}
    \gamma_\star(t) = \dfrac{2 \sqrt{\epsilon(t)}}{\sqrt{d-2}} + \dfrac{\beta}{\kappa_d V} [\phi_\star(t)],
\end{align*}
where $\beta = \beta[\phi]$ is defined as the functional
\begin{equation*}
    \beta[\phi] = \dfrac{\de}{\de t} \sqrt{\dot{\phi}^a \dot{\phi}_{a}} + \dfrac{\sqrt{\epsilon[\phi]}}{\sqrt{d-2}} \, \dot{\phi}^a \dot{\phi}_{a}.
\end{equation*}
Therefore, in general, the directional derivative is proportional to the acceleration-parameter square root up to an additional term that depends on the total kinetic energy $\smash{T[\phi] = \dot{\phi}^a \dot{\phi}_{a}/2}$. By eq. (\ref{Friedmann eq.}), we can also write
\begin{align*}
    \beta = - \dfrac{\sqrt{d-2}}{2 \kappa_d} H^2 \eta \sqrt{\epsilon}.
\end{align*}
A suggestive way to express the directional derivative is therefore \cite{Achucarro:2018vey}
\begin{equation} \label{directional derivative and acceleration parameter}
    \gamma_\star = \dfrac{2 \sqrt{\epsilon}}{\sqrt{d-2}} \, \biggl[ 1 - \dfrac{1}{2} \, \dfrac{\eta}{(d-1) - \epsilon} \biggr].
\end{equation}
It turns out that the $\beta$-dependent term vanishes exactly for scaling solutions, for which $\eta=0$. However, unless one knows that $\eta=0$, the meaning of $\gamma_\star$ is related to the $\epsilon$-parameter in a generally complicated way.

\subsubsection{Scalar-potential gradient norm}
In the literature, a function that is often discussed is the norm of the normalized scalar-potential gradient, or potential gradient norm for short, which is defined as
\begin{equation} \label{scalar-potential gradient norm}
    \gamma = \dfrac{1}{\kappa_d V} \sqrt{\dfrac{\der V}{\der \phi_a} \dfrac{\der V}{\der \phi^a}}.
\end{equation}
Indeed, this is related to the $\epsilon$-parameter in the slow-roll approximation, where it happens to be a tiny positive number, and it appears in the formulation of the de Sitter conjecture in the Swampland Program \cite{Obied:2018sgi, Ooguri:2018wrx}, where it is conjectured to be at least an order-1 number. For a field trajectory that solves eqs. (\ref{FRW-KG eq.}, \ref{Friedmann eq. 1}, \ref{Friedmann eq. 2}), we can express $\gamma$ as
\begin{equation} \label{scalar-potential gradient norm and acceleration parameter}
    \gamma^2 = \dfrac{4 \epsilon}{d-2} + \dfrac{(d-2) \, \psi}{\kappa_d^4 V^2},
\end{equation}
where $\psi$ is the function(al)
\begin{align*}
    \psi = \dfrac{\dot{H}^3}{H^2} \!-\! \dfrac{1}{2} \dddot{H} + (d-1) \biggl[ \dot{H}^2 \!-\! \dfrac{3}{2} H \ddot{H} \biggr] + \dfrac{\dot{\phi}^a \dot{\phi}^b}{(d-2)} \dfrac{\der^2 V}{\der \phi^a \der \phi^b} & .
\end{align*}
It is easy to see that the scalar-potential directional derivative in eq. (\ref{directional derivative}) is always a lower bound for the scalar-potential gradient norm in eq. (\ref{scalar-potential gradient norm}). Indeed, for any field trajectory, we have the inequality \cite{Andriot:2022brg}
\begin{equation} \label{scalar-potential gradient inequality}
    \ab \gamma_\star[\phi] \ab \leq \gamma[\phi],
\end{equation}
which is a direct consequence of the vector inequality $\smash{\ab \theta^a \der_a V \ab \leq \sqrt{\theta_b \theta^b} \sqrt{\der_a V \der^a V} = \sqrt{\der_a V \der^a V}}$. Again, it is possible to check that in a scaling cosmology the $\psi$-dependent term vanishes. However, unless one is certain of this fact, the potential gradient norm does not represent a reliable measure of the $\epsilon$-parameter.

For a given field-space trajectory $\smash{\phi_\star^a = \phi_\star^a(t)}$ that provides a solution to eqs. (\ref{FRW-KG eq.}, \ref{Friedmann eq. 1}, \ref{Friedmann eq. 2}), besides the normalized tangent vector $\smash{\theta_\star^a = \theta^a[\phi_\star]}$ defined before, one can define a normal vector $\smash{\nu_\star^a = - \dot{\theta}_\star^a / \sqrt{\dot{\theta}_{\star b} {\dot{\theta}_\star}{}^b}}$ too, also normalized to unity. A non-geodesity factor $\Omega$ can be defined through the identity
\begin{align*}
    \dot{\theta}_\star^a = - \Omega \, \nu_\star^a.
\end{align*}
Such a factor thus represents the magnitude of the tangent-vector time derivative. In other words, $\Omega$ measures the rate of turning of the field-space trajectory. One is then able to show that the scalar-potential gradient norm can also be expressed as \cite{Achucarro:2018vey}
\begin{equation} \label{scalar-potential gradient norm and non-geodesity factor}
        \gamma^2 = \gamma_\star^2 + \dfrac{4 \epsilon}{d-2} \, \dfrac{1}{[(d-1) - \epsilon]^2} \dfrac{\Omega^2}{H^2}.
\end{equation}
This is a general expression for the potential gradient norm $\gamma$, which does not rely on any approximation. One can easily see the relationship between eq. (\ref{scalar-potential gradient norm and acceleration parameter}) and eq. (\ref{scalar-potential gradient norm and non-geodesity factor}). In particular, we observe that the $\psi$-parameter can be expressed as the sum of a term proportional to $\Omega$ and another one proportional to $\eta$. As the $\psi$-correction is easier to compute than the term $\Omega$, our eq. (\ref{scalar-potential gradient norm and acceleration parameter}) nicely complements the intuitively clear meaning of the non-geodesity factor appearing in eq. (\ref{scalar-potential gradient norm and non-geodesity factor}), as expressed in ref. \cite{Achucarro:2018vey}. Again, for the scaling cosmologies in eqs. (\ref{power-law scale factor}, \ref{rank-condition scalar-field trajectories}, \ref{rank-condition scale-factor power}) it is possible to see that the potential gradient norm provides a measure of the $\epsilon$-parameter because $\Omega=0$ and $\eta=0$. However none of these terms is generally vanishing, in principle.

\subsubsection{Comparison with slow-roll scenarios}

Under the slow-roll approximation, one studies the cosmological evolution in terms of eqs. (\ref{slow-roll FRW-KG eq.}, \ref{slow-roll Friedmann eq. 2}, \ref{slow-roll Friedmann eq. 1}), which give the solutions $\smash{\phi_{\mathrm{sr}}^a}$ and $\smash{H_{\mathrm{sr}}}$. If such approximation is self-consistent, then by eqs. (\ref{slow-roll Friedmann eq. 2}, \ref{slow-roll Friedmann eq. 1}) one can approximate the $\epsilon$-parameter as $\smash{\epsilon_{\mathrm{sr}} = - \dot{H}_{\mathrm{sr}} / H_{\mathrm{sr}}^2 \simeq \epsilon_V}$, where $\epsilon_V$ reads
\begin{align*}
    \epsilon_V = \dfrac{d-2}{4} \, \gamma_{\mathrm{sr}}^2
\end{align*}
and $\smash{\gamma_{\mathrm{sr}} = \gamma(\phi_{\mathrm{sr}})}$ is the norm of the scalar-potential gradient in eq. (\ref{scalar-potential gradient norm}) computed on a slow-roll solution. Formally, the expression of the parameter $\epsilon_V$ in terms of $\smash{\gamma_{\mathrm{sr}}}$ is the same as eqs. (\ref{scaling-cosmology directional gamma}, \ref{scalar-potential gradient norm and acceleration parameter}), but it applies to a restricted situation, i.e. that of slow-roll time evolution.

Under the slow-roll approximation, both $\epsilon_{\mathrm{sr}} \ll 1$ and $\eta_{\mathrm{sr}} \ll 1$. From eqs. (\ref{directional derivative and acceleration parameter}, \ref{scalar-potential gradient norm and non-geodesity factor}), one finds that to leading order in the slow-roll expansion $\smash{\gamma_{\star \mathrm{sr}} \simeq 2 \sqrt{\epsilon_{\mathrm{sr}}} / \sqrt{d-2}}$ and $\smash{\gamma_{\mathrm{sr}} \simeq \bigl[ 2 \sqrt{\epsilon_{\mathrm{sr}}}/ \sqrt{d-2} \bigr] \bigl[ 1 + \Omega_{\mathrm{sr}}^2 / [(d-1)^2 H_{\mathrm{sr}}^2 \bigr]}$, as observed in refs. \cite{Hetz:2016ics, Achucarro:2018vey}. However, because late-time cosmologies are not generally compatible with slow-roll evolution, one should never rely on these approximated expression for quintessence-like models, but rather on the complete eqs. (\ref{directional derivative and acceleration parameter}) and (\ref{scalar-potential gradient norm and acceleration parameter}, \ref{scalar-potential gradient norm and non-geodesity factor}).

\subsubsection{Relevance for scaling cosmologies} \label{subsubsec: relevance for scaling cosmologies}
Incidentally, if $\gamma$ is computed on a scaling cosmology, then it numerically corresponds to the parameter $\gamma_*$ in eq. (\ref{scaling-cosmology directional gamma}), namely $\gamma = \gamma_*$ as in eq. (\ref{scaling-cosmology potential gradient norm}).

Conceptually, the slow-roll approximation is however not generically correct for scaling solutions. First, eqs. (\ref{scaling-cosmology kinetic energy}, \ref{scaling-cosmology potential}) show that the kinetic and the potential energy are of the same order of magnitude, unless the power $p$ is arbitrarily large, and in particular they evolve with the same parametric time dependence, as discussed in subsubsec. \ref{subsubsec: scalar-potential directional derivative}. Second, one can check that the second-derivative term, the Hubble-friction term and the scalar-potential derivative in the scalar field equations all fall off with the same parametric dependence on time, as discussed in subsubsec. \ref{subsubsec: accidental gradient-flow trajectories}. Therefore, unless we know that we are working with a scaling cosmology -- which however is not analytically known to be a late-time attractor, in general, as discussed in subsec. \ref{subsec: scaling cosmologies as late-time attractors} -- computing $\epsilon$ through the $\gamma$-parameter is not justified. This is coherent with a more general obstruction for the slow-roll approximation that is typical of generic multi-exponential potentials \cite{Shiu:2023nph}.

To conclude, we comment on the geodesity of scaling-cosmology trajectories. As apparent from eq. (\ref{scalar-potential gradient norm and non-geodesity factor}), the fact that the trajectory is geodesic -- namely that it has $\Omega = 0$ -- is not enough to reliably compute the $\epsilon$-parameter through the potential gradient norm $\gamma$. One also needs knowledge of $\eta$.

\subsection{Further scaling solutions} \label{subsec: further scaling solitions}

Following the classification of ref. \cite{Collinucci:2004iw}, further scaling solutions exist besides those discussed in subsec. \ref{subsec: scaling cosmologies}; more details are in appendix \ref{app: late-time cosmological attractors}: see lemmas \ref{lemma: (X,Y)-plane critical points (rank<m)} and \ref{lemma: (X,Y)-plane critical points (x^2=0)}.

Here we discuss the case in which the rank of the $\gamma_{ia}$-matrix is smaller than the number of terms in the scalar potential, i.e. $\mathrm{rank} \, \gamma_{ia} < m$. One always has $\mathrm{rank} \, \gamma_{ia} < m$ if the scalar-potential terms outnumber the scalars, i.e. $n < m$, and therefore they tend to constrain their dynamics into stationary points -- which in this case can exist --, thus explaining the non-generality of rolling-scalar solutions. One can also have $\mathrm{rank} \, \gamma_{ia} < m$ if $n \geq m$, but then the same linear combinations of scalars appear in multiple scalar-potential terms.

In these cases rolling-scalar solutions are not general. If $\mathrm{rank} \, \gamma_{ia} = r < m$, let $\smash{\gamma_{\iota a}}$ denote $r$ linearly-independent vectors, for $\iota=1, \dots, r$, and let the remaining vectors be expressed as $\smash{\gamma_{\iota' a} = \sum_{\iota=1}^r \lambda_{\iota' \iota} \gamma_{\iota a}}$, for $\smash{\iota' = r+1, \dots, m}$. It is only if $\smash{\sum_{\iota=1}^m \lambda_{\iota' \iota} = 1}$ that scaling solutions exist. This can be understood intuitively as a condition in which different scalar-potential terms actually have the same exponentials, including the overall scaling, thus effectively reducing to the cases with the rank-condition in place.\footnote{To see this point with an example, one can consider the trivial 1-field 2-term scalar potential $\smash{V = \Lambda_1 \, \e^{- \kappa_d \gamma_1 \phi} + \Lambda_2 \, \e^{- \kappa_d \gamma_2 \phi}}$. This admits a scaling solution only if $\gamma_1 = \gamma_2$, which means that this is secretly a 1-field 1-term potential with $\mathrm{rank} \, \gamma_{ia} = m = 1$.}

If they exist, in these cases rolling solutions are mathematically analogous to the ones above and, given the matrix $\smash{N_{ab} = \sum_{i=1}^m \gamma_{i a} \gamma_{i b}}$, they read
\begin{equation} \label{non-rank-condition scalar-field trajectories}
    \phi^a_* (t) = \phi^a_0 + \dfrac{2}{\kappa_d} \, (N^{-1})^{ab} \sum_{i=1}^m \gamma_{ib} \, \mathrm{ln} \, \dfrac{t}{t_0},
\end{equation}
with a scale-factor power
\begin{equation} \label{non-rank-condition scale-factor power}
    p = \dfrac{4 \, \delta_{a b}}{d-2} \, (N^{-1})^{a c} (N^{-1})^{b d} \sum_{i=1}^m \!\gamma_{i c} \sum_{j=1}^m \!\gamma_{i d}.
\end{equation}

Based on the discussion above, we expect that, in the asymptotic region of moduli space, scalar potentials with $\mathrm{rank} \, \gamma_{ia} < m$ are analogous to those with the rank-condition in place. In fact, physical intuition suggests that, if $\mathrm{rank} \, \gamma_{ia} < m$, then there exists a number $\tilde{m} < m$ of asymptotically dominating scalar-potential terms, with effectively only $n = \tilde{m}$ scalars participating to the dynamics, and with $\mathrm{rank} \, \gamma_{ia} = \tilde{m}$.\footnote{To see this point with an example, one can again consider the 1-field 2-term scalar potential $\smash{V = \Lambda_1 \, \e^{- \kappa_d \gamma_1 \phi} + \Lambda_2 \, \e^{- \kappa_d \gamma_2 \phi}}$. If $\gamma_1 < \gamma_2$, then it is reasonable to expect that asymptotically the second term is subdominant, thus effectively leaving one just the first term, which admits a scaling solution.} However, at the moment we can only speculate that this is the case and leave an analytic treatment for future work \cite{criticalpoints2}.

\section{Late-time cosmologies and the Swampland Program} \label{sec: late-time cosmologies and the swampland}

Our analytic results allow us to test explicitly some of the conjectures in the Swampland Program. In a broad sense, our comments of the analytic properties of scaling cosmologies in subsec. \ref{subsec: analytic properties of scaling cosmologies} also constitute 
concrete grounds to verify Swampland claims; here however we perform more explicit comparisons.

\subsection{Dilaton obstruction to acceleration} \label{subsec: dilaton obstruction to acceleration}

As discussed in ref. \cite{Shiu:2023nph}, in string-theoretic constructions, the bound in eq. (\ref{epsilon bound}) -- and, by extension, the bound in eq. (\ref{optimal epsilon bound}) -- can acquire a very restrictive form due to the universal structure of dilaton couplings. Due to the relevance of such a constraint, we revisit it here in more detail.

In string-theoretic constructions, all $d$-dimensional exponential scalar potentials are generated by a generic string-frame action contribution of the form
\begin{align*}
    S = - \int_{\mathrm{X}_{1,9}} \bigl[ A_{r} \wedge \star_{1,9} A_r \bigr] \, K_{10,r}(\sigma) \, \e^{- \chi_{\mathrm{E}} \Phi}.
\end{align*}
Here $\smash{\Phi = \sdil + \mathrm{ln} \, g_s}$ is the shifted 10-dimensional dilaton (we define $\smash{\Phi = \sdil}$ if the dilaton is not fixed) and $\smash{\chi_{\mathrm{E}}}$ is the Euler number that weighs the string-coupling perturbative order via the string-worldsheet topology;\footnote{In the RR-sector, one has to set $\smash{\chi_{\mathrm{E}}=0}$ because the sources of RR-fields are D-branes and not fundamental strings \cite{Polchinski:1995mt}. This does not affect any of the arguments below.} furthermore, $A_r$ is an $r$-form with just internal components and $\smash{K_{10,r} (\sigma) = \Lambda_{10,r} \, \e^{- k \sigma}}$ is a function of the string-frame volume which can emerge in string-loop terms. After a dimensional reduction, the $d$-dimensional Einstein-frame action reads
\begin{align*}
    S = - \int_{\mathrm{X}_{1,d-1}} \tilde{*}_{1,d-1} \, \Lambda \, \e^{\kappa_d \gamma_{\tdelta} \tdelta - \kappa_d \gamma_{\tsigma} \tsigma},
\end{align*}
where $\smash{\tdelta}$ and $\smash{\tsigma}$ are the $d$-dimensional dilaton and string-frame radion after canonical normalization, respectively, with the couplings
\begin{align*}
     \gamma_{\tdelta} & = \frac{d}{\sqrt{d-2}} - \dfrac{1}{2} \chi_{\mathrm{E}} \sqrt{d-2}, \\
     \gamma_{\tsigma} & = \Bigl(1 - \dfrac{1}{2} \chi_{\mathrm{E}}\Bigr) \, \sqrt{10-d} - \dfrac{2r + k}{\sqrt{10-d}}.
\end{align*}
Although the radion coefficient is model-dependent and thus makes it hard to draw general conclusions, the dilaton coefficient is universal. In particular, as a consequence of the bound $\smash{\chi_{\mathrm{E}} (\mathrm{S}^2) \leq 2}$ on the string-frame coupling of the 10-dimensional dilaton, the $d$-dimensional dilaton always appears with a $\gamma$-coefficient such that
\begin{equation} \label{dilaton-gamma lower bound}
    \gamma_{\tdelta}{}^2 \geq \dfrac{4}{d-2},
\end{equation}
Because $\smash{(\gamma_\infty)^2 \geq \gamma_{\tdelta}{}^2}$, this rules out late-time accelerated expansion in all string-theoretic constructions with positive-definite scalar-potential terms in which the $d$-dimensional dilaton is one of the rolling scalars; theories with two potentials of opposite sign are also incompatible with cosmic acceleration. Of course, all these considerations also extend to models in which the late-time solution is a scaling cosmology.

To conclude, we emphasize that here we are referring to the canonically-normalized $d$-dimensional dilaton $\smash{\tdelta}$, which is different from the canonically-normalized 10-dimensional dilaton $\smash{\tsdil}$. In an isotropic compactification, such fields are related by the field-space rotation
\begin{align*}
    \tvec{\tdelta}{\tsigma} =
    \dfrac{\sqrt{2}}{4}
    \left(\! \begin{array}{cc}
        \sqrt{d-2} & - \sqrt{10-d} \\[1.0ex]
        \sqrt{10-d} & \sqrt{d-2}
    \end{array} \!\right)
    \tvec{\tsdil}{\tomega},
\end{align*}
where $\smash{\tsigma}$ and $\smash{\tomega}$ are the canonically-normalized string-frame and Einstein-frame radions, respectively. Indeed in the $\smash{(\tdelta, \tsigma)}$- and $\smash{(\tsdil, \tomega)}$-bases there are no kinetic mixings, which makes both of them convenient choices. There could be situations in which one basis is easier to deal with than the other one.\footnote{For instance, in Calabi-Yau flux compactifications with $\smash{N_4=1}$ supersymmetry, it turns out that the first basis is a more suitable choice in type-IIA theories and that the second basis is a more suitable choice in type-IIB theories (see e.g. refs. \cite{Grimm:2004uq, Grimm:2004ua}). However, the fact that the 10-dimensional dilaton $\smash{\tsdil}$ in type-IIB supergravity is stabilized  by RR- and NSNS-3-form fluxes (and that it naturally appears in a supermultiplet) does not mean that we could not equivalently work in the $\smash{(\tdelta,\tsigma)}$-basis in a type-IIB construction, after a field-basis rotation. We found that the $\epsilon$-parameter is more naturally bounded in the $\smash{(\tdelta,\tsigma)}$-basis.} However, here we remain agnostic as to any underlying higher-dimensional structure and we do not make any assumptions on which fields are being stabilized. Also, we point out a simple but important observation. If one or more fields are stabilized in a basis, then it may be that the field $\smash{\tdelta}$ is also partially stabilized: for instance, this is the case in type-IIB flux compactification in which the 10-dimensional dilaton $\smash{\tsdil}$ is stabilized by NSNS- and RR-fluxes.

\subsection{Partial moduli stabilization} \label{subsec: partial moduli stabilization}

There are situations in which scaling solutions feature the stabilization of some of the scalars. In particular, one situation where this happens is when a subset of the fields appear with the same exponents in all the scalar-potential terms, and the complementary subset of fields appears instead with different exponents: in this case, the latter fields happen to be stabilized. Below we discuss this scenario and comment on possible subtleties.

Let the scalar potential be
\begin{align*}
    V = \sum_{i = 1}^m \Lambda_i \, \e^{- \kappa_d [\gamma_{i \overline{a}} \phi^{\overline{a}} + \tilde{\gamma}_{\tilde{a}} \phi^{\tilde{a}}]},
\end{align*}
where we split the fields into the subset of scalars $\smash{\phi^{\tilde{a}}}$ with an identical coupling $\smash{\gamma_{i \tilde{a}} = \tilde{\gamma}_{\tilde{a}}}$ in all the terms (note that this may be the case after a field-space rotation for this subset of scalars) from the other scalars $\smash{\phi^{\overline{a}}}$. If the scaling solutions in eq. (\ref{rank-condition scalar-field trajectories}) are late-time attractors, then we have the field-space trajectories
\begin{align*}
    \phi_*^{\tilde{a}}(t) & = \phi_0^{\tilde{a}}(t) + \dfrac{1}{\kappa_d} \, \dfrac{d-2}{2 \epsilon} \, \tilde{\gamma}^{\tilde{a}} \, \mathrm{ln} \, \dfrac{t}{t_0}, \\
    \phi_*^{\overline{a}}(t) & = \phi_0^{\overline{a}}(t) + \dfrac{2}{\kappa_d} \, \biggl[ \sum_{i=1}^m {\gamma_i}^{\overline{a}} \lambda^i \biggr] \, \mathrm{ln} \, \dfrac{t}{t_0}.
\end{align*}
Because all scaling cosmologies feature scalar-potential terms all falling over time as $V_i(t) = V_i(t_0) \, (t_0/t)^2$, for each $i$-index we must have $\smash{\gamma_{ia} \sum_{j=1}^m {\gamma_j}^{a} \lambda^j = 1}$, i.e.
\begin{align*}
     \dfrac{d-2}{4 \epsilon} \, \tilde{\gamma}_{\tilde{a}} \tilde{\gamma}^{\tilde{a}} + \gamma_{i \overline{a}} \sum_{j=1}^m {\gamma_j}^{\overline{a}} \lambda^k = 1.
\end{align*}
This can be just seen as the linear system of equations $\smash{\gamma_{i \overline{a}} \chi^{\overline{a}} = \xi}$, where we defined $\smash{\chi^{\overline{a}} = \sum_{j=1}^m {\gamma_j}^{\overline{a}} \lambda^k}$ and $\smash{\xi = 1 - [(d-2)/4] \, \tilde{\gamma}_{\tilde{a}} \tilde{\gamma}^{\tilde{a}} / \epsilon}$ for simplicity. If $\xi \neq 0$, there cannot be a solution since not all of the $i$-rows of the system are identical. Therefore, we must have $\xi = 0$ and $\chi^{\overline{a}}=0$. In other words, this means that we have
\begin{align*}
    \epsilon = \dfrac{d-2}{4} \, (\tilde{\gamma})^2
\end{align*}
and $\smash{\sum_{j=1}^m {\gamma_j}^{\overline{a}} \lambda^k = 0}$, or equivalently
\begin{align*}
    \phi_*^{\tilde{a}}(t) & = \phi_0^{\tilde{a}}(t) + \dfrac{2}{\kappa_d} \, \dfrac{\tilde{\gamma}^{\tilde{a}}}{(\tilde{\gamma})^2} \, \mathrm{ln} \, \dfrac{t}{t_0}, \\
    \phi_*^{\overline{a}}(t) & = \phi_0^{\overline{a}}(t).
\end{align*}

This is a useful result, in that it allows us to distinguish two cases, which we discuss below.
\begin{itemize}
    \item If, for at least one field $\smash{\phi^{\overline{a}}}$, not all of the exponential couplings are zero and the non-vanishing ones are of the same sign -- i.e. if $\gamma_{ia} \geq 0$ with $\sum_{i=1}^m \gamma_{ia} > 0$ or $\gamma_{ia} \leq 0$ with $\sum_{i=1}^m \gamma_{ia} < 0$ for at least one $\overline{a}$-index -- , then our assumption of convergence to a scaling cosmology is unjustified in the first place. Indeed, in this case at least some of the $\lambda^i$-values must be negative for the identity $\smash{\sum_{i=1}^m {\gamma_i}^{\overline{a}} \lambda^i = 0}$ to be possible and therefore our proof of convergence does not apply. In short, in this case our assumption that $\lambda^i \geq 0$ for each $i$-index but with the additional requirement that $\smash{\sum_{i=1}^m \lambda^i > 0}$ is incompatible with the consequences that would follow. In physical terms, here we have a hierarchy of potentials, which is in fact known to be incompatible with a scaling cosmology: we can speculate that at late times the attractor is approximately provided by the scaling cosmology associated to the truncated potential in which the subdominant terms are removed.
    \item If all fields $\smash{\phi^{\overline{a}}}$ involve exponential couplings that are not all non-negative or non-positive, then a priori there is no incompatibility, and the stabilization discussed above is in principle possible (though one must still verify case by case that the conditions for a scaling cosmology are indeed met). In physical terms, this is reasonable since it corresponds to potentials that have exponentials of both signs, which create a competition of terms resulting in a scalar-potential valley. Scalars that see this this competition get stabilized at the minimum of such a valley. In fact, the scalar potentials that we are considering here are the ones in the basis where the bound in eq. (\ref{optimal epsilon bound}) is already optimal. As we observed in subsubsec. \ref{subsubsec: time-measuring and quintessence-like scalars}, one can always describe a scaling cosmology in terms of just one rolling scalar, with the others being stabilized. Note that this mechanism does not provide a loophole to stabilize the $d$-dimensional dilaton, since the latter has exponential couplings of a definite sign.
\end{itemize}

Although the situation considered here may sound very peculiar, there are string-theoretic arguments that show that the assumption that some fields appear with the same power in all scalar-potential terms is not so restrictive: as shown in subsec. \ref{subsec: dilaton obstruction to acceleration}, the $d$-dimensional dilaton appears with a power depending only on the string perturbative order at which the potential is generated. Therefore, a trivial conclusion is that one can in principle stabilize all moduli but the $d$-dimensional dilaton whenever all scalar-potential terms are generated at the same string-loop level. However, when the dilaton remains an unstabilized field and the potential is positive-definite, we already know that acceleration is not possible as the dilaton potential is too steep.

\subsection{Coupling convex hull and cosmic acceleration} \label{subsec: coupling convex hull and cosmic acceleration}

Here we provide a simple but rigorous interpretation of the relationship of late-time cosmic acceleration with the coupling convex hull. Because the bounds in eqs. (\ref{epsilon bound}, \ref{optimal epsilon bound}), the bound in eq. (\ref{alternative epsilon bound}) and the scaling-cosmology attractors discussed in subsec. \ref{subsec: scaling cosmologies as late-time attractors} require increasing numbers of assumptions and tie to the convex-hull discussion in different ways, we discuss each of them separately.

An important clarification is in order. As we show below, the convex hull of the exponential couplings provides a graphical interpretation of the measure of the $\epsilon$-parameter. This offers a nice visualization of our analytic results. However, this does not per se relate directly to the de Sitter conjecture, since the obstruction to flat(tish) potentials does not come from the exponential potentials themselves. It is only after one considers the bounds imposed by string theory on the couplings -- namely on the properties of the convex hull -- that one can possibly confirm the conjecture by taking advantage of our general results.

\subsubsection{Universal acceleration bound}

Here, we formulate our bound on late-time acceleration in eq. (\ref{optimal epsilon bound}) in terms of the convex hull of the exponential couplings. All results below are independent of the number $n$ of scalar fields and the number $m$ of scalar-potential terms.

For simplicity, we start by considering a scalar potential with only positive coefficients $\Lambda_i > 0$; as we will see below, this assumption is however not fundamental for our results as long as the total potential is positive. On the one hand, out of the $m \cdot n$ couplings $\gamma_{ia}$ of a given multi-field multi-exponential potential, one can define $m$ different $n$-dimensional vectors $\mu_i$ with components $\smash{(\mu_i)_a = \gamma_{ia}}$. Then, one can define the exponential-coupling convex hull, namely the parameter-space hypersurface $\mathrm{CH}(\lbrace \mu_i \rbrace_{i=1}^m) = \bigl\lbrace \nu_a = \sum_{i=1}^m \! \xi_i (\mu_i)_a: \; (\xi_i)_{i=1}^m \in (\mathbb{R}_0^+)^m, \; \sum_{i=1}^m \xi_i = 1 \bigr\rbrace$, and compute the distance of the latter from the origin as
\begin{equation} \label{convex-hull distance}
    \mu_{\mathrm{CH}} = \inf_{\nu \in \mathrm{CH}} \sqrt{\nu_a \nu^a}.
\end{equation}
The way the convex hull arises here is analogous to how it arises in the multi-field generalizations \cite{Cheung:2014vva, Rudelius:2014wla,Rudelius:2015xta,Brown:2015iha,Brown:2015lia,Palti:2017elp, Calderon-Infante:2020dhm, Etheredge:2022opl} of the Weak Gravity Conjecture \cite{Arkani-Hamed:2006emk} and the Distance Conjecture \cite{Ooguri:2006in}. A different notion of distance $\smash{\tilde{\mu}}$ will be discussed below, but for now the definition of $\smash{\mu_{\mathrm{CH}}}$ suffices. On the other hand, as an analytic result, we have our bound in eq. (\ref{optimal epsilon bound}). For any late-time solution, the $\epsilon$-parameter is bounded from below by a number proportional to the squared length of the vector constructed with the minimum non-negative exponential couplings for each of the fields, as optimized through a possible field-space rotation. By direct inspection, our optimal analytic lower bound for the $\epsilon$-parameter does in fact coincide with the convex-hull distance, i.e.\footnote{
A similar-looking relationship appears in the convex-hull formulation of the de Sitter conjecture \cite{Calderon-Infante:2022nxb}, but
with a different meaning. In eq. (\ref{convex-hull epsilon bound}), we bound the $\epsilon$-parameter, which provides the proper criterion for cosmic acceleration. Instead, the convex-hull de Sitter conjecture concerns the potential gradient norm, which is generally unrelated to the $\epsilon$-parameter, as we explained.}
\begin{equation} \label{convex-hull epsilon bound}
    \epsilon \geq \dfrac{d-2}{4} \, (\hat{\gamma}_{\infty})^2 = \dfrac{d-2}{4} \, \mu_{\mathrm{CH}}^2.
\end{equation}
This result is analytic and not conjectural. A representation of the late-time cosmic acceleration bound for positive-definite scalar potentials in terms of the coupling convex hull is depicted in figs. \ref{fig.: optimal acceleration bound}, \ref{fig.: 2by2 convex-hull picture 1}, \ref{fig.: 2by2 convex-hull picture 2} and \ref{fig.: 2by2 convex-hull picture 6}.

\begin{figure}[ht]
    \centering
    \begin{tikzpicture}[xscale=0.92,yscale=0.92,every node/.style={font=\normalsize}]
    
    \node[align=left] at (6,3.5){$\phi^a = {\color{cyan} \phi^1}, {\color{magenta} \phi^2}$ \\[1.0ex] $\gamma_{ia} = \matr{\gamma_{1 \color{cyan} 1}}{\gamma_{1 \color{magenta} 2}}{\gamma_{2 \color{cyan} 1}}{\gamma_{2 \color{magenta} 2}}$};
    
    \draw[->] (-0.8,0) -- (6,0) node[right]{$\gamma_{\infty 1}$};
    \draw[->] (0,-0.8) -- (0,4.5) node[left]{$\gamma_{\infty 2}$};
 
    \draw[orange, thick] (5,2) -- (3,4);
    
    \draw[->, thick, teal] (0,0) -- (5,2) node[right,black]{$\mu_1$};
    \draw[densely dotted] (5,2) -- (5,0) node[below]{$\gamma_{1 \color{cyan} 1}$};
    \draw[densely dotted] (5,2) -- (0,2) node[left]{$\gamma_{1 \color{magenta} 2}$};

    \draw[->, thick, teal] (0,0) -- (3,4) node[above,black]{$\mu_2$};
    \draw[densely dotted] (3,4) -- (3,0) node[below]{$\gamma_{2 \color{cyan} 1}$};
    \draw[densely dotted] (3,4) -- (0,4) node[left]{$\gamma_{2 \color{magenta} 2}$};

    \draw[->, thick, purple] (0,0) -- (3.5,3.5) node[below right,pos=0.55]{$\!\! (\hat{\gamma}_\infty)^2 = \mu_{\mathrm{CH}}^2 = (\tilde{\mu})^2$};
    
    \end{tikzpicture}
    \caption{A representation of the acceleration bound for a positive-definite potential in which the vector orthogonal to the convex-hull hyperplane intersects the convex hull itself too. In this case, it is apparent that $\smash{(\hat{\gamma}_\infty)^2} = \mu_{\mathrm{CH}}^2 = (\tilde{\mu})^2$.}
    \label{fig.: 2by2 convex-hull picture 1}
\end{figure}

\begin{figure}[ht]
    \centering
    \begin{tikzpicture}[xscale=0.92,yscale=0.92,every node/.style={font=\normalsize}]
    
    \node[align=left] at (6,3.5){$\phi^a = {\color{cyan} \phi^1}, {\color{magenta} \phi^2}$ \\[1.0ex] $\gamma_{ia} = \matr{\gamma_{1 \color{cyan} 1}}{\gamma_{1 \color{magenta} 2}}{\gamma_{2 \color{cyan} 1}}{\gamma_{2 \color{magenta} 2}}$};
    
    \draw[->] (-0.8,0) -- (6,0) node[right]{$\gamma_{\infty 1}$};
    \draw[->] (0,-0.8) -- (0,4.5) node[left]{$\gamma_{\infty 2}$};

    \draw[densely dotted, magenta!65!orange] (6,4/3) -- (1,14/3);
    \draw[orange, thick] (5,2) -- (3.5,3);
    
    \draw[->, thick, teal] (0,0) -- (5,2) node[right,black]{$\mu_1$};
    \draw[densely dotted] (5,2) -- (5,0) node[below]{$\gamma_{1 \color{cyan} 1}$};
    \draw[densely dotted] (5,2) -- (0,2) node[left]{$\gamma_{1 \color{magenta} 2}$};

    \draw[->, thick, purple] (0,0) -- (3.5,3) node[above,black]{$\mu_2$} node[below right, pos=0.7] {$\!\! (\hat{\gamma}_\infty)^2 = \mu_{\mathrm{CH}}^2$};
    \draw[densely dotted] (3.5,3) -- (3.5,0) node[below]{$\gamma_{2 \color{cyan} 1}$};
    \draw[densely dotted] (3.5,3) -- (0,3) node[left]{$\gamma_{2 \color{magenta} 2}$};

    \draw[->, thick, magenta!65!orange] (0,0) -- (32/13,48/13) node[right,pos=0.7]{$(\tilde{\mu})^2$};
    \draw[densely dotted, magenta!65!orange] (32/13,48/13) -- (32/13,0) node[below]{$\tilde{\mu}_1$};
    \draw[densely dotted, magenta!65!orange] (32/13,48/13) -- (0,48/13) node[left]{$\tilde{\mu}_2$};
    
    \end{tikzpicture}
    \caption{A representation of the acceleration bound for a positive-definite potential in which the vector orthogonal to the convex-hull hyperplane does not intersects the convex hull itself. In this case, it is apparent that $\smash{(\hat{\gamma}_\infty)^2 = \mu_{\mathrm{CH}}^2 \geq  (\tilde{\mu})^2}$.}
    \label{fig.: 2by2 convex-hull picture 2}
\end{figure}
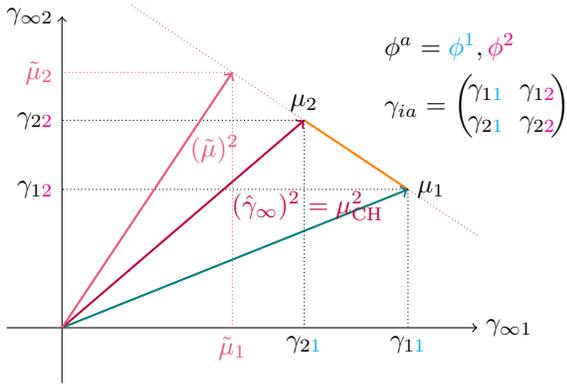

As discussed in the comments to eqs. (\ref{epsilon bound}, \ref{optimal epsilon bound}), we can also analytically bound late-time cosmic acceleration in the presence of negative terms. This involves an appropriately defined $\smash{(\hat{\gamma}_\infty)^2}$-term. As far as the coupling convex hull is concerned, if we consider an overall positive potential which however includes $\Lambda_i$-terms of both signs, we define the convex hull just through the coefficients of the positive-definite potential terms. In this case, the identity appearing in eq. (\ref{convex-hull epsilon bound}) is no longer general, as one can see by direct inspection. Rather, we have $\smash{\mu_{\mathrm{CH}}^2 \geq (\hat{\gamma}_\infty)^2}$. This means that the construction of a coupling convex hull is no longer informative per se of the bound on cosmic acceleration. One can check that, even if the convex hull were defined involving the couplings of the negative-definite terms too, then one would similarly be unable to individuate a universal convex-hull criterion for cosmic acceleration. Of course, what is physically and mathematically relevant is the maximal bound in eq. (\ref{optimal epsilon bound}), which can still be visualized easily in coupling space through the $\smash{(\hat{\gamma}_\infty)^2}$-term. For a representation of these scenarios, see figs. \ref{fig.: acceleration bound 3}, \ref{fig.: 2by2 convex-hull picture 3}, \ref{fig.: 2by2 convex-hull picture 4} and \ref{fig.: 2by2 convex-hull picture 5}.

\begin{figure}[ht]
    \centering
    \begin{tikzpicture}[xscale=0.92,yscale=0.92,every node/.style={font=\normalsize}]
    
    \node[align=left] at (5,2.5){$\phi^a = {\color{cyan} \phi^1}, {\color{magenta} \phi^2}$ \\[1.0ex] $\gamma_{i_\pm a} = (\gamma_{\pm \color{cyan} 1}, \gamma_{\pm \color{magenta} 2})$ \\
    $\hphantom{\gamma_{i_\pm a}} = (\Gamma_{\pm \color{cyan} 1}, \Gamma_{\pm \color{magenta} 2})$};
    
    \draw[->, ultra thin, gray!60!white] (-0.8,0) -- (5.5,0) node[right]{$\gamma_{\infty 1}$};
    \draw[->, ultra thin, gray!60!white] (0,-1.6) -- (0,3.5) node[left]{$\gamma_{\infty 2}$};

    \draw[->, rotate=-atan(5/14)] (-0.8,0) -- (5.5,0) node[right]{$\hat{\gamma}_{\infty 1}$};
    \draw[->, rotate=180-atan(5/14)] (0,-1.6) -- (0,2.5) node[left]{$\hat{\gamma}_{\infty 2}$};
 
    \draw[dashed, orange] (4/3,2) -- (7/2,-5/4);
    \draw[->, thick, magenta!65!orange] (0,0) -- (24/13,16/13) node[right]{$(\tilde{\mu})^2$};
    \draw[densely dotted, magenta!65!orange] (4,-2) -- (7/2,-5/4);
    \draw[densely dotted, magenta!65!orange] (2/3,3) -- (4/3,2);
    
    \draw[->, thick, green] (0,0) -- (4/3,2) node[right,black]{};
    \draw[densely dotted, gray!60!white] (4/3,2) -- (4/3,0) node[below]{$\Gamma_{- \color{cyan!30!white} 1}$};
    \draw[densely dotted, gray!60!white] (4/3,2) -- (0,2) node[left]{$\gamma_{- \color{magenta!30!white} 2}$};

    \draw[->, thick, teal] (0,0) -- (7/2,-5/4) node[above,black]{};
    \draw[densely dotted, gray!60!white] (7/2,-5/4) -- (7/2,0) node[above]{$\gamma_{+ \color{cyan!30!white} 1}$};
    \draw[densely dotted, gray!60!white] (7/2,-5/4) -- (0,-5/4) node[below right]{$\Gamma_{+ \color{magenta!30!white} 2}$};

    \node[purple, above right] at (7/2,-5/4){$(\hat{\gamma}_\infty)^2 = \mu_{\mathrm{CH}}^2$};
    
    \end{tikzpicture}
    \caption{A representation of the acceleration bound for a potential with terms of both signs. The convex hull is defined only by the positive-definite potential terms, in teal; the green line refers to the negative-definite terms. The dotted orange line represents the would-be convex hull if all potential terms were positive-definite. In this case, $\smash{(\hat{\gamma}_\infty)^2 = \mu_{\mathrm{CH}}^2 \geq (\tilde{\mu})^2}$.}
    \label{fig.: 2by2 convex-hull picture 3}
\end{figure}
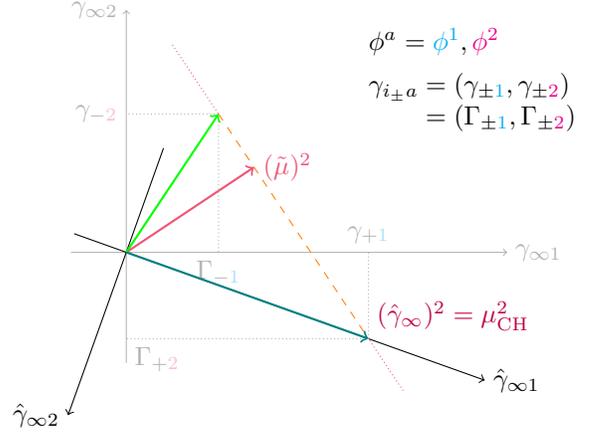

\begin{figure}[ht]
    \centering
    \begin{tikzpicture}[xscale=0.92,yscale=0.92,every node/.style={font=\normalsize}]
    
    \node[align=left] at (5,2.5){$\phi^a = {\color{cyan} \phi^1}, {\color{magenta} \phi^2}$ \\[1.0ex] $\gamma_{i_\pm a} = (\gamma_{\pm \color{cyan} 1}, \gamma_{\pm \color{magenta} 2})$ \\
    $\hphantom{\gamma_{i_\pm a}} = (\Gamma_{\pm \color{cyan} 1}, \Gamma_{\pm \color{magenta} 2})$};
    
    \draw[->] (-0.8,0) -- (5.5,0) node[right]{$\gamma_{\infty 1}$};
    \draw[->] (0,-1.6) -- (0,3.5) node[left]{$\gamma_{\infty 2}$};
 
    \draw[dashed, orange] (4/3,2) -- (7/2,-5/4);
    \draw[->, thick, magenta!65!orange] (0,0) -- (24/13,16/13) node[right]{$(\tilde{\mu})^2$};
    \draw[densely dotted, magenta!65!orange] (4,-2) -- (7/2,-5/4);
    \draw[densely dotted, magenta!65!orange] (2/3,3) -- (4/3,2);
    
    \draw[->, thick, teal] (0,0) -- (4/3,2) node[above, purple]{$(\hat{\gamma}_\infty)^2 = \mu_{\mathrm{CH}}^2$};
    \draw[densely dotted] (4/3,2) -- (4/3,0) node[below]{$\Gamma_{+ \color{cyan} 1}$};
    \draw[densely dotted] (4/3,2) -- (0,2) node[left]{$\gamma_{+ \color{magenta} 2}$};

    \draw[->, thick, green] (0,0) -- (7/2,-5/4) node[above,black]{};
    \draw[densely dotted] (7/2,-5/4) -- (7/2,0) node[above]{$\gamma_{- \color{cyan} 1}$};
    \draw[densely dotted] (7/2,-5/4) -- (0,-5/4) node[left]{$\Gamma_{- \color{magenta} 2}$};
    
    \end{tikzpicture}
    \caption{A representation of the acceleration bound, for a potential with terms of both signs, with $\smash{(\hat{\gamma}_\infty)^2 = \mu_{\mathrm{CH}}^2 \geq (\tilde{\mu})^2}$.}
    \label{fig.: 2by2 convex-hull picture 4}
\end{figure}

\begin{figure}[ht]
    \centering
    \begin{tikzpicture}[xscale=0.92,yscale=0.92,every node/.style={font=\normalsize}]
    
    \node[align=left] at (5,2.5){$\phi^a = {\color{cyan} \phi^1}, {\color{magenta} \phi^2}$ \\[1.0ex] $\gamma_{i_\pm a} = (\gamma_{\pm \color{cyan} 1}, \gamma_{\pm \color{magenta} 2})$ \\
    $\hphantom{\gamma_{i_\pm a}} = (\Gamma_{\pm \color{cyan} 1}, \Gamma_{\pm \color{magenta} 2})$};
    
    \draw[->] (-0.8,0) -- (5.5,0) node[right]{$\gamma_{\infty 1}$};
    \draw[->] (0,-2.0) -- (0,3) node[left]{$\gamma_{\infty 2}$};
 
    \draw[dashed, orange] (7/3,1/2) -- (7/2,-5/4);
    \draw[->, thick, purple] (0,0) -- (24/13,16/13) node[right]{$(\hat{\gamma}_\infty)^2 = (\tilde{\mu})^2$};
    \draw[densely dotted, magenta!65!orange] (4,-2) -- (7/2,-5/4);
    \draw[densely dotted, magenta!65!orange] (4/3,2) -- (7/3,1/2);
    
    \draw[->, thick, teal] (0,0) -- (7/3,1/2) node[right]{$\mu_{\mathrm{CH}}^2$};
    \draw[densely dotted] (7/3,1/2) -- (7/3,0) node[below]{$\gamma_{+ \color{cyan} 1}$};
    \draw[densely dotted] (7/3,1/2) -- (0,1/2) node[left]{$\gamma_{+ \color{magenta} 2}$};

    \draw[->, thick, green] (0,0) -- (7/2,-5/4) node[above,black]{};
    \draw[densely dotted] (7/2,-5/4) -- (7/2,0) node[below right]{$\Gamma_{- \color{cyan} 1}$};
    \draw[densely dotted] (7/2,-5/4) -- (0,-5/4) node[below right]{$\Gamma_{- \color{magenta} 2}$};
    
    \end{tikzpicture}
    \caption{A representation of the acceleration bound, for a potential with terms of both signs, with $\smash{\mu_{\mathrm{CH}}^2 \geq (\hat{\gamma}_\infty)^2 = (\tilde{\mu})^2}$.}
    \label{fig.: 2by2 convex-hull picture 5}
\end{figure}
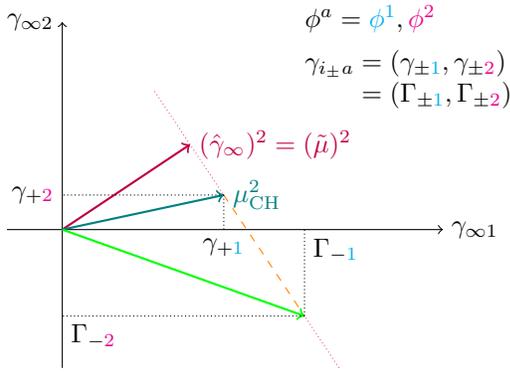

\subsubsection{Alternative acceleration bound}

Again, we begin with a scalar potential with only positive coefficients $\Lambda_i > 0$. In case the scalars outnumber the potential terms, namely if $n \geq m$, the bound in eq. (\ref{alternative epsilon bound}) allows us to further characterize the relationship of the coupling convex hull with late-time acceleration. Given the coupling matrix $\gamma_{ia}$, we have $m$ $n$-dimensional vectors $\smash{(\mu_{i})_a = \gamma_{ia}}$. If $n \geq m$, the convex hull $\mathrm{CH}(\lbrace \mu_i \rbrace_{i=1}^m) = \Pi$ of these $m$ vectors is at most $(n-1)$-dimensional, and so is the hyperplane $\smash{\tilde{\Pi} \supset \Pi}$ on which the convex hull lies. Once $\tilde{\Pi}$ is at most $(n-1)$-dimensional, we can find a unique vector $\smash{\tilde{\mu}_a}$ that is perpendicular to the hyperplane $\smash{\tilde{\Pi}}$ and that extends from the origin to the hyperplane. By definition, for any vector $\nu_a \in \tilde{\Pi}$, we can write the orthogonality condition $\smash{\tilde{\mu}_a (\tilde{\mu}^a - \nu^a) = 0}$, which can be expressed as $\smash{\nu_a \tilde{\mu}^a = (\tilde{\mu})^2}$. As by construction any of the $m$ vectors $\smash{(\mu_{i})_a}$ is one of the vectors $\smash{\nu^a}$, we conclude that the vector $\smash{\tilde{\mu}_a}$ is a solution to the defining equation $\smash{(\mu_{i})_a {\overline{\gamma}}^a = (\overline{\gamma})^2}$. Therefore, we can write the inequalities
\begin{equation}
    (\hat{\gamma}_{\infty})^2 = \mu_{\mathrm{CH}}^2 \geq (\tilde{\mu})^2.
\end{equation}
A complete saturation of the inequalities, namely the condition $\smash{(\hat{\gamma}_{\infty})^2 = \mu_{\mathrm{CH}}^2 = (\tilde{\mu})^2}$, takes place when the distance of the coupling convex hull from the origin coincides with the distance of the convex-hull hyperplane from the origin. A graphical interpretation of the bound for positive-definite potential terms is depicted in figs. \ref{fig.: 2by2 convex-hull picture 1} and \ref{fig.: 2by2 convex-hull picture 2}.

As a conclusive observation, we note that the discussion above evidences that the bound in eq. (\ref{optimal epsilon bound}) is never weaker than the alternative bound in eq. (\ref{alternative epsilon bound}), taking $\smash{(\tilde{\mu})^2 = (\overline{\gamma}_\infty)^2}$ (though there may be other solutions to eq. (\ref{alternative epsilon bound}), in principle). Nevertheless, such alternative bound may be easier to compute analytically. Furthermore, identifying scenarios in which $\smash{(\hat{\gamma}_\infty)^2 > (\tilde{\mu})^2}$ may be helpful for two reasons. First, these cases manifestly feature a subset of dominating potentials, as can be seen in the basis that optimizes the bound of eq. (\ref{optimal epsilon bound}). Second, such scenarios cannot correspond to scaling cosmologies, according to the discussion of partial moduli stabilization in subsec. \ref{subsec: partial moduli stabilization} and of scaling cosmologies in terms of the coupling convex hull in subsubsec. \ref{subsubsec: coupling convex hull for scaling cosmologies}. For these reasons, it makes sense to keep the bound of eq. (\ref{alternative epsilon bound}) into consideration even though it can be weaker than the bound in eq. (\ref{optimal epsilon bound}).

If we have an overall positive potential that nonetheless has terms of both signs, to check the bound in eq. (\ref{alternative epsilon bound}) we still use the vector of square length $\smash{(\tilde{\mu})^2}$. Such a length is indeed unchanged since it stems from a geometric condition on all the couplings, and therefore it also involves the coefficients of the negative-definite potentials (in appendix \ref{app: late-time cosmological attractors}, see corollary \ref{corollary: (X)^2-bound} and remarks \ref{remark: Xbar{X}-bound sign independence} and \ref{remark: alternative f-bound}). Again, the simple notion of convex hull by itself is not always helpful anymore, since physically the bounds of eqs. (\ref{epsilon bound}, \ref{optimal epsilon bound}) and (\ref{alternative epsilon bound}) are formulated in terms of $\smash{(\hat{\gamma}_\infty)^2}$ and $\smash{(\tilde{\mu})^2}$, respectively, but we can only see that
\begin{equation}
    \mu_{\mathrm{CH}}^2 \geq (\hat{\gamma}_\infty)^2 \geq (\tilde{\mu})^2.
\end{equation}
Rather, what is helpful is the notion of the convex-hull hyperplane -- with the convex hull being defined here with respect to all the couplings --, since it allows one to always write at least the alternative bound in eq. (\ref{alternative epsilon bound}). It also remains true that, if $\smash{(\tilde{\mu})^2 = (\overline{\gamma}_\infty)^2}$, the bound in eq. (\ref{optimal epsilon bound}) is always the optimal one. A series of examples is in figs. \ref{fig.: 2by2 convex-hull picture 3}, \ref{fig.: 2by2 convex-hull picture 4} and \ref{fig.: 2by2 convex-hull picture 5}.

The lesson we learn from this discussion is the following. For positive-definite scalar potentials, the coupling convex hull provides a general visual interpretation of the lowest bound on the $\epsilon$-parameter. Instead, for scalar potentials involving terms of both signs -- even if the total potential is positive -- the guiding principles should be the bounds in eqs. (\ref{epsilon bound}, \ref{optimal epsilon bound}) and (\ref{alternative epsilon bound}), since the distance from the origin to the convex hull may induce an inaccurate estimation of the lowest bound.

\subsubsection{Scaling cosmologies} \label{subsubsec: coupling convex hull for scaling cosmologies}

If the exponential couplings are such that the late-time solutions are scaling cosmologies, based on the discussion of subsec. \ref{subsec: scaling cosmologies as late-time attractors}, then we can do more as we have analytic knowledge of the time evolution of each operator. In particular, let us consider an $m$-field $m$-term potential -- that is: we assume $m=n$. In this case, the $\smash{(\mu_i)_a}$-components make up $m$ different $m$-dimensional real vectors, i.e. $\mu_i \in \mathbb{R}^m$ for $i=1,\dots,m$, and the corresponding convex hull is contained in the $(m-1)$-dimensional hyperplane $\smash{\tilde{\Pi}}$ that passes through each point $\mu_i$. Such hyperplane is defined by the equation
\begin{align*}
    \tilde{\Pi}: \quad \sigma^a \nu_a + \rho = 0,
\end{align*}
for some constants $\sigma^a, \rho \in \mathbb{R}$. A way to determine such constants is to impose the defining condition that all points $\mu_i$ do belong to the plane, which amounts to solving the $m$ equations $\smash{\sigma^a (\mu_i)_a + \rho = 0}$. In vector notation, we can write $\smash{\gamma \, \underline{\sigma} = - \rho \, \underline{e}}$, where we used the definition of the vectors $\smash{\smash{\underline{\mu}}_i}$ and we defined the vector $\smash{\underline{e}}$ with components $\smash{e_i = 1}$. A solution to this is $\smash{\underline{\sigma} = - \rho \, \gamma^{-1} \underline{e}}$. Such a vector $\smash{\underline{\sigma}}$ is orthogonal to the hyperplane, and it is thus easy to compute the squared distance of the latter from the origin to be $\smash{(\tilde{\mu})^2 = d^2(\underline{0}, \Pi) = \rho^2 / \underline{\sigma}^T \cdot \underline{\sigma}}$, i.e.
\begin{align*}
     (\tilde{\mu})^2 = \dfrac{1}{\displaystyle \sum_{i=1}^m \sum_{j=1}^m \delta_{ab} \, (\gamma^{-1})^{a i} (\gamma^{-1})^{b j}}.
\end{align*}
Remarkably, such a square length $\smash{(\tilde{\mu})^2}$ provides an exact measure of the scaling-solution $\epsilon$-parameter of eq. (\ref{rank-condition scale-factor power}), being $\smash{\epsilon = [(d-2)/4] \, (\tilde{\mu})^2}$, whenever the hyperplane distance vector $\smash{\tilde{\mu}}$ intersects the convex hull itself, which implies $\smash{(\hat{\gamma}_\infty)^2 = (\tilde{\mu})^2 = \mu_{\mathrm{CH}}^2}$. In other words, in such cases, the bound in eq. (\ref{optimal epsilon bound}) is saturated by the scaling-cosmology attractor, with
\begin{equation}
    \epsilon = \dfrac{d-2}{4} \, (\hat{\gamma}_{\infty})^2 = \dfrac{d-2}{4} \, \mu_{\mathrm{CH}}^2 = \dfrac{d-2}{4} \, (\tilde{\mu})^2.
\end{equation}
However, it is not necessarily the case that the hyperplane distance vector $\smash{\tilde{\mu}}$ intersects the convex hull itself. If it is not, nevertheless, we are able to prove that we do not automatically have convergence to the scaling solution. In fact, in view of eqs. (\ref{rank-condition scalar-field trajectories}, \ref{rank-condition scale-factor power}), and by rotating coordinates if needed, the identities\footnote{To check this explicitly, one can compare with the expression of the critical points in app. \ref{app: late-time cosmological attractors}: see eqs. (\ref{eq.: X-critical point}, \ref{eq.: Y-critical point}).}
\begin{align*}
    \tilde{\mu}_a \sum_{i=1}^m \lambda^i = \sum_{i=1}^m (\mu_i)_a \lambda^i
\end{align*}
show clearly that, because in these cases there is at least one $a$-direction such that $\smash{\tilde{\mu}_a < (\mu_i)_a}$ for all $i$-terms, there must be at least one $i$-index with $\lambda^i < 0$: hence, there is no contradiction, since the scaling solutions of eqs. (\ref{power-law scale factor}, \ref{rank-condition scalar-field trajectories}, \ref{rank-condition scale-factor power}) are not guaranteed to be attractors. In such cases, the lower bounds in eqs. (\ref{optimal epsilon bound}, \ref{alternative epsilon bound}) are the strongest analytic bounds that we are aware of. Physical intuition suggests that in such cases the potentials corresponding to negative $\lambda^i$s would be asymptotically subdominant. This is because there is no field rotation such that the exponents for all of the fields would not be subdominant. Therefore, effectively one can truncate the potential, getting rid of all the terms corresponding to $i$-indices with negative $\lambda^i$. This is also suggested by earlier perturbative analyses and numerical checks \cite{Guo:2003eu, Guo:2003rs, Collinucci:2004iw, Hartong:2006rt}. A graphical interpretation of the bound for positive-definite scalar potential terms is, again, depicted in figs. \ref{fig.: 2by2 convex-hull picture 1} and \ref{fig.: 2by2 convex-hull picture 2}.

Of course, knowledge of the actual late-time solution extends our information on the $\epsilon$-parameter just based on the grounds of eqs. (\ref{optimal epsilon bound}, \ref{alternative epsilon bound}). A final graphical example is in fig. \ref{fig.: 2by2 convex-hull picture 6}.

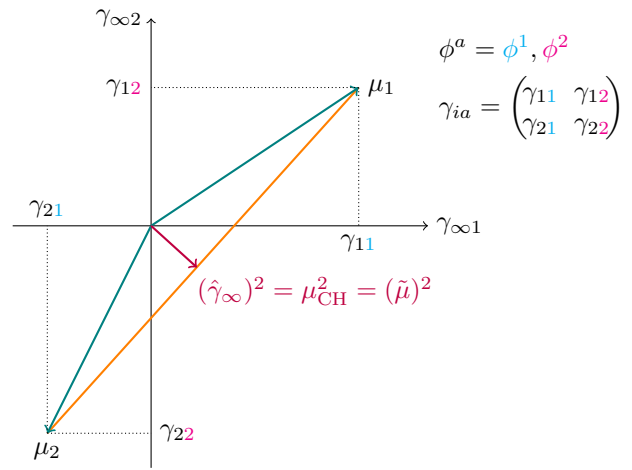
\begin{figure}[ht]
    \centering
    \begin{tikzpicture}[xscale=0.92,yscale=0.92,every node/.style={font=\normalsize}]
    
    \node[align=left] at (5.5,2){$\phi^a = {\color{cyan} \phi^1}, {\color{magenta} \phi^2}$ \\[1.0ex] $\gamma_{ia} = \matr{\gamma_{1 \color{cyan} 1}}{\gamma_{1 \color{magenta} 2}}{\gamma_{2 \color{cyan} 1}}{\gamma_{2 \color{magenta} 2}}$};
    
    \draw[->] (-2,0) -- (4,0) node[right]{$\gamma_{\infty 1}$};
    \draw[->] (0,-3.5) -- (0,3) node[left]{$\gamma_{\infty 2}$};
    
    \draw[orange, thick] (-1.5,-3) -- (3,2);
    
    \draw[->, thick, teal] (0,0) -- (3,2) node[right,black]{$\mu_1$};
    \draw[densely dotted] (3,2) -- (3,0) node[below]{$\gamma_{1 \color{cyan} 1}$};
    \draw[densely dotted] (3,2) -- (0,2) node[left]{$\gamma_{1 \color{magenta} 2}$};

    \draw[->, thick, teal] (0,0) -- (-1.5,-3) node[below,black]{$\mu_2$};
    \draw[densely dotted] (-1.5,-3) -- (-1.5,0) node[above]{$\gamma_{2 \color{cyan} 1}$};
    \draw[densely dotted] (-1.5,-3) -- (0,-3) node[right]{$\gamma_{2 \color{magenta} 2}$};

    \draw[->, thick, purple] (0,0) -- (120/181,-109/181) node[below right]{$\!\!(\hat{\gamma}_\infty)^2 = \mu_{\mathrm{CH}}^2 = (\tilde{\mu})^2$};
    
    \end{tikzpicture}
    \caption{A representation of the acceleration bound for a positive-definite multi-field multi-exponential potential in which all the coupling vectors lie in different hyperquadrants. If $\smash{\mathrm{rank} \, \gamma_{ia} = m}$, although $\smash{(\gamma_\infty})^2=0$, in the optimal basis the bound is non-trivial, with $\smash{(\hat{\gamma}_\infty})^2>0$. Moreover, due to our knowledge of a late-time convergence to the asymptotic cosmology, we can compute exactly the late-time $\epsilon$-parameter.}
    \label{fig.: 2by2 convex-hull picture 6}
\end{figure}

\subsection{de Sitter conjecture}

A core idea of the Swampland Program is the (refined) de Sitter conjecture \cite{Obied:2018sgi, Ooguri:2018wrx}, by which scalar potentials that are consistently coupled with quantum gravity are bounded as
\begin{equation} \label{de Sitter conjecture}
    \gamma \geq \gamma_{\mathrm{dS}},
\end{equation}
where $\gamma$ is the potential gradient norm defined in eq. (\ref{scalar-potential gradient norm}) and $\gamma_{\mathrm{dS}}$ is an unknown order-$1$ constant. Several arguments exist in the literature that propose the value of this constant \cite{Bedroya:2019snp, Rudelius:2021azq}. Here, we relate $\gamma_{\mathrm{dS}}$ to the constant $\smash{\hat{\gamma}_\infty}$, independently on whether the late-time solution is a scaling cosmology. As usual, we stress that, in general, there is no intrinsic obstruction to the slope of a multi-exponential potential. Rather, any Swampland bound should come from the constraints that quantum gravity would possibly impose on the constant $\smash{\hat{\gamma}_\infty}$.

\subsubsection{Preliminary observations}

For positive-definite scalar potentials, we can bound $\gamma$ exactly. In the basis that optimizes the late-time acceleration bound, the potential takes the form in eq. (\ref{optimal-bound basis potential}). Therefore, we can write the chain of inequalities
\begin{align*}
    \dfrac{1}{\kappa_d^2 V^2} \dfrac{\der V}{\der \phi_a} \dfrac{\der V}{\der \phi^a} & \geq \dfrac{1}{\kappa_d^2 V^2} \dfrac{\der V}{\der \hat{\varphi}} \dfrac{\der V}{\der \hat{\varphi}} \geq (\hat{\gamma}_\infty)^2.
\end{align*}
This implies the off-shell inequality
\begin{equation}
    \gamma \geq \gamma_{\mathrm{dS}} = \hat{\gamma}_\infty.
\end{equation}
In particular, if the $d$-dimensional dilaton is not stabilized, this implies $\gamma \geq 2 / \sqrt{d-2}$. Below, we outline a proposal to characterize the late-time de Sitter coefficient on shell, while also relaxing the assumptions on the signs of the terms in the potential.

By definition, we know that the potential gradient norm and the scalar-potential directional derivative satisfy the inequality $\smash{\gamma \geq \ab \gamma_\star \ab}$, as in eq. (\ref{scalar-potential gradient inequality}). If $\epsilon=0$, then by eqs. (\ref{directional derivative and acceleration parameter}, \ref{scalar-potential gradient norm and acceleration parameter}) we immediately get $\gamma = \gamma_\star = 0$. This is obvious because a vanishing $\epsilon$-parameter corresponds to a de Sitter stationary point. Here we are interested in characterizing scenarios with $\epsilon > 0$. In general, by the universal late-time bound $\epsilon \geq [(d-2)/4] \, (\hat{\gamma}_\infty)^2$ of eq. (\ref{optimal epsilon bound}), we know that
\begin{equation} \label{general de Sitter-coefficient bound}
    \gamma \geq \ab \gamma_\star \ab \geq \hat{\gamma}_\infty \biggl\ab 1 - \dfrac{1}{2} \, \dfrac{\eta}{(d-1) - \epsilon} \biggr\ab.
\end{equation}
This is a non-trivial inequality and it is the starting point for our discussion of the constant $\smash{\gamma_{\mathrm{dS}}}$. 
If $\smash{\hat{\gamma}_\infty = 0}$, we get the trivial inequality $\gamma \geq 0$, so here we focus on cases with $\smash{\hat{\gamma}_\infty>0}$.

To facilitate the discussion below, we recall that, by the late-time bounds reviewed in subsec. \ref{subsec: bounds on late-time cosmic-acceleration}:
\begin{itemize}
    \item the Hubble parameter is bounded from below as
    \begin{equation} \label{Hubble-parameter lower bound}
        H \overset{t \sim \infty}{\gtrsim} \dfrac{1}{d-1} \dfrac{1}{t};
    \end{equation}
    \item if $\smash{\Gamma_d^2 > (\hat{\gamma}_\infty)^2>0}$, the Hubble parameter is bounded from above as
    \begin{equation} \label{Hubble-parameter upper bound}
        H \overset{t \sim \infty}{\lesssim} \dfrac{4}{d-2} \dfrac{1}{(\hat{\gamma}_\infty)^2} \,  \dfrac{1}{t}.
    \end{equation}
\end{itemize}
Although the Hubble parameter is bounded, we do not know bounds on its first and second derivatives, therefore eqs. (\ref{Hubble-parameter lower bound}, \ref{Hubble-parameter upper bound}) do not immediately inform us further on the bound in eq. (\ref{general de Sitter-coefficient bound}). Nevertheless, these are two powerful conditions that we exploit below to verify when an inequality of the form of the de Sitter conjecture in eq. (\ref{de Sitter conjecture}) is attainable at asymptotically-late times. The case where $\smash{\hat{\gamma}_\infty \geq \Gamma_d}$ is not discussed here since it represents a singular case where the late-time attractor is formally one with a vanishing potential.

\subsubsection{Positiveness of the de Sitter coefficient}

In view of eq. (\ref{general de Sitter-coefficient bound}), if $\epsilon > 0$, we can have a vanishing potential gradient norm $\gamma = 0$ only if $(d-1) - \epsilon = \eta/2$. Note that a vanishing gradient $\smash{\der V / \der \phi^a = 0}$ does not imply a de Sitter vacuum: if $\dot{\phi}^a \neq 0$, this corresponds to pure kination. In terms of the Hubble parameter, this equation takes the simple form
\begin{align*}
    d-1 + \dfrac{1}{2} \dfrac{\ddot{H}}{H \dot{H}} = 0.
\end{align*}
What we will do is check whether solutions $H_0 = H_0(t)$ to this equation are compatible with the cosmological solutions to eqs. (\ref{FRW-KG eq.}, \ref{Friedmann eq. 1}, \ref{Friedmann eq. 2}). As $\epsilon>0$, we know that $H, \dot{H} > 0$. Then we can write the equation of interest as
\begin{align*}
    \dfrac{\de}{\de t} \bigl[ (d-1) H^2 + \dot{H} \bigr] = 0,
\end{align*}
whose integration gives a family of solutions such that
\begin{align*}
    (d-1) H_0^2 + \dot{H_0} = k_1,
\end{align*}
for constants $k_1 \in \mathbb{R}$. This implies $\smash{\epsilon_0 = (d-1) - k_1 / H_0^2}$. Because any acceptable Hubble parameter must be such that $\smash{\lim_{t \to \infty} H(t) = 0}$ by eq. (\ref{Hubble-parameter upper bound}), as the $\epsilon$-parameter is also bounded as $\smash{[(d-2)/4] (\hat{\gamma}_\infty)^2 \leq \epsilon \leq d-1}$, we must further have $\smash{\lim_{t \to \infty} \dot{H}(t) = 0}$. So, we should fix the initial condition as $k_1=0$. A further integration thus gives
\begin{align*}
    H_0 = \dfrac{1}{(d-1) \, t + k_2},
\end{align*}
which corresponds to pure kination, as expected. Therefore, the cosmological equations cannot have $H = H_0$ as a solution unless $\epsilon = d-1$, which also gives $\eta = 0$. This means that, unless $\epsilon = d-1$, non-zero values $\epsilon>0$ and $\smash{\hat{\gamma}_\infty > 0}$ imply the strict inequality $\smash{\gamma > 0}$, i.e.
\begin{equation} \label{positive de Sitter coefficient}
    \gamma_{\mathrm{dS}} > 0.
\end{equation}
This is not a trivial result. It is true that if $\smash{\hat{\gamma}_\infty>0}$, the asymptotic $\epsilon$-parameter is strictly positive and therefore there cannot be an asymptotic de Sitter stationary point. However, one could in principle have $\smash{\gamma=0}$ due to the vanishing of the factor multiplying the overall term proportional to $\sqrt{\epsilon}$, which would correspond to kination. Nonetheless, here we are seeing that this is not generally the case. Also, we consider this as a propaedeutic step towards gaining analytical control over the coefficient $\smash{\gamma_{\mathrm{dS}}}$ in the way we sketch below.

\subsubsection{Heuristic limiting behavior of the de Sitter coefficient}
In view of the inequality of eq. (\ref{general de Sitter-coefficient bound}) and of the universal bound in eq. (\ref{optimal epsilon bound}), we can attempt to relate the de Sitter coefficient $\gamma_{\mathrm{dS}}$ to the coefficient $\smash{\hat{\gamma}_\infty}$. If the de Sitter conjecture in eq. (\ref{de Sitter conjecture}) is true, then we can try and define a positive constant $\gamma_{\mathrm{dS}} > 0$ as a solution to the equation
\begin{equation} \label{de Sitter coefficient limiting behavior}
    \hat{\gamma}_\infty \biggl[ 1 - \dfrac{1}{2} \, \dfrac{\eta}{(d-1) - \epsilon} \biggr] = \gamma_{\mathrm{dS}},
\end{equation}
assuming that $\eta \leq 2 [(d-1) - \epsilon]$. Here, $\smash{\gamma_{\mathrm{dS}}}$ is a constant that we see as a way to study the limiting behavior of the possible solutions to the field equations. Intuitively, we expect $\epsilon$ to converge to a value $\epsilon_\infty$ at large times, and thus, for the limiting behavior, we use eq. (\ref{directional derivative and acceleration parameter}) to speculate that the function $\smash{g(t) = 1 - (1/2) \, \eta(t)/[(d-1) - \epsilon(t)]}$ is approximately constant. This gives us precisely eq. (\ref{de Sitter coefficient limiting behavior}). Therefore, we can try to see whether the time dependence that $\epsilon$ and $\eta$ should have can be compatible with $\gamma$ being bounded from below by a constant, and what such constant should be.

It is convenient to rewrite eq. (\ref{de Sitter coefficient limiting behavior}) explicitly in terms of the Hubble parameter, getting
\begin{equation}
    \sigma (d-1) - (1-\sigma) \dfrac{\dot{H}}{H^2} = - \dfrac{1}{2} \dfrac{\ddot{H}}{H \dot{H}},
\end{equation}
where we defined $\smash{\sigma = 1 - \gamma_{\mathrm{dS}} / \hat{\gamma}_\infty}$ for brevity. We can also write the equation as
\begin{align*}
    \dfrac{1}{2} \, \dfrac{\de}{\de t} \Bigl[ H \dot{H} + \dfrac{2 \sigma}{3} (d-1) H^3 \Bigr] = \Bigl( \dfrac{3}{2} - \sigma \Bigr) \dot{H}^2.
\end{align*}
We denote solutions to this equation as $\smash{H = H_{1 - \sigma}}$ and we look for the conditions under which they are compatible with the known late-time behavior of the actual cosmological solutions to eqs. (\ref{FRW-KG eq.}, \ref{Friedmann eq. 1}, \ref{Friedmann eq. 2}).

The cases $\sigma=0,1,3/2$ are special, since they simplify the structure of the equation. As $\sigma=1$ gives the solution $H_0$ discussed before, we are motivated to test the ansatz $\smash{H_{1-\sigma} = 1/(a_{1-\sigma} t + b_{1-\sigma})}$. If $\sigma \neq 0,1,3/2$, such a function is a solution if $\smash{a_{1-\sigma} = d-1}$, which means
\begin{align*}
    H_{1-\sigma} = \dfrac{1}{(d-1) \, t + b_{1-\sigma}},
\end{align*}
for some real constant $\smash{b_{1-\sigma} \in \mathbb{R}}$. Again, we find that this solution is incompatible with the known behavior of the actual cosmological solutions. As the case $\sigma=1$ has already been studied, and the case $\sigma=3/2$ is irrelevant as physically we are interested in $\sigma \leq 1$, we only need to check the solutions for $\sigma = 0$. In this case, the solution to the differential equation is
\begin{align*}
    H_1 = \dfrac{1}{a_1 t + b_1},
\end{align*}
for two real constants $a_1, b_1 \in \mathbb{R}$. This is compatible with both conditions in eqs. (\ref{Hubble-parameter lower bound}, \ref{Hubble-parameter upper bound}), if $\smash{[(d-2)/4] (\hat{\gamma}_\infty)^2 \leq a_1 \leq d-1}$. In view of these results, mathematical and physical intuition lead us to the suggestion that
\begin{equation} \label{conjectured bound for de Sitter coefficient}
    \gamma_{\mathrm{dS}} = \hat{\gamma}_\infty.
\end{equation}
Indeed, the value $\sigma=0$ seems to be very special because it gives exactly an admissible $(1/t)$-behavior with the known late-time constraints. Moreover, scaling solutions, which we expect to be very general late-time attractors after suitable truncations (see subsecs. \ref{subsec: scaling cosmologies as late-time attractors}, \ref{subsec: further scaling solitions}, \ref{subsec: partial moduli stabilization} and \ref{subsec: coupling convex hull and cosmic acceleration}), are exactly such as to saturate this bound. As a final argument, positive-definite potentials give exactly the same condition off shell. How this off-shell inequality changes on-shell is however not obvious (also, that off-shell condition assumes positive-definite potentials, while here we do not have this assumption).

Unlike other results in this paper, the inequality in eq. (\ref{conjectured bound for de Sitter coefficient}) is only conjectural and not analytic, since it comes as a result of a speculation, we do not know about the uniqueness of our solution. We plan to come back to this in the future.

\section{Examples} \label{sec: examples}
In this section we discuss a few examples of what our late-time cosmological characterization implies for simple multi-field multi-exponential potentials.

\subsection{String-theoretic toy models}

To start, we discuss a few simple 2-field string-theoretic toy models, to show a number of intriguing properties of scaling cosmologies.

In a $d$-dimensional type-II compactification, let the internal space host a $q$- and a $p$-cycle supporting RR-$q$- and RR-$p$-form fluxes, respectively. If we assume that the only two dynamical scalar fields are the dilaton and a radion, this means that the total scalar potential is
\begin{equation}
    V = \Lambda_1 \, \e^{\kappa_d [\frac{d}{\sqrt{d-2}} \tdelta + \frac{10-d - 2q}{\sqrt{10-d}} \tsigma]} + \Lambda_2 \, \e^{\kappa_d [\frac{d}{\sqrt{d-2}} \tdelta + \frac{10-d - 2p}{\sqrt{10-d}} \tsigma]},
\end{equation}
in terms of the canonically-normalized $d$-dimensional dilaton $\smash{\tdelta}$ and string-frame radion $\smash{\tsigma}$. As $\mathrm{rank} \, \gamma_{ia} = m = n = 2$ if $p \neq q$ and the potential is positive-definite, we can look for scaling solutions. It turns out that, if $q < (10-d)/2$ and $p>(10-d)/2$, the scaling solution can be an attractor since $\lambda^1, \lambda^2 > 0$. In all such cases, the late-time cosmologies feature the late-time field-space trajectories
\begin{align*}
    \tdelta_*(t) & = \tdelta_0 - \dfrac{1}{\kappa_d } \, \dfrac{2 \sqrt{d-2}}{d} \; \mathrm{ln} \, \dfrac{t}{t_0}, \\
    \tsigma_*(t) & = \tsigma_0,
\end{align*}
with the $\epsilon$-parameter
\begin{align*}
    \epsilon = \dfrac{d^2}{4}.
\end{align*}
Clearly, there are two main observations to make: first, there cannot be cosmic acceleration, since $\epsilon > 1$ for $d>2$, as we knew already since the dilaton is not stabilized, in accordance with subsec. \ref{subsec: dilaton obstruction to acceleration}; second, the string-frame volume modulus is stabilized, in accordance with the discussion in subsec. \ref{subsec: partial moduli stabilization}. Moreover, we observe that the $\epsilon$-parameter saturates the bound of eq. (\ref{optimal epsilon bound}): this is an example of a general feature of scaling solutions that we discuss further in subsec. \ref{subsec: coupling convex hull and cosmic acceleration}. In the $\smash{(\sdil, \omega)}$-basis (notice that these fields are not canonically-normalized yet, but this is not going to be relevant for our conclusions), the scalar potential reads
\begin{align*}
    V = \Lambda_1 \, \e^{\frac{q-5}{2} \sdil - [2 q + \frac{2 (10-d)}{d-2}] \omega} + \Lambda_2 \, \e^{\frac{p-5}{2} \sdil - [2 p + \frac{2 (10-d)}{d-2}] \omega},
\end{align*}
and we can easily express the time-evolution law of the scalars by rotating (see appendix \ref{app: string dimensional reductions} for details) the solution in the $\smash{(\tdelta, \tsigma)}$-basis, obtaining
\begin{align*}
    \sdil_*(t) & = \sdil_0 - \dfrac{d-2}{d} \; \mathrm{ln} \, \dfrac{t}{t_0}, \\
    \omega_*(t) & = \omega_0 + \dfrac{d-2}{4d} \; \mathrm{ln} \, \dfrac{t}{t_0}.
\end{align*}
As time passes, the string coupling decreases and the Einstein-frame volume increases, whilst the string-frame volume is stable. Therefore, as well as being mathematically rigorous, this toy-model solution is also physically acceptable. As a final consistency test, we check the issue of scale separation. On the one hand, the Hubble length evolves trivially as $l_H = 1/H = \epsilon t$; on the other hand, the Kaluza-Klein length is in eq. (\ref{KK-scale evolution}). As we know dilaton and the radion time dependence, we can express the time-evolution law for the Hubble and the Kaluza-Klein scale as
\begin{align*}
    l_H & = l_H (t_0)  \, \dfrac{t}{t_0}, \\
    l_{\mathrm{KK}, d} & = l_{\mathrm{KK}, d} (t_0) \, \Bigl(\dfrac{t}{t_0}\Bigr)^{\frac{2}{d}},
\end{align*}
which means that the compactification ansatz is self-consistent, since the Kaluza-Klein length grows less quickly than the Hubble length. One should not worry about the volume growing indefinitely, since, by how we defined the Kaluza-Klein scale, this means that the hierarchy between the Planck mass and the Kaluza-Klein mass grows over time, too. Asymptotically far in time, this can conceptually be compatible with the observed hierarchies appearing in nature. The dimensionally-reduced string length, corresponding to the inverse string-excitation tower mass scale (see eq. (\ref{d-dim. string scale})) grows like the Kaluza-Klein length, i.e. $\smash{l_{s, d} (t) = l_{s, d} (t_0) \, (t/t_0)^{\frac{2}{d}}}$.

As another instructive string-theoretic toy model, we consider the non-supersymmetric heterotic $\mathrm{SO}(16) \!\times\! \mathrm{SO}(16)$-theory compactified on a $(10-d)$-dimensional space with non-trivial curvature. If the internal curvature is positive, the scalar potential reads \cite{Ginsparg:1986wr, Baykara:2022cwj}
\begin{equation} \label{het-SO(16)xSO(16) potential: R+C}
    V = \Lambda_R \, \e^{\kappa_d [\frac{2}{\sqrt{d-2}} \tdelta - \frac{2}{\sqrt{10-d}} \tsigma]} + \Lambda_C \, \e^{\kappa_d [\frac{d}{\sqrt{d-2}} \, \tdelta + \sqrt{10-d} \, \tsigma]}.
\end{equation}
Here, the first term, with $\Lambda_R < 0$, is generated by the non-trivial internal curvature $\smash{\breve{R}_{(10-d)} > 0}$, while the second term, with $\Lambda_C > 0$, is the well-known Casimir energy generated by the absence of a supersymmetric matching of bosonic and fermionic degrees of freedom in the tower of string states \cite{AlvarezGaume:1986jb, Dixon:1990pc}. In this case, we cannot prove the convergence to a scaling solution by the discussion in subsec. \ref{subsec: scaling cosmologies as late-time attractors}; however, we can bound the late-time $\epsilon$-parameter by eq. (\ref{epsilon bound}) (or, of course, eq. (\ref{optimal epsilon bound})). In the $\smash{(-\tdelta, -\tsigma)}$-basis, we find that the coefficients
\begin{align*}
    \gamma_{\infty}^{-\tdelta} & = \dfrac{d}{\sqrt{d-2}}, \\
    \gamma_\infty^{-\tsigma} & = \sqrt{10-d},
\end{align*}
give $\smash{(\gamma_\infty)^2 = 4/(d-2) + 12 > \Gamma_d^2}$, which means that the asymptotic late-time solution to the field equations, if the total potential remains positive, is bound to give
\begin{align*}
    \epsilon = d-1.
\end{align*}
In such a case, we also stress that there is no evidence that the field-space evolution is aligned to the gradient-flow trajectory since the scaling solution is the degenerate one, therefore the discussion of subsubsec. \ref{subsubsec: accidental gradient-flow trajectories} does not apply. In case the internal curvature is negative, since $\Lambda_R, \Lambda_C > 0$ and since it turns out that $\lambda^1, \lambda^2 > 0$, the scaling solution can be an attractor, which corresponds to the trajectory
\begin{align*}
    \tdelta_*(t) & = \tdelta_0 - \dfrac{1}{\kappa_d} \dfrac{12-d}{10} \sqrt{d-2} \; \mathrm{ln} \, \dfrac{t}{t_0}, \\
    \tsigma_*(t) & = \tsigma_0 + \dfrac{1}{\kappa_d} \dfrac{d-2}{10} \sqrt{10-d} \; \mathrm{ln} \, \dfrac{t}{t_0},
\end{align*}
with
\begin{align*}
    \epsilon = \dfrac{1}{1 - \dfrac{3 (d-2)}{25}} > 1.
\end{align*}
As we already knew, there is no accelerated expansion since the $d$-dimensional dilaton is there as a rolling scalar. In the $\smash{(\tsdil, \tomega)}$-basis, we find
\begin{align*}
    \tsdil_*(t) & = \tsdil_0 - \dfrac{1}{\kappa_d} \dfrac{d-2}{10 \sqrt{2}} \; \mathrm{ln} \, \dfrac{t}{t_0}, \\
    \tomega_*(t) & = \tomega_0 + \dfrac{1}{\kappa_d} \dfrac{\sqrt{2}}{4} \, \sqrt{10-d} \sqrt{d-2} \; \mathrm{ln} \, \dfrac{t}{t_0},
\end{align*}
which is physically consistent, too. It can also be instructive to reconsider the problem in the rotated basis in which one field is aligned in the direction followed by the field-space trajectories, along the lines of subsec. \ref{subsubsec: time-measuring and quintessence-like scalars}. It is convenient to rotate the fields $\smash{\tdelta^- = - (\tdelta - \tdelta_0)}$ and $\smash{\tsigma^+ = \tsigma - \tsigma_0}$: the angle spanned in the $\smash{(\tdelta^-, \tsigma^+)}$-plane is $\alpha(d) = \mathrm{arctan} \, \bigl[ \sqrt{d-2} \sqrt{10-d}/(12-d) \bigr]$, and one can rotate the field basis by defining
\begin{align*}
    \tvec{\tdelta^-}{\tsigma^+} = \left(\!\! \begin{array}{cc}
        \mathrm{cos} \, \alpha(d) & - \mathrm{sin} \, \alpha(d) \\
        \mathrm{sin} \, \alpha(d) & \mathrm{cos} \, \alpha(d)
    \end{array} \!\!\right) \tvec{\tilde{\varphi}}{\tilde{\xi}},
\end{align*}
obtaining a scalar potential written as
\begin{align*}
    V = \Lambda_R \, \e^{- \kappa_d \tilde{\gamma}_* \tilde{\varphi} - \frac{2 \kappa_d \tilde{\xi}}{\sqrt{10-d} \sqrt{1 + 3(10-d)}}} \\
    + \Lambda_C \, \e^{- \kappa_d \tilde{\gamma}_* \tilde{\varphi} + \frac{6 \sqrt{10-d} \kappa_d \tilde{\xi}}{\sqrt{1+3(10-d)}}} & ,
\end{align*}
where $\gamma_*$ is the directional derivative defined in eq. (\ref{directional derivative}), which in this case reads
\begin{align*}
    \gamma_* = \dfrac{10}{\sqrt{1 + 3(10-d)} \sqrt{d-2}}.
\end{align*}
Finally, we can find the scalar-field trajectories, i.e.
\begin{align*}
    \tilde{\varphi}(t) & = \tilde{\varphi}_0 - \dfrac{1}{\kappa_d} \, \dfrac{2}{\tilde{\gamma}_*} \; \mathrm{ln} \, \dfrac{t}{t_0}, \\
    \tilde{\xi}(t) & = \tilde{\xi}_0,
\end{align*}
either by rotating the solutions in the original basis or by solving the field equations in the new basis. In the rotated basis, the bound of eq. (\ref{epsilon bound}) not only takes the form of eq. (\ref{optimal epsilon bound}), but is actually saturated, as is in fact expected from the discussion of subsec. \ref{subsec: coupling convex hull and cosmic acceleration}. This model is interesting because it shows that we can characterize any exponential potential, irrespective of its macroscopic origin, including quantum-generated effects.

\subsection{Attempts for accelerated expansion}

Based on F-theory constructions, ref. \cite{Calderon-Infante:2022nxb}\footnote{One can compare with ref. \cite{Calderon-Infante:2022nxb} through the following dictionary. By denoting non-canonically-normalized fields as $s^a$, with $a=1,\dots,n$, and by ordering the elements of the coefficient set as $\smash{\mathcal{E} = (\underline{l}_i)_{i=1}^m}$, with the coefficient components being $(l_i)_a$, their kinetic action and potential
\begin{align*}
    S = \int_{\mathrm{X}_{1,3}} \tilde{*}_{1,3} \Biggl[ \sum_{a=1}^n \dfrac{1}{4 \kappa_4^2} \dfrac{d_a}{(s^a)^2} \, \der s^a \der s^a - \sum_{i=1}^m A_i \prod_{a=1}^n (s^a)^{(l_i)_a} \Biggr]
\end{align*}
is mapped into the potential in eq. (\ref{generic exponential potential}) for canonical scalars, along with the obvious identity $\smash{A_i = \Lambda_i}$, via the identifications
\begin{align*}
    s^a & = \e^{- \sqrt{\frac{2}{d_a}} \; \kappa_4 \phi^a}, \\
    (l_i)_a & = \sqrt{\frac{d_a}{2}} \; \gamma_{ia}.
\end{align*}
For the coupling convex hull, we have $\smash{(\mu_{i})_a = \gamma_{ia}}$. A power-law evolution $\smash{s^a(\lambda) = \sigma^a \lambda^{\beta^a}}$ translates into a logarithmic evolution $\smash{\kappa_4 \phi^a (\lambda) = - \sqrt{d_a/2} \; \mathrm{ln} \, \sigma^a - \sqrt{d_a/2} \; \beta^a \, \mathrm{ln} \, \lambda}$, with $\lambda = \lambda(t)$ a reparameterized time. In a scaling cosmology with trajectories $\smash{\phi^a (t) = \phi_0^a + (\alpha^a/\kappa_4) \, \mathrm{ln} \, t/t_0}$, we have $\smash{t/t_0 = \lambda^{- \sqrt{d_a/2} \, \beta^a / \alpha^a}}$.}
advocates for realizations of late-time string-theoretic accelerated expansion. Under the assumption that the Einstein-frame volume can be stabilized and that this stabilization does not involve the presence of new scalar-potential terms, compactifications are hypothesized to exist in which two scalar fields $\smash{\tilde{\phi}^1}$ and $\smash{\tilde{\phi}^2}$ have a 4-dimensional scalar potential\footnote{This is the potential in ref. \cite[eq. (3.24)]{Calderon-Infante:2022nxb}: the fields $s$ and $u$ are canonically-normalized as $\smash{s = \e^{-\kappa_4 \sqrt{2} \tilde{\phi}^1}}$ and $\smash{u = \e^{-\kappa_4 \sqrt{2/3} \tilde{\phi}^2}}$.}
\begin{equation}
    V = \Lambda_1 \, \e^{\kappa_4 \, \sqrt{2} \, \tilde{\phi}^1 - \kappa_4 \, \sqrt{\frac{2}{3}} \, \tilde{\phi}^2} + \Lambda_2 \, \e^{- \kappa_4 \, \sqrt{2} \, \tilde{\phi}^1 + \kappa_4 \, \sqrt{6} \, \tilde{\phi}^2}.
\end{equation}
To start, we notice that in this case the bound in eq. (\ref{epsilon bound}) is trivial (i.e. $\epsilon \geq 0$), since $\smash{\gamma_\infty^{\tilde{\phi}^1} = \gamma_\infty^{\tilde{\phi}^2}=0}$. However, we can see that the equation of convex-hull hyperplane is $\gamma_2=-(2\sqrt{3}/3) \, \gamma_1-\sqrt{6}/3$: the orthogonal line is $\gamma_2=(\sqrt{3}/2) \, \gamma_1$, with an intersection at $(\gamma_1, \gamma_2) = (-2\sqrt{2}/7,-\sqrt{6}/7)$, which gives $\epsilon \geq 1/7$ in view of the optimal bound in eq. (\ref{optimal epsilon bound}). Then, we know that $\smash{\Lambda_1, \Lambda_2 > 0}$ and we can check that $\smash{(\lambda^1, \lambda^2) = (9/4,9/4)}$: therefore, we know that asymptotically the field-equation solution is the proper scaling solution. In particular, we can compute the $\epsilon$-parameter and it reads
\begin{align*}
    \epsilon = \dfrac{1}{7},
\end{align*}
corresponding to the normalized scalar-potential directional derivative $\smash{\gamma_* = \sqrt{2/7}}$. The field-space trajectories are found to be
\begin{align*}
    \tilde{\phi}^1_*(t) & = \tilde{\phi}^1_0 - \dfrac{1}{\kappa_4} \, 2 \sqrt{2} \; \mathrm{ln} \, \dfrac{t}{t_0}, \\
    \tilde{\phi}^2_*(t) & = \tilde{\phi}^2_0 - \dfrac{1}{\kappa_4} \, \sqrt{6} \; \mathrm{ln} \, \dfrac{t}{t_0}.
\end{align*}
As we know, this is not a slow-roll solution, despite having an accelerated expansion with $\epsilon = 1/7 < 1$. Through the normalized scalar-potential gradient, one would get
\begin{align*}
    \gamma(\tilde{\phi}^1, \tilde{\phi}^2) = \sqrt{\dfrac{8}{3} \dfrac{V_1^2}{V^2} + 8 \dfrac{V_2^2}{V^2} - 8 \dfrac{V_1 V_2}{V^2}},
\end{align*}
which is not a constant but rather a field-dependent quantity, in principle. However, in a scaling cosmology all potentials evolve as $V_{1,2}(t) = V_{1,2}(t_0) (t_0/t)^2$, giving
\begin{align*}
    \gamma[\tilde{\phi}^1_*(t), \tilde{\phi}^2_*(t)] = \sqrt{\dfrac{8}{3} \dfrac{V_1^2(t_0)}{V^2(t_0)} + 8 \dfrac{V_2^2(t_0)}{V^2(t_0)} - 8 \dfrac{V_1(t_0) V_2(t_0)}{V^2(t_0)}},
\end{align*}
which, in principle, depends on the initial conditions. It is possible to minimize the coefficient $\smash{\gamma(\tilde{\phi}^1, \tilde{\phi}^2)}$, which is given by the trajectory
\begin{align*}
    \zeta: \quad \tilde{\phi}^1_{\zeta}(\tilde{\phi}^2) = \dfrac{2 \sqrt{3}}{3} \, \tilde{\phi}^2 + \dfrac{\sqrt{2}}{4} \, \mathrm{ln} \, \dfrac{9 \Lambda_2}{5 \Lambda_1}.
\end{align*}
In fact, for initial conditions that are such that $\smash{\tilde{\phi}^1_0 = (2 \sqrt{3}/3) \, \tilde{\phi}^2_0 + (\sqrt{2}/4) \, \mathrm{ln} \, (9 \Lambda_2 / 5 \Lambda_1)}$, this is the same as the scaling-solution trajectory $\smash{(\tilde{\phi}^1,\tilde{\phi}^2) = (\tilde{\phi}^1_*,\tilde{\phi}^2_*)}$, and one finds a proportionality $\smash{5 V_1(t_0) = 9 V_2(t_0)}$ which gives $\smash{\gamma(\tilde{\phi}^1_\zeta(\tilde{\phi}^2), \tilde{\phi}^2) = \sqrt{2/7}}$. The compatibility condition in eq. (\ref{compatibility condition}) ensures that the appropriate choice of initial time for a given set of initial conditions is such that this is the proportionality factor for a scaling cosmology. Similarly to what has been discussed for the accidental gradient-flow trajectory, eq. (\ref{scaling-cosmology potential gradient norm}) can be written both in terms of eq. (\ref{scaling-cosmology potential gradient}) and through a direct computation, which involves the compatibility condition in eq. (\ref{compatibility condition}). This is a general result for scaling cosmologies.\footnote{Taking this example as a useful instance for a comparison, we highlight a few crucial facts, in relationship with ref. \cite{Calderon-Infante:2022nxb}.

To start, there is no slow roll (see subsec. \ref{subsec: bounds on late-time cosmic-acceleration}, subsubsecs. \ref{subsubsec: scalar-potential directional derivative} and \ref{subsubsec: accidental gradient-flow trajectories}, and also subsec. \ref{subsec: scalar-potential derivatives and acceleration}). Moreover, the trajectory of a scaling cosmology follows the gradient flow by accident (see subsubsec. \ref{subsubsec: accidental gradient-flow trajectories}), but it has a different time dependence. Finally, unless one is certain to be dealing with a cosmology where $\eta=0$ and $\Omega=0$, one cannot compute $\epsilon$ through a directional derivative or the norm of the gradient of the potential (and the potential gradient norm and directional derivative do not coincide; see the discussion of eqs. (\ref{directional derivative and acceleration parameter}, \ref{scalar-potential gradient norm and non-geodesity factor})).

To study exponential cosmologies, we also remark that the fact that scaling solutions are late-time attractors is perturbatively and numerically argued already in refs. \cite{Halliwell:1986ja, Copeland:1997et, Malik:1998gy, Coley:1999mj, Guo:2003eu, Guo:2003rs, Bergshoeff:2003vb, Collinucci:2004iw, Kim:2005ne, Hartong:2006rt}, and proven independently of initial conditions in the current paper.
}

Although so far we have seen examples where conditions \ref{condition (a)}-\ref{condition (b)} are always met, this is of course not always the case. For instance, we can consider the 4-dimensional scalar potential
\begin{equation}
    V = \Lambda_1 \, \e^{\kappa_4 \, \sqrt{2} \, \tilde{\phi}^1 + \kappa_4 \, \sqrt{\frac{2}{3}} \, \tilde{\phi}^2} + \Lambda_2 \, \e^{\kappa_4 \, \sqrt{2} \, \tilde{\phi}^1 + \kappa_4 \, \sqrt{6} \, \tilde{\phi}^2},
\end{equation}
with $\Lambda_1, \Lambda_2 > 0$. In this case, the bound in eq. (\ref{epsilon bound}) gives $\smash{\epsilon \geq 4/3}$. This is as much as we can say with certainty, since a closer look reveals $\smash{(\lambda^1,\lambda^2) = (3/4,-1/4)}$, which means that there is no proof for the proper scaling solution to be the late-time attractor. As explained above, we do not even have a proof that the late-time potential would be dominated by the one term with positive $\lambda$. However, let us assume this to be the case -- since the term with negative $\lambda$ appears to be asymptotically more suppressed -- and let us consider the potential\footnote{This potential is in ref. \cite[eq. (3.18)]{Calderon-Infante:2022nxb}, with $\smash{s = \e^{-\kappa_4 \sqrt{2} \tilde{\phi}^1}}$ and $\smash{u = \e^{-\kappa_4 \sqrt{2/3} \tilde{\phi}^2}}$.}
\begin{equation}
    V = \Lambda_1 \, \e^{\kappa_4 \, \sqrt{2} \, \tilde{\phi}^1 + \kappa_4 \, \sqrt{\frac{2}{3}} \, \tilde{\phi}^2} = \Lambda_1 \, \e^{\kappa_4 \, \frac{2 \sqrt{6}}{3} \, \tilde{\varphi}},
\end{equation}
where a field rotation\footnote{In particular, this is
\begin{align*}
    \tvec{\tilde{\varphi}}{\tilde{\xi}} = \matr{\dfrac{\sqrt{3}}{2}}{\dfrac{1}{2}}{-\dfrac{1}{2}}{\dfrac{\sqrt{3}}{2}} \tvec{\tilde{\phi}^1}{\tilde{\phi}^2},
\end{align*}} has mapped the 2-field problem into a 1-field problem. For this potential, it is trivial to check that the $\epsilon$-parameter is $\smash{\epsilon = 4/3}$. This potential has a unique late-time behavior, the scaling solution with $\smash{\tilde{\varphi} = \tilde{\varphi}_0 + (\sqrt{6}/2) \, \mathrm{ln} \, t/t_0}$.

If the potential is steep enough, we cannot have accelerated expansion. One example is the single-exponential 4-dimensional potential
\begin{equation}
    V = \Lambda \, \e^{\kappa_d \, 2 \sqrt{2} \, \varphi}.
\end{equation}
In this case, because $\smash{\gamma_\infty^2 = 8 > 6}$, the potential is sufficiently steep that the $\epsilon$-parameter approaches the asymptotic value $\epsilon = 3$ \cite{Rudelius:2022gbz, Shiu:2023nph}.\footnote{This potential is studied in ref. \cite{Calderon-Infante:2022nxb}, in the notation $V = \Lambda/{s u^3}$, upon the identifications $\smash{s = \e^{- \kappa_4 \sqrt{2} \tilde{\phi}^1}}$ and $\smash{u = \e^{- \kappa_4 \sqrt{2/3} \tilde{\phi}^2}}$, and after the field rotation
\begin{align*}
    \tvec{\tilde{\varphi}}{\tilde{\xi}} = \matr{\dfrac{1}{2}}{\dfrac{\sqrt{3}}{2}}{-\dfrac{\sqrt{3}}{2}}{\dfrac{1}{2}} \tvec{\tilde{\phi}^1}{\tilde{\phi}^2}.
\end{align*}} This is an instance with no concept of gradient flow, even in a 1-dimensional field space. Asymptotically, the field reaches the non-proper scaling solution $\smash{\varphi_*(t) = \varphi_0 - (2 \sqrt{3}/3) \; \mathrm{ln} \, t/t_0}$, corresponding to $V_*=0$, with a slope that is unrelated to the value $\smash{2 \sqrt{2} = (1 / \kappa_4 V) \, \der V / \der \varphi}$.

\subsection{Comments on gradient flow} \label{subsec: comments on gradient flow}

As an example that illustrates our earlier comments on gradient-flow trajectories, we can discuss the steepest-descent trajectory associated to the potential of the heterotic $\mathrm{SO}(16) \!\times\! \mathrm{SO}(16)$-theory compactified on a $(10-d)$-dimensional space with non-trivial negative curvature, in eq. (\ref{het-SO(16)xSO(16) potential: R+C}).

To find the steepest-descent trajectory of a surface $\smash{V = V (\tilde{\phi}^1(\lambda),\tilde{\phi}^2(\lambda))}$, one can define the parametric curve $\smash{\underline{\tau}(\lambda) = (\tilde{\phi}^1(\lambda),\tilde{\phi}^2(\lambda))}$ and require it to have a velocity that is aligned with the gradient of the surface equation. This can be done by imposing the proportionality relation
\begin{align*}
    \dfrac{\de}{\de \lambda} \, \underline{\tau}(\lambda) = - \xi(\lambda) \, \dfrac{\der V}{\der \underline{\tau}}(\underline{\tau}(\lambda)),
\end{align*}
where $\xi = \xi(\lambda)$ is an arbitrary function (one can choose this to be a constant to facilitate intuition, but this is not necessary). Because the problem is 2-dimensional, we can easily infer the differential condition
\begin{align*}
    \dfrac{\de \tilde{\phi}^2}{\de \tilde{\phi}^1} (\tilde{\phi}^1) = \dfrac{1}{\dfrac{\der V}{\der \tilde{\phi}^1}} \dfrac{\der V}{\der \tilde{\phi}^2} \, (\tilde{\phi}^1, \tilde{\phi}^2 (\tilde{\phi}^1)),
\end{align*}
which we may be able to integrate analytically. If one assumes the slow-roll conditions to hold, this is exactly the scenario that one is faced with: neglecting the field second-order derivatives, the simplified scalar-field equations, (see eq. (\ref{FRW-KG eq.})), take the form \cite{Calderon-Infante:2022nxb}
\begin{align*}
    \dot{\tilde{\phi}}^{1,2} = - \dfrac{1}{(d-1) H } \dfrac{\der V}{\der \tilde{\phi}_{1,2}},
\end{align*}
which give precisely the steepest-descent trajectory. As already explained, this also has a chance to give the scaling-cosmology trajectory in field space, but it is not equivalent to solving the same field equations coupled to gravity, as the time dependences will be different in general. If the problem is not 2-dimensional, one finds a system of coupled differential equations which is harder to solve in general, but conceptually all considerations above apply identically.

In the case at hand, it is convenient to work in the $\smash{(\tsdil, \tomega)}$-basis since one of the scalar-potential terms depends only one field and thus makes it easier to determine the gradient-flow trajectory. Indeed, the potential in eq. (\ref{het-SO(16)xSO(16) potential: R+C}) reads
\begin{align*}
    V = \Lambda_R \, \e^{- \frac{4 \sqrt{2} \, \kappa_d \tomega}{\sqrt{d-2} \sqrt{10-d}}} + \Lambda_C \, \e^{\frac{5}{\sqrt{2}} \, \kappa_d \tsdil - \frac{1}{\sqrt{2}} \frac{\sqrt{10-d}}{\sqrt{d-2}} \, \kappa_d \tomega},
\end{align*}
and it turns out that the trajectory of steepest descent is the solution to the differential equation
\begin{align*}
    \dfrac{\de \tomega}{\de \tsdil} = - \dfrac{1 + \dfrac{8}{10-d} \dfrac{\Lambda_R}{\Lambda_C} \, \e^{- \frac{5}{\sqrt{2}} \, \kappa_d \tsdil - \frac{1}{\sqrt{2}} \frac{\sqrt{d-2}}{\sqrt{10-d}} \, \kappa_d \tomega}}{5 \, \dfrac{\sqrt{d-2}}{\sqrt{10-d}}}.
\end{align*}
This equation can be integrated analytically and gives the family of functions
\begin{align*}
    \tomega_{k}^\pm(\tsdil) = \dfrac{\sqrt{2}}{\kappa_d} \dfrac{\sqrt{10-d}}{\sqrt{d-2}} \; \mathrm{ln} \, \biggl[ \dfrac{1}{3} \dfrac{\Lambda_R/\Lambda_C}{10-d} \, \e^{-\frac{12 \sqrt{2}}{5} \, \kappa_d \tsdil} \pm \e^k & \biggr] \\
    - \dfrac{1}{5} \dfrac{\sqrt{10-d}}{\sqrt{d-2}} \, \tsdil & ,
\end{align*}
which are parameterized by the arbitrary constant $k \in \mathbb{R}$. This exemplifies by itself one important aspect: the gradient-flow trajectories are not unique. This is not the case for scaling solutions. In particular, there is only one set of initial conditions in which the scaling-cosmology trajectory -- namely the correct solution to the field equations, including the right time dependence -- coincides exactly with the steepest-descent trajectory. In this example, this happens for the solutions with $k = -\infty$ and with
\begin{align*}
    \tomega_{-\infty}^\pm(\tsdil) = - 5 \dfrac{\sqrt{10-d}}{\sqrt{d-2}} \, \tsdil + \dfrac{\sqrt{2}}{\kappa_d} \dfrac{\sqrt{10-d}}{\sqrt{d-2}} \; \mathrm{ln} \, \biggl[ \dfrac{1}{3} \dfrac{\Lambda_R/\Lambda_C}{10-d} \biggr].
\end{align*}
This is also the leading-order term of all the solutions at large negative $\smash{\tsdil}$, namely $\smash{\tomega_{k}^\pm(\tsdil \sim \infty) \simeq \tomega_{-\infty}^\pm(\tsdil \sim - \infty)}$. Indeed, by the scaling solution computed above, one can write
\begin{align*}
    \tomega_*(\tsdil_*) = - 5 \dfrac{\sqrt{10-d}}{\sqrt{d-2}} \, \tsdil_* + \biggl[ \tomega_0 + 5 \dfrac{\sqrt{10-d}}{\sqrt{d-2}} \, \tsdil_0 \biggr].
\end{align*}
This trajectory, which is the solution to the field equations for arbitrary initial conditions $\smash{\tsdil_0}$ and $\smash{\tomega_0}$, evidently agrees with the steepest-descent trajectory only for one choice of the initial conditions. Such initial conditions are the ones that are consistent with the choice of the initial time -- as we have already seen in the minimization of the gradient --, due to the compatibility condition in eq. (\ref{compatibility condition}). As a final consideration, we observe that, in view of the discussion of eqs. (\ref{scaling-solution gradient flow}, \ref{slow-roll FRW-KG eq.}), there is no evidence that the solution to the proper field equations should approach a gradient-flow attractor along a gradient-flow trajectory: all we know is that the late-time attractor is a gradient-flow trajectory, but there is no evidence as to how such trajectory is approached. It is possible to visualize all the considerations discussed above in fig. \ref{fig: gradient-flow plot}.

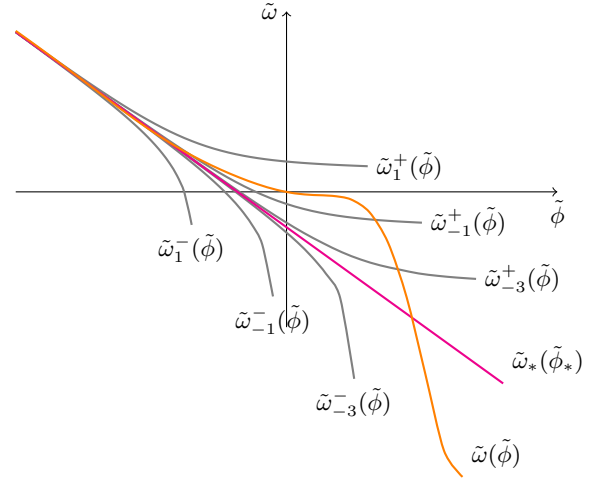
\begin{figure}[ht]
    \centering
    
    \begin{tikzpicture}[xscale=1.8,yscale=0.15]

    \draw[->] (-2,0) -- (2,0) node[below]{$\tsdil$};
    \draw[->] (0,-12) -- (0,16) node[left]{$\tomega$};

    \draw[domain=-2:0.6, smooth, thick, variable=\n, gray] plot ({\n}, {- (1/5)*sqrt(3)*\n+sqrt(6)*ln(exp(1) + (5/18)*(exp(-(12*sqrt(2)/5)*(\n)))}) node[right, black]{$\tomega_{1}^+(\tsdil)$};    
    \draw[domain=-2:1, smooth, thick, variable=\n, gray] plot ({\n}, {- (1/5)*sqrt(3)*\n+sqrt(6)*ln(exp(-1) + (5/18)*(exp(-(12*sqrt(2)/5)*(\n)))}) node[right, black]{$\tomega_{-1}^+(\tsdil)$};
    \draw[domain=-2:1.4, smooth, thick, variable=\n, gray] plot ({\n}, {- (1/5)*sqrt(3)*\n+sqrt(6)*ln(exp(-3) + (5/18)*(exp(-(12*sqrt(2)/5)*(\n)))}) node[right, black]{$\tomega_{-3}^+(\tsdil)$};

    \draw[domain=-2:-0.70, smooth, thick, variable=\n, gray] plot ({\n}, {- (1/5)*sqrt(3)*\n+sqrt(6)*ln(-exp(1) + (5/18)*(exp(-(12*sqrt(2)/5)*(\n)))}) node[below, black]{$\tomega_{1}^-(\tsdil)$};
    \draw[domain=-2:-0.10, smooth, thick, variable=\n, gray] plot ({\n}, {- (1/5)*sqrt(3)*\n+sqrt(6)*ln(-exp(-1) + (5/18)*(exp(-(12*sqrt(2)/5)*(\n)))}) node[below, black]{$\tomega_{-1}^-(\tsdil)$};
    \draw[domain=-2:0.5, smooth, thick, variable=\n, gray] plot ({\n}, {- (1/5)*sqrt(3)*\n+sqrt(6)*ln(-exp(-3) + (5/18)*(exp(-(12*sqrt(2)/5)*(\n)))}) node[below, black]{$\tomega_{-3}^-(\tsdil)$};

    \draw[domain=-2:1.6, smooth, thick, variable=\n, magenta] plot ({\n}, {- 5*sqrt(3)*\n+sqrt(6)*ln((5/18))}) node[above right, black]{$\tomega_* (\tsdil_*)$};

    \draw[domain=-2:1.3, smooth, thick, variable=\n, orange] plot ({\n}, {- 5*sqrt(3)*\n+sqrt(6)*ln((5/18))+(3*exp(2*\n))*(cos(\n*180)+(sin(\n*120))^2)+exp(3*\n)/8}) node[above right, black]{$\tomega (\tsdil)$};

    \end{tikzpicture}
    
    \caption{In this figure, gray lines correspond to different gradient-flow trajectories $\smash{\tomega_k^\pm(\tsdil)}$, while the magenta line is the trajectory corresponding to the exact scaling solution $\smash{\tomega_* (\tsdil_*)}$; in orange is depicted another trajectory that corresponds to a solution that has the scaling cosmology as the late-time attractor, but that does not approach the attractor through the steepest-descent trajectory.}
    
    \label{fig: gradient-flow plot}
    
\end{figure}

\section{Discussion} \label{sec: discussion}

In this article, we have studied cosmological late-time attractors of multi-field multi-exponential potentials. We have identified the sufficient conditions for positive-definite potentials to have scaling cosmologies as their late-time attractors, independently of initial conditions. This goes beyond previous analyses in the literature, which only discussed the linear stability of such solutions. As scaling solutions are known analytically, we can characterize any observable of interest. Several properties make such solutions very easy to handle with. To start, the field-space trajectory is a straight line. This has the accidental consequence that the trajectory is a gradient-flow solution. It should be noted that slow-roll gives the same trajectory, albeit with a different time dependence (and the slow-roll solution is not a late-time attractor). For scaling solutions, we have also shown that the norm of the potential gradient and the potential directional derivative along the field-space trajectory coincide, and give a precise measure of the $\epsilon$-parameter. This is because the scale factor is power-law in time and the non-geodesity factor is vanishing.

Parallel to the study of scaling-cosmology attractors, we have revisited the universal late-time bound on cosmic acceleration introduced in ref. \cite{Shiu:2023nph}. In particular, we have shown that, in its strongest version -- i.e. in the field-basis that maximizes it -- the lowest bound on the $\epsilon$-parameter can be computed as the distance from the origin to the convex hull of the couplings. Scaling cosmologies are special in this case, too, as they saturate the optimized bound.

For completeness, we have also revisited and expanded on several considerations appearing in ref. \cite{Shiu:2023nph}. In particular, we have detailed the argument as to why an unstabilized $d$-dimensional dilaton poses a strong obstacle to late-time acceleration in string-theoretic realizations, if the theory is at weak coupling. Also, we have elaborated on the fact that the time evolution of scaling cosmologies can be described by a single scalar field, where however the definition of the latter depends on the details of all the scalar-potential terms. Finally, we have recalled that a scaling cosmology involving a compact hyperbolic space is incompatible with scale separation.

Our results are completely general, regardless of higher-dimensional and/or higher-energy completions. Nonetheless, we have mostly studied their implications in the context of string compactifications. This is because the analytic knowledge that we have gained provides us with powerful tools to address several open problems. In relation to the Swampland Program, we have established a rigorous relationship between the coupling convex hull and late-time cosmic acceleration (as said above). Also, we have provided heuristic arguments for computing the de Sitter coefficient in terms of any given set of exponential couplings, even in the absence of a scaling-cosmology attractor. We have also noticed that a general class of potentials admits scaling-cosmology attractors where a subset of the scalars are stabilized. Being an attractor solution, this means that the dynamics of the rolling fields does not affect the stabilization.

An open problem that we plan to address soon is the discussion of the multi-field multi-exponential potentials for which we do not have yet a rigorous convergence result. As we have argued, we expect that one should be able to identify a truncated scalar potential, where only an asymptotically-dominant subset of the terms would define the attractor. Within this subset, there would instead be no dominating term and one should be able to identify the attractor of the complete potential through the attractor of the truncated potential. We plan to address this problem soon \cite{criticalpoints2}. If our intuition played out, this would allow us to test the de Sitter Swampland conjecture in a very wide class of potentials.

Another aspect that we plan to address in view of our convergence results is the Swampland Distance Conjecture \cite{Ooguri:2006in}. In loose terms, the latter implies that we should expect infinite towers of states becoming exponentially light as one approaches the moduli-space boundary. In the scenarios we consider, this is a double-edged sword but one on which we can have a clear handle on. On the one hand, states becoming light may useful to explain huge hierarchies of scales in nature. On the other hand, they are dangerous since the lower-dimensional effective-theory approach may break down. As we have seen, we have control over this since we can measure exactly how the Hubble, Kaluza-Klein and string scales evolve over time. A point that we emphasize is that a Kaluza-Klein mass becoming lighter and lighter over time does not necessarily imply an effective-field-theory breakdown. The theory cutoff may also change in time, as in the cases where it is defined in terms of the Kaluza-Klein scale.

In the context of the Swampland Program, refs. \cite{Hertzberg:2007wc, Achucarro:2018vey, Garg:2018reu, Andriot:2020lea, Andriot:2021rdy, Andriot:2022xjh} focus around linear combinations of the scalar potential and field-derivatives thereof. Such considerations may be addressed again in view of the improved understanding of the relationship between the scalar-potential directional derivative and the potential gradient norm that we have proposed here. Indeed, we now know that, if the coefficients in the linear combination of the scalar-potential partial derivatives are the components of the field-space trajectory, then such linear combination can be an analytic measure of late-time acceleration.

\onecolumngrid
\newpage

\appendix

\section{Late-time cosmological attractors} \label{app: late-time cosmological attractors}

Let $\smash{V = \sum_{i = 1}^m \Lambda_i \, \e^{- \kappa_d \gamma_{i a} \phi^a}}$ be the scalar potential for the canonically-normalized scalar fields $\phi^a$, with $a=1,\dots,n$, in a $d$-dimensional FLRW-background: one can reduce the cosmological scalar-field and Friedmann equations to a system of autonomous equations \cite{Copeland:1997et, Coley:1999mj, Guo:2003eu}. In terms of the variables
\begin{align*}
    x^a & = \dfrac{\kappa_d}{\sqrt{d-1} \sqrt{d-2}} \, \dfrac{\dot{\phi}^a}{H}, \\
    y^i & = \dfrac{\kappa_d \sqrt{2}}{\sqrt{d-1} \sqrt{d-2}} \, \dfrac{1}{H} \, \sqrt{\Lambda_i \, \e^{- \kappa_d \gamma_{i a} \phi^a}},
\end{align*}
and defining for simplicity
\begin{align*}
    f & = (d-1) H, \\
    c_{i a} & = \dfrac{1}{2} 
    \dfrac{\sqrt{d-2}}{\sqrt{d-1}} \, \gamma_{ia},
\end{align*}
the cosmological equations can indeed be expressed as
\begin{subequations}
\begin{align}
    \dot{x}^a & = \biggl[ -x^a (y)^2 + \sum_{i=1}^m {c_i}^{a} (y^i)^2 \biggr] \, f, \label{x-equation} \\
    \dot{y}^i & = \bigl[ (x)^2 - c_{ia} x^a \bigr] \, y^i f, \label{y-equation}
\end{align}
\end{subequations}
jointly with the two conditions
\begin{subequations}
\begin{align}
    & \dfrac{\dot{f}}{f^2} = - (x)^2, \label{f-equation} \\
    & (x)^2 + (y)^2 = 1. \label{(x,y)-plane sphere-condition}
\end{align}
\end{subequations}
(Here, the position of the $i$- and $a$-indices is arbitrary, since the former are just dummy labels and the latter refer to a field-space metric that is a Kronecker delta; if they appear together, such as in the coefficients $\smash{c_{ia}}$ and $\smash{{c_i}^a}$, we always write them on the left and on the right, respectively, to avoid confusion, and we place the $a$-index in the same (upper or lower) position as the other $x^a$-types variable that appear in the expression of reference. Einstein summation convention is understood for the metric-contraction on the $a$-indices and moreover we use the shorthand notations $\smash{(x)^2 = x_a x^a}$ and $\smash{(y)^2 = \sum_{i=1}^m (y^i)^2}$.)

\vspace{12pt}

Let the unknown functions be such that $x^a \in [-1, 1]$ and $y^i \in [-1,1]$ and let $t_0$ be the initial time. Let $c^a$ denote the minimum of the constant parameters for each $a$-index, i.e. $\smash{c^a = \min_i {c_i}^a}$. For each $a$-index, if $c^a > 0$, let $\smash{c_\infty^a = c^a}$; if $c^a \leq 0$, then let $\smash{c_\infty^a = 0}$. Let $\smash{\varphi(t) = \int_{t_0}^t \de s \; f(s) \bigl[ y(s) \bigr]^2}$.

\vspace{12pt}

A redefinition of time makes the autonomous system look simpler. Let $s(t)$ be the function $s: \; [t_0, \infty[ \, \to \, \mathbb{R}_0^+$ such that
\begin{equation}
    \dot{s}(t) = \dfrac{1}{f [s(t)]},
\end{equation}
with the identification $s(t_0) = t_0$.

\vspace{8pt}
\begin{lemma} \label{lemma: s - asymptotic behavior}
    One has $\smash{\lim_{t \to \infty} s(t) = \infty}$.
\end{lemma}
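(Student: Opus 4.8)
The plan is to prove a quantitative statement that is strictly stronger: that the new time variable grows at least linearly, $s(t) \geq t_0 + (t-t_0)/f(t_0)$, which then forces $s(t)\to\infty$. The single nontrivial input is that $f = (d-1)H$ is bounded above on $[t_0,\infty)$ by its initial value $f(t_0)$.

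First I would record that $f>0$ throughout: the universe is expanding, so $H>0$, and this is in any case implicit in the very definition of $s$, since $f$ sits in the denominator of $\dot s = 1/f[s(t)]$. Then, from eq. (\ref{f-equation}) we have $\dot f/f^2 = -(x)^2$, and the constraint eq. (\ref{(x,y)-plane sphere-condition}) gives $(x)^2 = 1-(y)^2 \leq 1$; hence $\dot f \leq 0$, so $f$ is non-increasing and $f(\tau)\leq f(t_0)$ for every $\tau\geq t_0$. (If one also wants to see directly that $f$ cannot vanish in finite time without invoking the expansion hypothesis, the companion bound $\dot f/f^2 \geq -1$ integrates to $1/f(\tau)\leq 1/f(t_0)+(\tau-t_0)$.) Next, $s$ is strictly increasing, since $\dot s(t) = 1/f[s(t)] > 0$; because $s(t_0)=t_0$, this gives $s(t)\geq t_0$ for all $t\geq t_0$, so the argument $s(t)$ always stays in the interval $[t_0,\infty)$ on which the bound $f\leq f(t_0)$ applies. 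Combining, $\dot s(t) = 1/f[s(t)] \geq 1/f(t_0)$ for all $t\geq t_0$, and integrating from $t_0$ to $t$ yields $s(t)\geq t_0 + (t-t_0)/f(t_0)$; letting $t\to\infty$ finishes the proof.

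There is no genuine obstacle here: the only points deserving a moment's care are the positivity and monotonicity of $f$, both of which drop out immediately from eqs. (\ref{f-equation}) and (\ref{(x,y)-plane sphere-condition}), and the fact that $s(t)$ never leaves the domain on which those bounds hold, which is immediate from monotonicity of $s$. An alternative, integration-free phrasing of the last step: were $s$ bounded, say $s(t)\uparrow S<\infty$, then $\int_{t_0}^\infty \dot s\,\de t<\infty$ would force $\liminf_{t\to\infty}\dot s(t)=0$, contradicting $\dot s\geq 1/f(t_0)>0$.
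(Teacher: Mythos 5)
Your argument is correct and is essentially the paper's: the paper uses eq. (\ref{f-equation}) to get $\ddot{s}(t) = [x(s(t))]^2/f[s(t)] \geq 0$, so $\dot{s}(t)\geq \dot{s}(t_0)=1/f(t_0)>0$ and $s$ grows at least linearly, which is the same mechanism as your observation that $f$ is non-increasing along $[t_0,\infty)$ and hence $\dot{s}\geq 1/f(t_0)$. The only difference is presentational (convexity of $s$ versus monotonicity of $f$), and your explicit linear lower bound $s(t)\geq t_0+(t-t_0)/f(t_0)$ just makes quantitative what the paper states more briefly.
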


\begin{proof}
By eq. (\ref{f-equation}), one can write
\begin{align*}
    \ddot{s}(t) = \dfrac{\de}{\de t} \, \dot{s}(t) = - \dfrac{1}{f^2 [s(t)]} \, \dfrac{\de f}{\de s} [s(t)] \, \dot{s}(t) = \dfrac{1}{f [s(t)]} \, \bigl[x[s(t)]\bigr]^2.
\end{align*}
Because $f = f(t)>0$ at all times, both the first and the second derivative of $s=s(t)$ are positive and therefore the function diverges at infinity.
\end{proof}

\vspace{12pt}

For a function $\mathrm{g}(t) = g [s(t)]$, the time-derivative takes the form $\smash{\dot{\mathrm{g}}(t) = \dot{s}(t) \cdot \de g[s(t)]/\de s = \bigl( 1 / f[s(t)] \bigr) \cdot \de g[s(t)]/\de s}$. For the variables (notice the different notation between $\smash{(x^a,y^i)}$ and $\smash{(\x^a,\y^i)}$ from now on)
\begin{subequations}
\begin{align}
    \x^a(t) & = x^a [s(t)], \\
    \y^i(t) & = y^i [s(t)],
\end{align}
\end{subequations}
the cosmological autonomous system in eqs. (\ref{x-equation}, \ref{y-equation}) takes the form
\begin{subequations}
\begin{align}
    \dot{\x}^a & = -\x^a (\y)^2 + \sum_{i=1}^m {c_i}^{a} (\y^i)^2, \label{X-equation} \\
    \dot{\y}^i & = \bigl[(\x)^2 - c_{ia} \x^a \bigr] \, \y^i, \label{Y-equation}
\end{align}
\end{subequations}
with the constraint in eq. (\ref{(x,y)-plane sphere-condition}) being
\begin{equation} \label{(X,Y)-plane sphere-condition}
    (\x)^2 + (\y)^2 = 1.
\end{equation}
As time-integrals read $\smash{\int_{t_1}^{t_2} \de t \, \mathrm{g}(t) = \int_{s(t_1)}^{s(t_2)} \de s \, f(s) \, g(s)}$, one also has $\varphi(t) =  \int_{t_0}^t \de s \; f(s) \bigl[ y(s) \bigr]^2 = \int_{s^{-1}(t_0)}^{s^{-1}(t)} \de t' \, [\y(t')]^2$.

\vspace{12pt}

A simple but extremely general result turns out to be helpful below.

\vspace{8pt}
\begin{lemma} \label{lemma: Y>0}
    Given the initial conditions $\y^i(t_0)$, none of the $[\y^i(t)]^2$-terms ever change sign.
\end{lemma}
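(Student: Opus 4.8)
The plan is to reduce eq.~(\ref{Y-equation}) to a scalar linear homogeneous ODE for each squared quantity and then read off the claim from the explicit integrating factor. First I would set $\smash{u^i(t) = [\y^i(t)]^2}$ and differentiate: from $\smash{\dot{\y}^i = [(\x)^2 - c_{ia} \x^a] \, \y^i}$ one gets, after multiplying by $2\y^i$,
\[
    \dot{u}^i = 2 \bigl[ (\x)^2 - c_{ia} \x^a \bigr] \, u^i = k^i(t) \, u^i,
\]
where $\smash{k^i(t) = 2 [(\x(t))^2 - c_{ia} \x^a(t)]}$. This is a first-order linear equation whose coefficient is continuous in $t$ along any solution of the autonomous system.

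The next step is to note that $k^i$ is actually bounded on every compact time interval: by the a priori constraint recorded before the statement one has $\smash{\x^a(t) \in [-1,1]}$ and the $c_{ia}$ are fixed constants, so $\smash{|k^i(t)| \leq 2 ( 1 + \sum_a |c_{ia}| )}$. Hence the primitive $\smash{\Theta^i(t) = \int_{t_0}^t \de t' \; k^i(t')}$ is well defined and finite for all $t \geq t_0$, and the unique solution of the linear ODE through the datum $u^i(t_0)$ is $\smash{u^i(t) = u^i(t_0) \, \e^{\Theta^i(t)}}$ (uniqueness being standard for linear ODEs with continuous coefficients).

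Finally, since $\smash{\e^{\Theta^i(t)} > 0}$ for every $t$, the quantity $\smash{[\y^i(t)]^2 = u^i(t)}$ has the same sign as $\smash{[\y^i(t_0)]^2 = u^i(t_0)}$ at all times. In particular, if $\smash{[\y^i(t_0)]^2 = 0}$ then $\smash{[\y^i(t)]^2 \equiv 0}$, while otherwise $\smash{[\y^i(t)]^2}$ never vanishes and retains the sign fixed by the initial data (equivalently, by the sign of $\Lambda_i$), which is the assertion of the lemma.

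There is no genuinely hard step here: this is an integrating-factor (Gr\"onwall-type) argument, and the only point requiring a word of care is the integrability of the coefficient $k^i(t)$, which is immediate from the bound $\smash{\x^a \in [-1,1]}$. Alternatively, one could bypass the explicit formula entirely and invoke uniqueness for the autonomous system: any trajectory with $\smash{\y^i(t_0) = 0}$ must coincide with the invariant trajectory $\smash{\y^i \equiv 0}$, on which eq.~(\ref{Y-equation}) holds trivially, so $\smash{\y^i}$ cannot leave $0$; combined with continuity, a trajectory with $\smash{\y^i(t_0) \neq 0}$ can never reach $0$ and hence keeps a constant sign.
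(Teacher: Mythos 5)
Your proof is correct and takes essentially the same route as the paper's: both reduce eq.~(\ref{Y-equation}) to a scalar linear ODE along the trajectory and conclude from the positivity of the integrating factor (the paper runs a uniqueness argument plus a Gr\"onwall-type lower bound $\y^i(t) \geq \y^i(t_0)\,\e^{-C_i t}$ on $\y^i$ itself, treating the imaginary-$\y^i$ case separately, whereas you work with $(\y^i)^2$ directly, which handles both signs at once). One small nit: from $\x^a \in [-1,1]$ alone one only gets $(\x)^2 \leq n$, so your bound on $k^i$ should read $2\,(n + \sum_a |c_{ia}|)$; this changes nothing, since only boundedness on compact intervals is needed.
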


\begin{proof}
    To start, let $\y^i(t_0)>0$. If there is a time $t_i > t_0$ such that $\y^i(t_i)=0$, then, by standard results for ordinary differential equations \cite{Bahamonde:2017ize}, eq. (\ref{Y-equation}) imposes the identity $\y^i(t) = 0$ for all times $t \geq t_i$. Therefore, it is true that $\y^i(t) \geq 0$ for all times $t \geq t_0$. Let $\smash{C_i = \sum_{a=1}^n \ab c_{ia} \ab}$: then the chain of inequalities hold
    \begin{align*}
        \dot{\y}^i = \bigl[(\x)^2 - c_{ia} \x^a \bigr] \, \y^i \geq - c_{ia} \x^a \, \y^i \geq - \ab c_{ia} \x^a \ab \, \y^i \geq - \sum_{a=1}^n \ab c_{ia} \ab \ab \x^a \ab \, \y^i \geq - \sum_{a=1}^n \ab c_{ia} \ab \, \y^i = - C_i \y^i.
    \end{align*}
    An integration reveals the further inequality $\smash{\y^i(t) \geq \y^i(t_0) \, \e^{- C_i t}}$, which in turn implies the inequality $\y^i(t)>0$ for all times $t \geq t_0$. If instead let $\smash{-\I \, \y^{i}(t_0)>0}$, by the same proof, one finds $\smash{-\I \, \y_-^{i_-}(t)>0}$ for all times $t \geq t_0$. In other words, the $\smash{(\y^i)^2}$-terms never change sign over time.
\end{proof}

\subsection{Universal late-time convergence properties}

Let $\smash{\bc^a}$ be defined as the time-independent solutions to the equations
\begin{equation} \label{eq.: generic X-condition}
    c_{ia} \bc^a = (\bc)^2.
\end{equation}
As will be apparent below, the vector $\bc^a$ is going to be used to bound the late-time behavior of the solution $\x^a(t)$ to eq. (\ref{X-equation}), though $\bc^a$ is never assumed to be a solution to the autonomous equations.

If $\smash{\lim_{t \to \infty} \varphi(t) < \infty}$, then the late-time behavior of the problem is known to be such that $\smash{\lim_{t \to \infty} x^a(t) = \tilde{x}^a}$ and $\smash{\lim_{t \to \infty} y^i(t) = 0}$, which means that $\smash{(\tilde{x})^2 = 1}$, according to ref. \cite[lemma 3]{Shiu:2023nph}. Below is a characterization encompassing both finite and infinite values of $\smash{l = \lim_{t \to \infty} \varphi(t)}$.

\vspace{8pt}
\begin{theorem} \label{theorem: Xbar{X}-limit}
    If $\smash{\lim_{t \to \infty} \varphi(t) = \infty}$, one has
    \begin{equation} \label{eq.: Xbar{X}-limit}
        \lim_{t \to \infty} \bc_a \x^a(t) = (\bc)^2.
    \end{equation}
\end{theorem}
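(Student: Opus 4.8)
The plan is to reduce the vector statement to the behaviour of the single scalar quantity $u(t) = \bc_a \x^a(t)$ and show directly that $u(t)\to(\bc)^2$ as $t\to\infty$. First I would differentiate $u$ along the flow: using eq.~(\ref{X-equation}) and the fact that the field-space metric is a Kronecker delta,
\[
\dot u = \bc_a\dot{\x}^a = -\bigl(\bc_a\x^a\bigr)(\y)^2 + \sum_{i=1}^m\bigl(c_{ia}\bc^a\bigr)(\y^i)^2 .
\]
The whole point of the defining relation~(\ref{eq.: generic X-condition}) for $\bc^a$ is that $c_{ia}\bc^a = (\bc)^2$ holds for \emph{every} index $i$ simultaneously; hence the dependence on the individual $(\y^i)^2$ in the source collapses through $\sum_{i=1}^m(\y^i)^2 = (\y)^2$, and $u$ obeys the closed first-order linear equation
\[
\dot u = \bigl[(\bc)^2 - u\bigr]\,(\y)^2 .
\]

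Next I would integrate this ODE for the error $w(t) = (\bc)^2 - u(t)$, which satisfies $\dot w = -\,w\,(\y)^2$ and therefore
\[
w(t) = w(t_0)\,\exp\!\Bigl(-\!\int_{t_0}^{t}[\y(t')]^2\,\de t'\Bigr) .
\]
Using the identity $\varphi(t) = \int_{t_0}^{s^{-1}(t)}[\y(t')]^2\,\de t'$ recorded above together with lemma~\ref{lemma: s - asymptotic behavior}, which guarantees $s^{-1}(t)\to\infty$ as $t\to\infty$, the hypothesis $\lim_{t\to\infty}\varphi(t) = \infty$ is equivalent to $\int^{\infty}[\y(t')]^2\,\de t' = \infty$. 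Since $(\y)^2\ge 0$, the exponential factor then decreases monotonically to $0$, so $w(t)\to 0$, i.e. $\bc_a\x^a(t)\to(\bc)^2$, which is exactly the claim.

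I do not expect a serious obstacle here: the content is that the equation~(\ref{eq.: generic X-condition}) defining $\bc^a$ is engineered so that the quadratic sources $\sum_i {c_i}^a(\y^i)^2$ in the $\x^a$-equations project, along the covector $\bc_a$, onto a single exponentially damped mode, with $\varphi$ playing the role of the accumulated damping time. The only steps that require care are bookkeeping ones: checking that $\bc_a{c_i}^a=(\bc)^2$ holds for all $i$ at once, so that the combination is $(\y)^2$ rather than a weighted sum of the $(\y^i)^2$; confirming that $(\y)^2$ is a locally integrable function of $t$ so the linear ODE for $w$ has the displayed unique solution; and translating divergence of $\varphi$ in the original time into divergence of $\int[\y]^2$ in the reparametrized time via lemma~\ref{lemma: s - asymptotic behavior}. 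The complementary regime $\lim_{t\to\infty}\varphi(t)<\infty$ is not needed here, being already covered by the known result $(\x)^2\to1$, $\y^i\to0$ from ref.~\cite[lemma 3]{Shiu:2023nph}.
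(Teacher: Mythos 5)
Your proposal is correct and follows essentially the same route as the paper: the defining relation $c_{ia}\bc^a=(\bc)^2$ collapses the source so that $u=\bc_a\x^a$ obeys a closed linear equation, which the paper integrates with the integrating factor $\e^{\varphi[s(t)]}$ while you solve it directly for the error $w=(\bc)^2-u$ — the same computation, and your explicit formula $w(t)=w(t_0)\,\e^{-\varphi[s(t)]}$ is exactly the identity recorded in corollary \ref{corollary: Xbar{X}-bound}. The concluding step (divergence of $\varphi$, translated through lemma \ref{lemma: s - asymptotic behavior}, kills the exponential factor) also matches the paper's argument.
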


\begin{proof}
Starting from eq. (\ref{X-equation}), in view of eq. (\ref{eq.: generic X-condition}), one can easily write
\begin{align*}
    \dot{\x}_a \bc^a = - \x_a \bc^a (\y)^2 + \sum_{i=1}^m c_{ia} \bc^a (\y^i)^2 = \bigl[ - \x_a \bc^a + (\bc)^2 \bigr] \, (\y)^2.
\end{align*}
Let $\lambda(t) = \x_a(t) \, \bc^a$. Then one can write
\begin{align*}
    \dfrac{\de}{\de t} \bigl[ \e^{\varphi[s(t)]} \lambda(t) \bigr] & = \bigl[ [\y(t)]^2 \lambda(t) + \dot{\lambda}(t) \bigr] \, \e^{\varphi[s(t)]} = (\bc)^2 \, [\y(t)]^2 \, \e^{\varphi[s(t)]},
\end{align*}
which after an integration gives the identity
\begin{align*}
    \e^{\varphi[s(t)]} \lambda(t) - \lambda(t_0) = (\bc)^2 \, \bigl[ \e^{\varphi[s(t)]} - 1 \bigr].
\end{align*}
Because $\smash{\lim_{t \to \infty} \varphi(t) = \infty}$, one immediately finds eq. (\ref{eq.: Xbar{X}-limit}).
\end{proof}

\vspace{8pt}
\begin{corollary} \label{corollary: (X)^2-bound}
    If $\smash{\lim_{t \to \infty} \varphi(t) = \infty}$, one has
    \begin{equation} \label{eq.: (X)^2-bound}
        \liminf_{t \to \infty} [\x(t)]^2 \geq (\bc)^2.
    \end{equation}
\end{corollary}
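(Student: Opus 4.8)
The plan is to deduce the corollary directly from theorem \ref{theorem: Xbar{X}-limit} together with the Cauchy--Schwarz inequality on the field-space indices. Since the field-space metric is a Kronecker delta, for every time $t$ one has the pointwise bound $\smash{(\bc_a \x^a(t))^2 \leq (\bc)^2 \, [\x(t)]^2}$. This is the only structural input needed; the rest is a limiting argument.

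First I would dispose of the degenerate case $\smash{(\bc)^2 = 0}$: here the claimed inequality reads $\smash{\liminf_{t\to\infty}[\x(t)]^2 \geq 0}$, which is immediate because $\smash{[\x(t)]^2 = \x_a(t) \x^a(t) \geq 0}$ for all $t$. So from now on assume $\smash{(\bc)^2 > 0}$.

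Next, under the hypothesis $\smash{\lim_{t\to\infty}\varphi(t) = \infty}$, theorem \ref{theorem: Xbar{X}-limit} gives $\smash{\lim_{t\to\infty} \bc_a \x^a(t) = (\bc)^2}$, hence $\smash{\lim_{t\to\infty} (\bc_a \x^a(t))^2 = ((\bc)^2)^2}$. Dividing the Cauchy--Schwarz bound by $\smash{(\bc)^2 > 0}$ gives $\smash{[\x(t)]^2 \geq (\bc_a \x^a(t))^2 / (\bc)^2}$ for all $t$, and taking the $\liminf$ as $t \to \infty$ (using that the right-hand side converges) yields
\begin{align*}
    \liminf_{t\to\infty} [\x(t)]^2 \geq \dfrac{\lim_{t\to\infty}(\bc_a \x^a(t))^2}{(\bc)^2} = \dfrac{((\bc)^2)^2}{(\bc)^2} = (\bc)^2,
\end{align*}
which is eq. (\ref{eq.: (X)^2-bound}).

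I do not anticipate a genuine obstacle here: the corollary is a one-line consequence of the theorem. The only point requiring a little care is to separate out the case $\smash{(\bc)^2 = 0}$ before dividing, and to note that the $\liminf$ of a quantity dominated from below by a convergent sequence is at least the limit of that sequence.
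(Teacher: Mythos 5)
Your proof is correct, but it takes a slightly different elementary route than the paper. The paper's own argument simply expands the square: from $[\x_a - \bc_a][\x^a - \bc^a] = (\x)^2 - 2\,\x_a \bc^a + (\bc)^2 \geq 0$ one gets the linear lower bound $(\x)^2 \geq 2\,\bc_a \x^a - (\bc)^2$, and taking the $\liminf$ together with theorem \ref{theorem: Xbar{X}-limit} gives $\liminf_{t\to\infty}[\x(t)]^2 \geq 2\,(\bc)^2 - (\bc)^2 = (\bc)^2$ in one stroke, with no case distinction and no division. You instead invoke Cauchy--Schwarz, $(\bc_a \x^a)^2 \leq (\bc)^2\,[\x]^2$, which forces you to split off the degenerate case $(\bc)^2 = 0$ before dividing; you handle that correctly, and your limiting step (a quantity bounded below by a convergent sequence has $\liminf$ at least that limit) is sound, so the argument is complete. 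The two routes are cousins — both rest on positivity of the Euclidean inner product — but the paper's completion-of-the-square version is marginally cleaner since it stays linear in $\bc_a \x^a$ and needs no special treatment of $(\bc)^2 = 0$; your version buys nothing extra here, though it is equally rigorous.
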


\begin{proof}
   This follows immediately from the inequality $\smash{[\x_a - \bc_a] \, [\x^a - \bc^a] = (\x)^2 - 2 \, \x_a \bc^a + (\bc)^2 \geq 0}$.
\end{proof}

\vspace{8pt}
\begin{corollary} \label{corollary: Xbar{X}-bound}
    If $\smash{\lim_{t \to \infty} \varphi(t) = \infty}$, one has
    \begin{equation}
        \bc_a \x^a(t) - (\bc)^2 = \bigl[ \bc_a \x^a(t_0) \, - (\bc)^2 \bigr] \, \e^{- \varphi[s(t)]}.  \label{eq.: Xbar{X}-bound}
    \end{equation}
\end{corollary}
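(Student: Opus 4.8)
The plan is to observe that the stated identity is merely a rearrangement of the integrated relation already obtained in the proof of Theorem \ref{theorem: Xbar{X}-limit}. Setting $\lambda(t) = \bc_a \x^a(t)$ as in that proof, eq. (\ref{X-equation}) together with the defining relation (\ref{eq.: generic X-condition}) gives the scalar linear ODE
\[
    \dot{\lambda}(t) = \bigl[ (\bc)^2 - \lambda(t) \bigr] \, [\y(t)]^2 ,
\]
and since $\frac{\de}{\de t}\varphi[s(t)] = [\y(t)]^2$, multiplying by the integrating factor $\e^{\varphi[s(t)]}$ and integrating from $t_0$ to $t$ yields
\[
    \e^{\varphi[s(t)]}\lambda(t) - \lambda(t_0) = (\bc)^2 \, \bigl[ \e^{\varphi[s(t)]} - 1 \bigr] ,
\]
where one uses $\varphi[s(t_0)] = \varphi(t_0) = 0$ because $s(t_0) = t_0$.

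First I would isolate the claimed expression: subtracting $(\bc)^2 \, \e^{\varphi[s(t)]}$ from both sides of the last display gives $\e^{\varphi[s(t)]}\bigl[\lambda(t) - (\bc)^2\bigr] = \lambda(t_0) - (\bc)^2$. Since $f>0$ throughout, $\varphi[s(t)]$ is a finite real number for each $t$, hence $\e^{\varphi[s(t)]} > 0$ and may be divided out, producing
\[
    \lambda(t) - (\bc)^2 = \bigl[ \lambda(t_0) - (\bc)^2 \bigr] \, \e^{-\varphi[s(t)]} .
\]
Writing $\lambda$ back as $\bc_a \x^a$ gives exactly eq. (\ref{eq.: Xbar{X}-bound}).

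There is essentially no obstacle here: everything is contained in the computation performed for Theorem \ref{theorem: Xbar{X}-limit}, and the corollary follows by pure algebra. The one point worth a remark is that the hypothesis $\lim_{t\to\infty}\varphi(t)=\infty$ is not actually used for the identity (\ref{eq.: Xbar{X}-bound}) itself, which holds for all $t$; rather, it is the standing assumption of this subsection and is what forces the right-hand side to vanish as $t\to\infty$, thereby recovering eq. (\ref{eq.: Xbar{X}-limit}). If one preferred a self-contained argument, the same integrating-factor computation could be repeated verbatim without invoking the theorem; presenting it as a corollary simply avoids that repetition.
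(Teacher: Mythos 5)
Your proof is correct and matches the paper's: the paper's own justification is simply that the identity "follows immediately from the proof" of Theorem \ref{theorem: Xbar{X}-limit}, i.e. the same integrated relation $\e^{\varphi[s(t)]}\lambda(t)-\lambda(t_0)=(\bc)^2\,[\e^{\varphi[s(t)]}-1]$ rearranged exactly as you do. Your added remark that the divergence hypothesis is not needed for the identity itself (only for the limit in the theorem) is also accurate.
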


\begin{proof}
    This follows immediately from the proof of eq. (\ref{eq.: Xbar{X}-limit}).
\end{proof}

\vspace{8pt}
\begin{remark} \label{remark: Xbar{X}-bound sign independence}
    In theorem \ref{theorem: Xbar{X}-limit} and corollaries \ref{corollary: (X)^2-bound}-\ref{corollary: Xbar{X}-bound}, nothing depends on the individual signs of the $\smash{(\y^i)^2}$-terms: the only requirement is for the total sum to be $\smash{(\y)^2 = \sum_{i=1}^m (\y^i)^2>0}$ in order for it to be possible that $\smash{\lim_{t \to \infty} \varphi(t) = \infty}$.
\end{remark}

\vspace{8pt}
\begin{remark}
    As they are solutions to the same equations, since $f(t)>0$ at all times $t \geq t_0$, the solutions $\smash{\bc^a}$ to eq. (\ref{eq.: generic X-condition}) are also solutions $\smash{\overline{c}^a}$ to the analogous problem in terms of the original variables, namely $\smash{\overline{c}^a = \bc^a}$. Therefore, the same convergence results apply to the original variables $\smash{x^a(t)}$, too.
\end{remark}

\vspace{12pt}

If eq. (\ref{eq.: generic X-condition}) admits solutions, one can obtain, in addition to the bound of ref. \cite[remarks 6.2-6.3]{Shiu:2023nph}, a new bound. Let $\smash{(\bc_\sigma)^a}$ label all solutions to eq. (\ref{eq.: generic X-condition}) and let $\smash{\bc_\infty^2 = \min \lbrace 1, \max_\sigma (\bc_\sigma)^2 \rbrace}$.

\vspace{8pt}
\begin{remark} \label{remark: alternative f-bound}
    By corollary \ref{corollary: (X)^2-bound}, for some large time $t_\infty > t_0$, at all times $t > t_\infty$ one has
    \begin{equation} \label{eq.: alternative f-bound}
        f(t) \leq \dfrac{1}{\bc_\infty^2 \, (t-t_\infty) + \dfrac{1}{f(t_\infty)}}.
    \end{equation}
    This bound is not necessarily stronger or weaker than the bound in ref. \cite{Shiu:2023nph}. It requires different assumptions and its relative reach must be judged on a case-by-case basis.
\end{remark}

\vspace{8pt}
\begin{corollary}
    If $\smash{\bc_\infty^2 = 1} < \max_\sigma (\bc_\sigma)^2$, then $\smash{\lim_{t \to \infty} \varphi(t) < \infty}$.
\end{corollary}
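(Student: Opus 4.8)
The plan is to prove the contrapositive: assuming $\smash{\lim_{t\to\infty}\varphi(t)=\infty}$, we derive that $\smash{\max_\sigma(\bc_\sigma)^2\le 1}$, which together with $\smash{\bc_\infty^2=\min\{1,\max_\sigma(\bc_\sigma)^2\}}$ gives $\smash{\bc_\infty^2=\max_\sigma(\bc_\sigma)^2}$, contradicting the hypothesis $\smash{\bc_\infty^2=1<\max_\sigma(\bc_\sigma)^2}$. So it suffices to show: if $\varphi(t)\to\infty$, then every solution $\smash{(\bc_\sigma)^a}$ of eq.~(\ref{eq.: generic X-condition}) satisfies $\smash{(\bc_\sigma)^2\le 1}$.

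**Using the convergence theorem and the sphere constraint.** Fix any solution $\smash{\bc^a=(\bc_\sigma)^a}$ of $c_{ia}\bc^a=(\bc)^2$. By corollary~\ref{corollary: (X)^2-bound}, the hypothesis $\varphi(t)\to\infty$ yields
\begin{align*}
    \liminf_{t\to\infty}[\x(t)]^2\ge(\bc)^2.
\end{align*}
On the other hand, the constraint in eq.~(\ref{(X,Y)-plane sphere-condition}) forces $[\x(t)]^2=1-(\y(t))^2\le 1$ at all times, since Lemma~\ref{lemma: Y>0} guarantees that $(\y)^2=\sum_i(\y^i)^2$ is a sum of terms that never change sign — but we must be slightly careful here, because the individual $(\y^i)^2$ could in principle be negative when $\Lambda_i<0$. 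In fact the relevant bound is cleaner: $[\x(t)]^2=1-(\y(t))^2$ and, crucially, $\varphi(t)\to\infty$ already requires $(\y)^2>0$ for the integral to diverge (as noted in remark~\ref{remark: Xbar{X}-bound sign independence}), so one at least has $[\x(t)]^2<1$ on a set of times of infinite measure; more simply, whenever $(\y)^2\ge 0$ we get $[\x(t)]^2\le 1$ directly. Combining $\liminf[\x(t)]^2\ge(\bc)^2$ with $[\x(t)]^2\le 1$ (which holds at every time where $(\y)^2\ge0$, and in particular along a sequence $t_k\to\infty$), we obtain $(\bc)^2\le 1$.

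**Concluding.** Since the inequality $(\bc_\sigma)^2\le 1$ holds for every $\sigma$, we get $\max_\sigma(\bc_\sigma)^2\le 1$, hence $\bc_\infty^2=\min\{1,\max_\sigma(\bc_\sigma)^2\}=\max_\sigma(\bc_\sigma)^2$. This is incompatible with the assumed strict inequality $\bc_\infty^2=1<\max_\sigma(\bc_\sigma)^2$. Therefore the hypothesis $\varphi(t)\to\infty$ is untenable, and we conclude $\smash{\lim_{t\to\infty}\varphi(t)<\infty}$, noting that $\varphi$ is non-decreasing (its integrand $f(s)[y(s)]^2$ is a limit issue only when the $(\y^i)^2$ can be negative, but the statement is about the limit existing as a finite number, so monotonicity of the relevant quantity, or boundedness, is what matters).

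**Expected main obstacle.** The only delicate point is the sign bookkeeping of the $(\y^i)^2$-terms when negative-definite $\Lambda_i$ are present: one needs $[\x(t)]^2\le 1$ at infinitely many large times, and the clean route is to observe that corollary~\ref{corollary: (X)^2-bound} is stated to hold \emph{irrespective} of the individual signs (remark~\ref{remark: Xbar{X}-bound sign independence}), so $\liminf[\x]^2\ge(\bc)^2$ is available, while the constraint $(\x)^2+(\y)^2=1$ gives $(\x)^2\le 1$ exactly when $(\y)^2\ge0$. If $(\y)^2$ could be persistently negative one would need a separate argument, but divergence of $\varphi=\int f(y)^2$ with $f>0$ rules that out in the sense required. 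Pinning down this last step cleanly — rather than the main chain of inequalities, which is immediate from the corollary and the constraint — is where I expect to spend the most care.
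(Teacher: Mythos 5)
Your proof is correct and takes essentially the same route as the paper: the paper's one-line proof combines the $f$-upper bound of remark \ref{remark: alternative f-bound} (which is corollary \ref{corollary: (X)^2-bound} applied to the norm-maximizing solution $\bc_\sigma$) with the universal lower bound on $f$ from lemma 1 of ref. \cite{Shiu:2023nph}, and that lower bound is nothing but the integrated form of the constraint $(\x)^2\leq 1$ that you invoke directly through the sphere condition. Your contradiction ($\liminf_t[\x(t)]^2\geq\max_\sigma(\bc_\sigma)^2>1$ versus $[\x(t)]^2\leq 1$ at arbitrarily late times, which divergence of $\varphi$ guarantees) is the same mechanism, merely phrased at the level of $(\x)^2$ rather than of $f$.
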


\begin{proof}
    This follows immediately from the considerations in ref. \cite[lemma 1]{Shiu:2023nph} and in remark \ref{remark: alternative f-bound}.
\end{proof}

\subsection{Late-time convergence to critical points}

The critical points of the autonomous system in eqs. (\ref{X-equation}, \ref{Y-equation}), along with eq. (\ref{(X,Y)-plane sphere-condition}),  are the solutions $\smash{(\x^a, \y^i) = ((\x_0)^a, (\y_0)^i)}$ to the equations $\smash{(\dot{\x}^a, \dot{\y}^i) = (0,0)}$; as the equations are algebraic, the solutions can be determined and classified analytically \cite{Collinucci:2004iw}. Here, critical points are conveniently distinguished in two categories.
\begin{enumerate}
    \item Proper critical points $\smash{(\bx^a, \by^i)}$ are solutions to the equations
    \begin{align*}
        -\bx^a (\by)^2 + \sum_{i=1}^m {c_i}^{a} (\by^i)^2 & = 0,  \\
        (\bx)^2 - c_{ia} \bx^a & = 0.
    \end{align*}
    \item Non-proper critical points $\smash{(\hx^a, (\hy^\eta,\check{\y}^\zeta))}$ are solutions to the equations
    \begin{align*}
        -\hx^a (\hy)^2 + \sum_{\eta} {c_\eta}^{a} (\hy^\eta)^2 & = 0,  \\
        (\hx)^2 - c_{\eta a} \hx^a & = 0, \\[2.0ex]
        \check{\y}^{\zeta} & = 0.
    \end{align*}
\end{enumerate}
Both cases represent fixed points in the $\smash{(\x^a,\y^i)}$-plane. However, non-proper critical points effectively truncate the problem to a lower-dimensional coefficient matrix $\smash{\hat{c}_{\eta a} = c_{\eta a}}$, and one always has $\smash{(\bx)^2 \leq (\hx)^2}$ \cite{Hartong:2006rt}. Proper critical points where there exist directions such that $\smash{\by^{i_0}=0}$ are accidental non-proper critical points. Critical points where $\smash{\y^i=0}$ for all $i$-indices are degenerate non-proper critical points, denoted as $\smash{(\tilde{\x}^a,\tilde{\y}^i=0)}$.

Below is a summary of the mathematical classifications of ref. \cite{Collinucci:2004iw}.

\vspace{8pt}
\begin{lemma} \label{lemma: (X,Y)-plane critical points (rank=m)}
    If $\mathrm{rank} \, c_{ia} = m$, let $\smash{A_{ij} = c_{i a} {c_j}^a}$; the proper critical points exist and they read
    \begin{subequations}
    \begin{align}
        \bx^a & = \dfrac{1}{k} \sum_{i=1}^m \sum_{j=1}^m {c_i}^a (A^{-1})^{ij},  \label{eq.: X-critical point} \\
        (\by^i)^2 & = \dfrac{k-1}{k^2} \, \sum_{j=1}^m (A^{-1})^{ij}, \label{eq.: Y-critical point}
    \end{align}
    \end{subequations}
    with
    \begin{equation} \label{eq.: k-constant}
        k = \sum_{i=1}^m \sum_{j=1}^m (A^{-1})^{ij}.
    \end{equation}
\end{lemma}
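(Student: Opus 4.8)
The plan is to solve the algebraic system defining the proper critical points head-on. Write $s \equiv (\bx)^2$. The equations $(\bx)^2 - c_{ia}\bx^a = 0$ then read $c_{ia}\bx^a = s$ for every $i=1,\dots,m$, while the constraint (\ref{(X,Y)-plane sphere-condition}) gives $(\by)^2 = 1-s$. If $(\by)^2 \neq 0$, the remaining equations $-\bx^a(\by)^2 + \sum_i {c_i}^a(\by^i)^2 = 0$ can be rewritten as
\[
    \bx^a = \sum_{i=1}^m \beta^i \, {c_i}^a, \qquad \beta^i \equiv \frac{(\by^i)^2}{(\by)^2}, \qquad \sum_{i=1}^m \beta^i = 1 ,
\]
so that $\bx^a$ is forced to lie in the span of the $m$ vectors ${c_i}^a$. (Note that when $\mathrm{rank}\,c_{ia}=m$ the case $(\by)^2=0$ necessarily has all $\by^i=0$, since $\sum_i {c_i}^a(\by^i)^2=0$ with linearly independent ${c_i}^a$ forces each $(\by^i)^2=0$; this boundary case reduces to the degenerate non-proper point and is handled separately.)

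Next I would substitute $\bx^a=\sum_i\beta^i{c_i}^a$ into $c_{ja}\bx^a = s$; using $A_{ji}=c_{ja}{c_i}^a$ this collapses to the finite linear system $\sum_i A_{ji}\beta^i = s$, i.e.\ $A\,\underline{\beta} = s\,\underline{e}$ with $\underline{e}=(1,\dots,1)^{\mathrm{T}}$. The hypothesis $\mathrm{rank}\,c_{ia}=m$ says precisely that the rows ${c_i}^a$ are linearly independent in the Euclidean field space, so $A$ is their Gram matrix: symmetric, positive-definite, hence invertible. Therefore $\beta^i = s\sum_j (A^{-1})^{ij}$, and imposing $\sum_i\beta^i=1$ fixes $s\sum_{i,j}(A^{-1})^{ij}=1$, i.e.\ $s=1/k$ with $k=\sum_{i,j}(A^{-1})^{ij}=\underline{e}^{\mathrm{T}}A^{-1}\underline{e}>0$. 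Feeding $s=1/k$ back gives $\beta^i=\tfrac1k\sum_j(A^{-1})^{ij}$, whence $\bx^a=\sum_i\beta^i{c_i}^a=\tfrac1k\sum_{i,j}(A^{-1})^{ij}{c_i}^a$, which is eq.\ (\ref{eq.: X-critical point}); and since $(\by)^2=1-s=(k-1)/k$, one obtains $(\by^i)^2=\beta^i(\by)^2=\tfrac{k-1}{k^2}\sum_j(A^{-1})^{ij}$, which is eq.\ (\ref{eq.: Y-critical point}).

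It then remains to check that this candidate genuinely solves the whole system (a ``backward'' verification): the $\bx$-equation holds by construction of $\beta^i$, while the $\by$-equations need $(\bx)^2=s$, which follows from the short computation $(\bx)^2=\underline{\beta}^{\mathrm{T}}A\,\underline{\beta}=s^2\,\underline{e}^{\mathrm{T}}A^{-1}\underline{e}=s^2k=s$. Here I would also record the genuine-existence conditions implicit in calling these ``critical points'': the formulas describe an honest point of the constraint surface with all potential terms present only when $k\geq1$ (so that $(\by)^2\geq0$ and $(\bx)^2=1/k\leq1$) and $\sum_j(A^{-1})^{ij}\geq0$ for each $i$ (so that each $(\by^i)^2\geq0$) — exactly the content of conditions \ref{condition (a)} and \ref{condition (b)}; when these fail, the relevant fixed point is one of the non-proper critical points classified in the subsequent lemmas.

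I do not expect a deep obstacle: once one notices that $\bx^a$ must lie in the span of the ${c_i}^a$, the rest is linear algebra. The only points requiring care are (i) deriving invertibility — indeed positive-definiteness — of $A$ from $\mathrm{rank}\,c_{ia}=m$, which is the standard Gram-matrix fact, and (ii) correctly isolating the boundary branch $k=1$ ($s=1$, all $\by^i=0$) and tracking the sign constraints on the $(\by^i)^2$, which decide whether the formulas above describe a bona fide proper critical point or must be superseded by the non-proper classification.
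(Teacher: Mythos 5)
Your proof is correct and complete. Note that the paper itself does not prove Lemma \ref{lemma: (X,Y)-plane critical points (rank=m)}: it presents it as a summary of the classification of ref. \cite{Collinucci:2004iw}, so there is no in-text argument to compare against, and your Gram-matrix derivation supplies exactly the natural linear algebra behind the quoted formulas. The chain -- $c_{ia}\bx^a=(\bx)^2\equiv s$ for all $i$, the $\x$-equation forcing $\bx^a$ into the span of the ${c_i}^a$ when $(\by)^2\neq0$, invertibility (indeed positive-definiteness) of $A$ from $\mathrm{rank}\,c_{ia}=m$, $s=1/k$ from $\sum_i\beta^i=1$, and the backward check $(\bx)^2=\underline{\beta}^{\mathrm{T}}A\,\underline{\beta}=s^2k=s$ -- establishes both that any proper critical point with $(\by)^2\neq 0$ has the stated form and that the stated point solves the full system, i.e.\ existence and uniqueness of (\ref{eq.: X-critical point}, \ref{eq.: Y-critical point}, \ref{eq.: k-constant}); your treatment of the boundary branch $(\by)^2=0$ (linear independence forcing every $(\by^i)^2=0$, hence the degenerate non-proper point) is also right. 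One small calibration to the paper's conventions: since negative $\Lambda_i$ are admitted, the $\y^i$ may be imaginary and the $(\by^i)^2$ may be negative, so the lemma's ``existence'' is purely algebraic; your additional requirements $k\geq 1$ and $\sum_j(A^{-1})^{ij}\geq 0$ are the correct conditions for a real, physical fixed point (they mirror conditions \ref{condition (a)}--\ref{condition (b)} and the hypotheses of theorem \ref{theorem: critical-point convergence}), but they are not needed for the statement as written.
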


\vspace{8pt}
\begin{lemma} \label{lemma: (X,Y)-plane critical points (rank<m)}
    If $\mathrm{rank} \, c_{ia} = r < m$, let $\smash{c_{\iota a}}$ denote $r$ linearly-independent vectors, for $\iota=1, \dots, r$, and let the other vectors be expressed as $\smash{c_{\iota' a} = \sum_{\iota=1}^r \lambda_{\iota' \iota} c_{\iota a}}$, for $\smash{\iota' = r+1, \dots, m}$. If $\smash{\sum_{\iota=1}^m \lambda_{\iota' \iota} = 1}$, then, given the matrix $\smash{B_{ab} = \sum_{i=1}^m c_{i a} c_{i b}}$, proper critical points exist and they read
    \begin{equation} \label{eq.: X-critical point 2}
        \bx^a = \dfrac{1}{k} \, (B^{-1})^{ab} \sum_{i=1}^m c_{i b},
    \end{equation}
    with
    \begin{equation} \label{eq.: k-constant 2}
        k = \delta_{a b} \biggl[ (B^{-1})^{a c} \sum_{i=1}^m c_{i c} \biggr] \biggl[ (B^{-1})^{b d} \sum_{j=1}^m c_{i d} \biggr].
    \end{equation}
    These solutions are not unique. If $\smash{\sum_{\iota=1}^m r_{\iota' \iota} \neq 1}$, there are no such solutions.
\end{lemma}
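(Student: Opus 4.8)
The plan is to follow the template used for lemma \ref{lemma: (X,Y)-plane critical points (rank=m)}, treating the displayed formulas as an explicit candidate solution to be verified; the one genuinely new point is that with $\mathrm{rank}\,c_{ia} = r < m$ the system $c_{ia}\bx^a = (\bx)^2$ is overdetermined, so the first job is to pin down exactly when it is consistent. I would restrict attention to the \emph{rolling} proper critical points, i.e.\ those with $(\bx)^2 \ne 0$ and $(\by)^2 \ne 0$; the points with $(\bx)^2 = 0$, or with all $\by^i$ vanishing, fall into the categories covered by lemma \ref{lemma: (X,Y)-plane critical points (x^2=0)} and would be set aside explicitly. For such a point, the equation $-\bx^a (\by)^2 + \sum_{i=1}^m {c_i}^a (\by^i)^2 = 0$ forces $\bx^a$ into the subspace $W = \mathrm{span}\{c_1,\dots,c_m\}$, which has dimension $r$; equivalently, after an $\mathrm{O}(n)$-rotation of field space one may assume $c_{ia} = 0$ for $a>r$, so that $B_{ab} = \sum_{i=1}^m c_{ia}c_{ib}$, viewed on $W$, is the Gram matrix of a full-rank family and is thus invertible there.

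I would then extract the consistency condition. Writing the dependent couplings as $c_{\iota' a} = \sum_{\iota=1}^r \lambda_{\iota'\iota} c_{\iota a}$, the equation $c_{ia}\bx^a = (\bx)^2$ with $i=\iota'$ reduces, on using the equations for the independent indices, to $\bigl(\sum_{\iota=1}^r \lambda_{\iota'\iota}\bigr)(\bx)^2 = (\bx)^2$; since $(\bx)^2 \ne 0$ this holds for every $\iota'$ precisely when $\sum_{\iota=1}^r \lambda_{\iota'\iota} = 1$. This gives at once the non-existence clause when the condition fails, and, when it holds, collapses the $m$ equations $c_{ia}\bx^a = (\bx)^2$ to the $r$ independent ones $c_{\iota a}\bx^a = (\bx)^2$.

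Next I would verify the closed forms by direct substitution. With $k$ given by eq. (\ref{eq.: k-constant 2}) --- which is the squared Euclidean norm of the vector of components $(B^{-1})^{ab}\sum_{i=1}^m c_{ib}$ --- the candidate $\bx^a$ of eq. (\ref{eq.: X-critical point 2}) should satisfy $(\bx)^2 = k^{-1}$ and $c_{\iota a}\bx^a = k^{-1}$ for each $\iota$. The transparent way to see the second identity is to write, in the rotated frame, the $m\times r$ coupling matrix as $C = \Lambda C_0$, where $C_0$ is the invertible $r\times r$ block formed by the independent rows and $\Lambda$ is the $m\times r$ dependency matrix, whose row-sums all equal $1$ by the previous step, so that $\Lambda e_r = e_m$ for the all-ones vectors; then $\bx = k^{-1}C_0^{-1}(\Lambda^T\Lambda)^{-1}\Lambda^T e_m$, and the identity $\Lambda^T e_m = \Lambda^T\Lambda e_r$ collapses $c_\iota \cdot \bx$ to $k^{-1}$. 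Finally, with $(\by)^2 = 1-k^{-1}$ thereby fixed, the remaining equation $\sum_{i=1}^m {c_i}^a (\by^i)^2 = \bx^a (\by)^2$ is a linear system for the $m$ quantities $(\by^i)^2$ whose right-hand side lies in $W$, hence is solvable; since $m>r$ its solution set is $(m-r)$-dimensional, which is exactly the asserted non-uniqueness. I would flag as caveats that positivity of the $(\by^i)^2$, and $k>1$ so that $(\by)^2>0$, are additional model-dependent conditions rather than part of the algebraic classification.

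The only delicate point I anticipate is the reduction in the first step --- recognizing that $B$ is singular on the full field space and that the construction lives on the span $W$ of the coupling vectors (equivalently, rotating away the $n-r$ flat directions). Once that is in place, the row-sum equivalence is the conceptual heart of the lemma and is clean, and verifying the formulas is bookkeeping built on the factorization $C = \Lambda C_0$.
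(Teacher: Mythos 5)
Your proposal is correct, and there is in fact no in-paper argument to compare it against: lemmas \ref{lemma: (X,Y)-plane critical points (rank=m)}--\ref{lemma: (X,Y)-plane critical points (x^2=0)} are stated as a summary of the classification of ref. \cite{Collinucci:2004iw}, with no proof given in the appendix, so your verification supplies the missing algebra along the natural route. The two essential points are handled correctly: (i) for a rolling proper critical point with $(\bx)^2\neq 0$ and $(\by)^2\neq 0$, contracting $c_{ia}\bx^a=(\bx)^2$ with the dependency relations shows that consistency of the overdetermined system is equivalent to the unit-row-sum condition, which simultaneously yields the non-existence clause (with the $(\bx)^2=0$ and all-$\by$-vanishing points correctly set aside, as they belong to lemma \ref{lemma: (X,Y)-plane critical points (x^2=0)} and the degenerate class); (ii) the factorization of the coupling matrix into the dependency matrix times the invertible $r\times r$ block, together with the all-ones identity encoding the row-sum condition, reduces eq. (\ref{eq.: X-critical point 2}) to $\bx^a = k^{-1}(C_0^{-1})^{a\iota} e_\iota$, from which $c_{ia}\bx^a=(\bx)^2=1/k$ and the value of $k$ in eq. (\ref{eq.: k-constant 2}) follow by direct substitution, while the $(\by^i)^2$ solve an underdetermined linear system whose $(m-r)$-dimensional affine solution set is precisely the asserted non-uniqueness ($\bx^a$ itself being unique). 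One step worth spelling out: the same identity (all row sums equal one) also guarantees that \emph{every} solution of $\sum_i c_i{}^a(\by^i)^2=(1-1/k)\,\bx^a$ automatically has $\sum_i(\by^i)^2=1-1/k$, so imposing the sphere constraint is consistent rather than an extra equation --- this follows from a one-line computation in your factorized frame but should be stated. Your reading of $B^{-1}$ as the inverse on the span of the coupling vectors (equivalently working at $r=n$ after rotating away flat directions) is the right way to make the stated formulas literally meaningful, your summation range $\sum_{\iota=1}^{r}$ silently corrects the typo $\sum_{\iota=1}^{m}$ (and $r_{\iota'\iota}$ for $\lambda_{\iota'\iota}$) in the lemma, and flagging positivity of the $(\by^i)^2$ and $k>1$ as model-dependent extras matches how the paper treats the rank-$m$ case, where $(\by^i)^2<0$ is explicitly allowed in the algebraic classification.
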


\vspace{8pt}
\begin{lemma} \label{lemma: (X,Y)-plane critical points (x^2=0)}
    If $\mathrm{rank} \, c_{ia} = m$, there are no proper critical points such that $\smash{(\bx)^2 = 0}$. If $\mathrm{rank} \, c_{ia} < m$, then proper critical points such that $\smash{(\bx)^2 = 0}$ exist and they are such that the $\smash{(\by^i)^2}$-terms belong to the $A$-matrix kernel, i.e. $\smash{(\by^i)^2 \in \ker A}$.
\end{lemma}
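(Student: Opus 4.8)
The plan is to reduce the lemma to a statement in linear algebra about the matrix $c_{ia}$. Since the field-space metric is a Kronecker delta, $\smash{(\bx)^2 = \bx_a \bx^a = 0}$ is equivalent to $\smash{\bx^a = 0}$ for every $a$-index. Substituting $\smash{\bx^a = 0}$ into the defining equations for proper critical points, the second family $\smash{(\bx)^2 - c_{ia}\bx^a = 0}$ is satisfied identically, while the first family collapses to $\smash{\sum_{i=1}^m {c_i}^a (\by^i)^2 = 0}$ for every $a$-index. Thus a proper critical point with $\smash{(\bx)^2 = 0}$ is precisely a choice of the $\smash{(\by^i)^2}$-values — also constrained by eq.~(\ref{(X,Y)-plane sphere-condition}) — that annihilates $c_{ia}$ on the $i$-index.

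The next step is to identify the solution set of $\smash{\sum_{i=1}^m c_{ia}(\by^i)^2 = 0}$ with $\smash{\ker A}$, where $\smash{A_{ij} = c_{ia}{c_j}^a}$ is the Gram matrix of lemma~\ref{lemma: (X,Y)-plane critical points (rank=m)}. Indeed, for any vector $v^i$ one has the sum-of-squares identity $\smash{\sum_{i,j} v^i A_{ij} v^j = \sum_a \bigl(\sum_i c_{ia} v^i\bigr)^2 \geq 0}$, so $\smash{Av = 0}$ forces each $\smash{\sum_i c_{ia}v^i}$ to vanish, and the converse is obvious; hence the critical-point condition is exactly $\smash{(\by^i)^2 \in \ker A}$. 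By rank--nullity, $\smash{\dim \ker A = m - \mathrm{rank}\, A = m - \mathrm{rank}\, c_{ia}}$, the last equality holding because $A$ is a Gram matrix (and $\smash{\mathrm{rank}\, c_{ia} = \mathrm{rank}\, \gamma_{ia}}$ up to the overall rescaling). If $\smash{\mathrm{rank}\, c_{ia} = m}$ this kernel is trivial, so $\smash{(\by^i)^2 = 0}$ for all $i$, whence eq.~(\ref{(X,Y)-plane sphere-condition}) gives $\smash{(\bx)^2 = 1 \neq 0}$ — a contradiction, so no such proper critical point exists. If $\smash{\mathrm{rank}\, c_{ia} < m}$ the kernel is non-trivial, and taking a non-zero $\smash{w \in \ker A}$ with $\smash{\sum_i w^i \neq 0}$, rescaled so that $\smash{\sum_i w^i = 1}$, the configuration $\smash{\bx^a = 0}$, $\smash{(\by^i)^2 = w^i}$ solves the full system (including eq.~(\ref{(X,Y)-plane sphere-condition})), with $\smash{(\by^i)^2 \in \ker A}$ by construction.

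I expect the existence half to be the main obstacle. One must check that $\smash{\ker A}$ really does contain a vector $w$ with $\smash{\sum_i w^i \neq 0}$ — equivalently, that $\smash{\underline{e} = (1,\dots,1)}$ is not orthogonal to $\smash{\ker A}$. This is precisely the complement of the condition $\smash{\sum_\iota \lambda_{\iota' \iota} = 1}$ in lemma~\ref{lemma: (X,Y)-plane critical points (rank<m)}, and it marks the dividing line between the existence of proper critical points with $\smash{(\bx)^2 \neq 0}$ and those with $\smash{(\bx)^2 = 0}$. One should moreover verify that the signs of the resulting $\smash{(\by^i)^2}$-values are compatible with the sign-preservation proven in lemma~\ref{lemma: Y>0}, so that the algebraic solution is a genuine admissible configuration of the autonomous system. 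The two remaining ingredients — the linearization and the sum-of-squares identification of $\smash{\ker A}$ — are routine.
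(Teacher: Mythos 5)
Your reduction is sound: with a Euclidean field-space metric, $(\bx)^2=0$ forces $\bx^a=0$, the $\y$-equations are then automatic, the $\x$-equations collapse to $\sum_{i=1}^m c_{ia}(\by^i)^2=0$, and the sum-of-squares identity $\sum_{i,j}v^iA_{ij}v^j=\sum_a\bigl(\sum_i c_{ia}v^i\bigr)^2$ correctly identifies the solution set with $\ker A$; since $\mathrm{rank}\,A=\mathrm{rank}\,c_{ia}$, the rank-$m$ case gives $(\by^i)^2=0$ for all $i$ and hence the contradiction $(\bx)^2=1$, which settles the nonexistence half and the kernel characterization completely. Note that the paper does not actually prove this lemma itself: it is quoted as part of the classification of critical points taken from ref.~\cite{Collinucci:2004iw}, so there is no internal proof to compare against; on its merits, this part of your argument is correct.

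The genuine gap is the one you flagged yourself, and it is not a formality: $\mathrm{rank}\,c_{ia}<m$ alone does \emph{not} guarantee a kernel vector $w$ with $\sum_i w^i\neq 0$, and without one the constraint $(\bx)^2+(\by)^2=1$ cannot be satisfied at $\bx^a=0$. A concrete counterexample is $n=1$, $m=2$, $c_{11}=c_{21}\neq 0$: the rank is $1<2$, but $\ker A=\mathrm{span}\,(1,-1)$ is orthogonal to $\underline{e}=(1,1)$, so every kernel element has $(\by)^2=0$ and no proper critical point with $(\bx)^2=0$ exists. Your identification of the missing hypothesis is right — $\underline{e}\not\perp\ker A$ is exactly the negation of the condition $\sum_{\iota}\lambda_{\iota'\iota}=1$ (for all $\iota'$) of lemma~\ref{lemma: (X,Y)-plane critical points (rank<m)}, since the relation vectors built from the $\lambda_{\iota'\iota}$ and a $-1$ in the $\iota'$-slot span $\ker A$ and have component sum $\sum_{\iota}\lambda_{\iota'\iota}-1$ — but you leave it as an expectation rather than proving it or noting that it can fail. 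As it stands, your argument establishes the existence clause only under the extra hypothesis that some linear relation among the rows $c_{ia}$ has coefficients not summing to one; the unconditional existence claim for every $\mathrm{rank}\,c_{ia}<m$ is not reached, and in fact is false without that proviso, so the lemma's existence clause must be read as implicitly carrying it (only the statement that any such point has $(\by^i)^2\in\ker A$ holds unconditionally). The side remark about sign compatibility with lemma~\ref{lemma: Y>0} concerns the dynamics and is not needed for the algebraic classification itself.
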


\vspace{8pt}
\begin{remark}
    Along with the solutions discussed in lemmas \ref{lemma: (X,Y)-plane critical points (rank=m)}, \ref{lemma: (X,Y)-plane critical points (rank<m)} and \ref{lemma: (X,Y)-plane critical points (x^2=0)}, which are proper critical points, one also has non-proper critical points: the non-vanishing $\y$-terms give a truncated system that can be solved exactly as above.
\end{remark}

\vspace{8pt}
\begin{remark}
    As they are solutions to the same equations, since $f(t)>0$ at all times $t \geq t_0$, the critical points $\smash{(\overline{x}^a, \overline{y}^i)}$ of the autonomous system in eqs. (\ref{x-equation}, \ref{y-equation}), along with eqs. (\ref{f-equation}, \ref{(x,y)-plane sphere-condition}), are identical, namely $\smash{(\overline{x}^a, \overline{y}^i) = (\bx^a, \by^i)}$.
\end{remark}

\vspace{12pt}

Here ends the overview of the main results of ref. \cite{Collinucci:2004iw} and begins an original discussion of the late-time convergence of the solutions to the critical points of the autonomous system.

\vspace{12pt}

Perturbative analyses of the late-time behavior of autonomous systems in cosmological scenarios appear in refs. \cite{Halliwell:1986ja, Malik:1998gy, Coley:1999mj, Guo:2003eu, Guo:2003rs, Kim:2005ne, Hartong:2006rt} and they show the conditions under which the critical points are perturbative late-time attractors. Below are a series of mathematical results that show the conditions under which the critical points are late-time attractors; as the current analysis is not perturbative, the convergence results are universal. A formulation in terms of physical observables and an analysis of the implications of such convergence results is in the main text.

\vspace{12pt}

From now on it is assumed that $\smash{\lim_{t \to \infty} \varphi(t) = \infty}$. Furthermore, let $\y^i(t_0)>0$ for all $i$-directions, unless differently stated. Finally, let $\mathrm{rank} \, c_{ia} = m$.

\vspace{8pt}
\begin{remark}
    As obvious from eqs. (\ref{X-equation}, \ref{Y-equation}), all proper critical points satisfy the identities
    \begin{subequations}
        \begin{align}
            c_{ia} \bx^a & = (\bx)^2, \label{eq.: cbar{X}-identity} \\
            \sum_{i=1}^m c_{ia} (\by^i)^2 & = \overline{x}_a (\by)^2. \label{eq.: cbar{Y}-identity}
        \end{align}
    \end{subequations}
    It is obvious that eqs. (\ref{eq.: cbar{X}-identity}, \ref{eq.: cbar{Y}-identity}) are more restrictive than the sole eq. (\ref{eq.: generic X-condition}).  Below are results for proper critical points $\smash{(\bx^a, \by^i)}$ satisfying both eqs. (\ref{eq.: cbar{X}-identity}, \ref{eq.: cbar{Y}-identity}). Of course, all results so far apply independently of whether critical points exist since solutions to eq. (\ref{eq.: cbar{X}-identity}, \ref{eq.: cbar{Y}-identity}) are also solutions to eq. (\ref{eq.: generic X-condition}).
\end{remark}

\vspace{8pt}
\begin{lemma} \label{lemma: Y^2-equality + X^2 integral bound}
    For all times $t \geq t_0$, the identity holds
    \begin{equation} \label{eq.: Y^2-equality}
        \prod_{i=1}^n [\y^i(t)]^{(\by^i)^2} = \biggl[ \prod_{i=1}^n [\y^i(t_0)]^{(\by^i)^2} \biggr] \, \e^{(\by)^2 \!\! \int_{t_0}^t \de t' \, [[\x(t')]^2 - \bx_a \x^a(t')]}
    \end{equation}
    In particular, if $\smash{(\by^i)^2 \geq 0}$ for all $i$-directions there exists a positive constant $\omega>0$ such that
    \begin{equation} \label{eq.: X^2 integral bound}
        \int_{t_0}^t \de t' \, \bigl[ [\x(t')]^2 - \bx_a \x^a(t') \bigr] \leq \omega.
    \end{equation}
\end{lemma}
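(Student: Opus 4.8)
The plan is to obtain eq.~(\ref{eq.: Y^2-equality}) by a logarithmic-derivative computation and then to extract eq.~(\ref{eq.: X^2 integral bound}) from it using the a priori bounds supplied by the sphere constraint eq.~(\ref{(X,Y)-plane sphere-condition}). Since $\y^i(t_0)>0$ for every $i$, Lemma~\ref{lemma: Y>0} gives $\y^i(t)>0$ for all $t\ge t_0$, so the quantity $P(t)=\prod_{i=1}^m[\y^i(t)]^{(\by^i)^2}$ is well defined and strictly positive and $\log P(t)=\sum_{i=1}^m(\by^i)^2\log\y^i(t)$ may be differentiated. First I would use eq.~(\ref{Y-equation}) in the form $\dot{\y}^i/\y^i=(\x)^2-c_{ia}\x^a$ to get
\[
\frac{\de}{\de t}\log P(t)=\sum_{i=1}^m(\by^i)^2\bigl[(\x)^2-c_{ia}\x^a\bigr]=(\by)^2\,(\x)^2-\x^a\sum_{i=1}^m c_{ia}(\by^i)^2 ,
\]
and then invoke the proper-critical-point identity eq.~(\ref{eq.: cbar{Y}-identity}), $\sum_{i=1}^m c_{ia}(\by^i)^2=\bx_a(\by)^2$, which collapses the right-hand side to $(\by)^2\bigl[(\x)^2-\bx_a\x^a\bigr]$. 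Integrating from $t_0$ to $t$ and exponentiating yields eq.~(\ref{eq.: Y^2-equality}).

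For the second statement I would feed in eq.~(\ref{(X,Y)-plane sphere-condition}): it forces $(\y(t))^2=1-(\x(t))^2\le1$, hence $0<\y^i(t)\le1$ for every $i$ and every $t\ge t_0$. When $(\by^i)^2\ge0$ for all $i$, each factor then satisfies $[\y^i(t)]^{(\by^i)^2}\le1$, so $P(t)\le1$; combined with eq.~(\ref{eq.: Y^2-equality}) this gives $\e^{(\by)^2\int_{t_0}^t\de t'\,[(\x)^2-\bx_a\x^a]}\le P(t_0)^{-1}$. Since $P(t_0)$ is a fixed finite positive number, taking logarithms bounds $(\by)^2\int_{t_0}^t\de t'\,[(\x)^2-\bx_a\x^a]$ by the $t$-independent constant $-\log P(t_0)$; dividing by $(\by)^2>0$ then gives eq.~(\ref{eq.: X^2 integral bound}) with $\omega=\max\{1,\,-(\by)^{-2}\log P(t_0)\}$. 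That $(\by)^2>0$ is automatic for the proper critical points of interest: under condition~(b), Lemma~\ref{lemma: (X,Y)-plane critical points (rank=m)} gives $(\by)^2=(k-1)/k$ with $k>1$, whereas the case $(\by)^2=0$ makes the statement vacuous.

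The computation is essentially bookkeeping; the two genuine inputs are the sign preservation of the $\y^i$ from Lemma~\ref{lemma: Y>0}, needed to form and differentiate $\log P(t)$, and the boundedness $\y^i(t)\le1$ from the sphere constraint, which together with the sign hypothesis $(\by^i)^2\ge0$ is exactly what turns the exact identity eq.~(\ref{eq.: Y^2-equality}) into the one-sided bound eq.~(\ref{eq.: X^2 integral bound}). The only subtlety worth flagging is that dividing out $(\by)^2$ requires the critical point not to be degenerate non-proper, which is guaranteed once the $\lambda^i$ of condition~(b) are non-negative with positive sum.
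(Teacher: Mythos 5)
Your proposal is correct and follows essentially the same route as the paper: differentiate $\ln\prod_i(\y^i)^{(\by^i)^2}$ using eq.~(\ref{Y-equation}) and the critical-point identity eq.~(\ref{eq.: cbar{Y}-identity}), integrate to get eq.~(\ref{eq.: Y^2-equality}), then use $0<\y^i\leq 1$ together with $(\by^i)^2\geq 0$ to bound the product by $1$ and deduce eq.~(\ref{eq.: X^2 integral bound}). Your explicit remark that one must divide by $(\by)^2>0$ (guaranteed for the proper critical points under condition (ii)) is a small point the paper leaves implicit, but it does not change the argument.
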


\begin{proof}
    As $\y^i(t) > 0$ at all times $t \geq t_0$, one can write eq. (\ref{Y-equation}) as
    \begin{align*}
        \dfrac{\de}{\de t} \, \mathrm{ln} \, \y^i = \dfrac{\dot{\y}^i}{\y^i} = (\x)^2 - c_{ia} \x^a.
    \end{align*}
    In view of eq. (\ref{eq.: cbar{Y}-identity}), this in turn allows one to write
    \begin{align*}
        \sum_{i=1}^m (\by^i)^2 \dfrac{\dot{\y}^i}{\y^i} = \sum_{i=1}^m (\by^i)^2 \bigl[ (\x)^2 - c_{ia} \x^a \bigr] = \bigl[ (\x)^2 - \bx_a \x^a \bigr] \, (\by)^2.
    \end{align*}
    By manipulating this as
    \begin{align*}
        \sum_{i=1}^m (\by^i)^2 \, \dfrac{\de}{\de t} \, \mathrm{ln} \, \y^i = \dfrac{\de}{\de t} \, \mathrm{ln} \prod_{i=1}^m (\y^i)^{(\by^i)^2} = (\by)^2 \bigl[ (\x)^2 - \bx_a \x^a \bigr],
    \end{align*}
    an integration gives
    \begin{align*}
        \mathrm{ln} \prod_{i=1}^m \dfrac{[\y^i(t)]^{(\by^i)^2}}{[\y^i(t_0)]^{(\by^i)^2}} = (\by)^2 \!\! \int_{t_0}^t \de t' \, \bigl[ [\x(t')]^2 - \bx_a \x^a(t') \bigr],
    \end{align*}
    which gives eq. (\ref{eq.: Y^2-equality}). Because at all times $t \geq t_0$ the variables $\y^i(t)$ are bounded as $0 < \y^i(t) \leq 1$, if $\smash{(\by^i)^2 \geq 0}$ for all $i$-directions, the right-hand side of eq. (\ref{eq.: Y^2-equality}) is bounded as $\smash{\prod_{i=1}^n [\y^i(t)]^{(\by^i)^2} \leq 1}$, which implies the inequality
    \begin{align*}
        \int_{t_0}^t \de t' \, [[\x(t')]^2 - \bx_a \x^a(t')] \leq - \mathrm{ln} \, \prod_{i=1}^n [\y^i(t_0)]^{\frac{(\by^i)^2}{(\by)^2}} = \omega.
    \end{align*}
\end{proof}

\vspace{8pt}
\begin{lemma} \label{lemma: varphi-bound}
    If $\smash{(\by^i)^2 \geq 0}$ for all $i$-directions, for any positive constant $\delta$ such that $0 < \delta < (\by)^2$, there exists a time $t_\delta$ such that, for all times $t \geq t_\delta$, one has
    \begin{equation} \label{eq.: varphi-bound}
        \varphi[s(t)] = \int_{t_0}^{t} \de t' \, [\y(t')]^2 \geq t \delta.
    \end{equation}
\end{lemma}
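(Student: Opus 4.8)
The plan is to rewrite the integral $\varphi[s(t)]$ by means of the sphere constraint and reduce the statement to an asymptotic upper bound on $\int_{t_0}^t [\x(t')]^2\,\de t'$. By eq. (\ref{(X,Y)-plane sphere-condition}) one has $[\y(t')]^2 = 1 - [\x(t')]^2$, and since a proper critical point also lies on the constraint surface, $(\by)^2 = 1 - (\bx)^2$. Hence
\begin{align*}
    \varphi[s(t)] = \int_{t_0}^t \de t' \, [\y(t')]^2 = (t - t_0) - \int_{t_0}^t \de t' \, [\x(t')]^2 ,
\end{align*}
so it is enough to show that $\int_{t_0}^t [\x(t')]^2\,\de t' \le (\bx)^2 \, t + o(t)$ as $t \to \infty$: this yields $\varphi[s(t)] \ge (\by)^2 \, t - o(t)$, and the claim follows at once, since for $(\by)^2 > \delta$ the quantity $((\by)^2 - \delta)\,t - o(t)$ is eventually non-negative.

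The next step is to split $[\x(t')]^2 = \bx_a \x^a(t') + \bigl( [\x(t')]^2 - \bx_a \x^a(t') \bigr)$ and estimate the two pieces separately. The second piece is controlled by the preceding lemma: because $(\by^i)^2 \ge 0$ for every $i$-direction, eq. (\ref{eq.: X^2 integral bound}) bounds $\int_{t_0}^t \bigl( [\x(t')]^2 - \bx_a \x^a(t') \bigr)\,\de t'$ by the time-independent constant $\omega$. For the first piece I would invoke theorem \ref{theorem: Xbar{X}-limit}, which applies because a proper critical point satisfies eq. (\ref{eq.: generic X-condition}) by eq. (\ref{eq.: cbar{X}-identity}); it gives $\bx_a \x^a(t) \to (\bx)^2$. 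Since $\x^a(t) \in [-1,1]$, the function $t \mapsto \bx_a \x^a(t)$ is bounded and convergent, so a Cesàro-type averaging argument yields $\frac{1}{t}\int_{t_0}^t \bx_a \x^a(t')\,\de t' \to (\bx)^2$, i.e. $\int_{t_0}^t \bx_a \x^a(t')\,\de t' = (\bx)^2\,t + o(t)$. Adding the two estimates gives the desired bound, and choosing $t_\delta$ so large that $o(t) \le ((\by)^2 - \delta)\,t$ for $t \ge t_\delta$ finishes the proof.

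The delicate point — and the one I expect to be the main obstacle — is the passage from the pointwise limit $\bx_a \x^a(t) \to (\bx)^2$ to the integral asymptotics $\int_{t_0}^t \bx_a \x^a(t')\,\de t' = (\bx)^2\,t + o(t)$. One is tempted to replace this step by the sharper identity of corollary \ref{corollary: Xbar{X}-bound}, namely $\bx_a \x^a(t) - (\bx)^2 = [\bx_a \x^a(t_0) - (\bx)^2]\,\e^{-\varphi[s(t)]}$, and simply integrate the exponential; but controlling $\int_{t_0}^t \e^{-\varphi[s(t')]}\,\de t'$ presupposes a lower bound on $[\y(t')]^2$, which is exactly what this lemma is meant to establish, so that route is circular. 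The Cesàro argument sidesteps the difficulty because it uses only the boundedness and the convergence of $\bx_a \x^a(t)$, both already available, and it produces precisely the $o(t)$ accuracy the statement requires — nothing sharper is needed.
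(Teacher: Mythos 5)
Your proof is correct and follows essentially the same route as the paper: both arguments rest on theorem \ref{theorem: Xbar{X}-limit} (giving $\bx_a\x^a(t)\to(\bx)^2$, valid since $\bx^a$ solves eq. (\ref{eq.: generic X-condition}) and $\varphi\to\infty$ is the standing assumption) together with the $\omega$-bound of eq. (\ref{eq.: X^2 integral bound}), combined through the constraint $(\x)^2+(\y)^2=1$. The only cosmetic difference is that you turn the pointwise limit into $\int_{t_0}^t \bx_a\x^a(t')\,\de t' = (\bx)^2\,t + o(t)$ by a Cesàro average, whereas the paper picks a slack $\delta'\in(\delta,(\by)^2)$, uses the eventual bound $\bx_a\x^a(t)<1-\delta'$, and absorbs the remainder into a time-independent constant $\omega_{\delta'}$ — the same idea in a slightly different packaging.
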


\begin{proof}
    By eq. (\ref{eq.: Xbar{X}-limit}), for any number $\delta'$ such that $\delta < \delta' < (\by)^2$, one can write
    \begin{align*}
        \lim_{t \to \infty} \overline{x}_a \x^a(t) = (\bx)^2 = 1 - (\by)^2 < 1 - \delta'.
    \end{align*}
    Therefore, there exists a time $\smash{t_{\delta'} > t_0}$ such that for all times $\smash{t \geq t_{\delta'}}$, the inequality holds
    \begin{align*}
        \overline{x}_a \x^a(t) < 1 - \delta'.
    \end{align*}
    For any given time $\smash{t \geq t_{\delta'}}$, this then implies the inequality
    \begin{align*}
        \int_{t_0}^t \de t' \, \bigl[ [\x(t')]^2 - \bx_a \x^a(t') \bigr] & = \int_{t_0}^t \de t' \, \bigl[ [\x(t')]^2 - (1 - \delta') \bigr] + \int_{t_0}^t \de t' \, \bigl[ (1 - \delta') - \bx_a \x^a(t') \bigr] \\
        & = \int_{t_0}^t \de t' \, \bigl[ [\x(t')]^2 - (1 - \delta') \bigr] + \int_{t_0}^{t_{\delta'}} \de t' \, \bigl[ (1 - \delta') - \bx_a \x^a(t') \bigr] + \int_{t_{\delta'}}^t \de t' \, \bigl[ (1 - \delta') - \bx_a \x^a(t') \bigr] \\
        & > \int_{t_0}^t \de t' \, \bigl[ [\x(t')]^2 - (1 - \delta') \bigr] + \int_{t_0}^{t_{\delta'}} \de t' \, \bigl[ (1 - \delta') - \bx_a \x^a(t') \bigr],
    \end{align*}
    which in turn, in view of eq. (\ref{eq.: X^2 integral bound}), implies the chain of inequalities
    \begin{align*}
        \omega \geq \int_{t_0}^t \de t' \, \bigl[ [\x(t')]^2 - \bx_a \x^a(t') \bigr] > \int_{t_0}^t \de t' \, \bigl[ [\x(t')]^2 - (1 - \delta') \bigr] + \int_{t_0}^{t_{\delta'}} \de t' \, \bigl[ (1 - \delta') - \bx_a \x^a(t') \bigr].
    \end{align*}
    One is thus able to further write
    \begin{align*}
        - \int_{t_0}^t \de t' \, \bigl[ [\x(t')]^2 - (1 - \delta') \bigr] > - \omega + \int_{t_0}^{t_{\delta'}} \de t' \, \bigl[ (1 - \delta') - \bx_a \x^a(t') \bigr] \geq - \omega - \int_{t_0}^{t_{\delta'}} \de t' \, \bigl\ab (1 - \delta') - \bx_a \x^a(t') \bigr\ab.
    \end{align*}
    By defining the positive constant
    \begin{align*}
        \omega_{\delta'} = \omega +  \int_{t_0}^{t_{\delta'}} \de t' \, \bigl\ab (1 - \delta') - \bx_a \x^a(t') \bigr\ab,
    \end{align*}
    this eventually gives the inequality
    \begin{align*}
        \varphi[s(t)] = \int_{t_0}^{t} \de t' \, [\y(t')]^2 = \delta' (t-t_0) - \int_{t_0}^t \de t' \, \bigl[ [\x(t')]^2 - (1 - \delta') \bigr] > \delta' (t-t_0) - \omega_{\delta'}.
    \end{align*}
    For all times $\smash{t > (\delta' t_0 + \omega_{\delta'}) / (\delta' - \delta)}$, one then trivially has eq. (\ref{eq.: varphi-bound}).
\end{proof}

\vspace{8pt}
\begin{remark} \label{remark: XbarX integral finiteness}
    By eq. (\ref{eq.: varphi-bound}), the function $\varphi(t)$ is bounded from below as $\smash{\varphi(t) \geq t \delta}$, for large-enough times $t \geq t_\delta$. This means that the function $\smash{\overline{x}_a \x^a(t)}$ approaches the value $\smash{(\bx)^2}$ at an exponential rate, by eq. (\ref{eq.: Xbar{X}-bound}): in particular, at all times $t \geq t_\delta$, one can write the chain of inequalities
    \begin{align*}
        \bigl\ab\bx_a \x^a(t) - (\bx)^2\bigr\ab  \leq \bigl\ab \bx_a \x^a(t_0) \, - (\bx)^2 \bigr\ab \, \e^{- \varphi[s(t)]} \leq \bigl\ab \bx_a \x^a(t_0) \, - (\bx)^2 \bigr\ab \, \e^{- t \delta},
    \end{align*}
    and therefore the integral is finite
    \begin{equation} \label{eq.: XbarX integral finiteness}
        \theta = \int_{t_0}^\infty \de t \, \bigl\ab \bx_a \x^a(t) \, - (\bx)^2 \bigr\ab < \infty.
    \end{equation}
\end{remark}

\vspace{8pt}
\begin{remark} \label{remark: (X-barX)^2 integral bouund}
    In view of the trivial inequality
    \begin{align*}
        \int_{t_0}^\infty \de t \, [\x_a(t) - \bx_a] [\x^a(t) - \bx^a] & = \int_{t_0}^\infty \de t \, \bigl[ [\x(t)]^2 - \bx_a \x^a(t) \bigr] - \int_{t_0}^\infty \de t \, \bigl[ \bx_a \x^a(t) \, - (\bx)^2 \bigr] \\
        & \leq \int_{t_0}^\infty \de t \, \bigl[ [\x(t)]^2 - \bx_a \x^a(t) \bigr] + \int_{t_0}^\infty \de t \, \bigl\ab \bx_a \x^a(t) \, - (\bx)^2 \bigr\ab,
    \end{align*}
    by combining eqs. (\ref{eq.: X^2 integral bound}, \ref{eq.: XbarX integral finiteness}), one can show the fundamental integral inequality
    \begin{equation} \label{eq.: (X-barX)^2 integral bouund}
        \int_{t_0}^\infty \de t \, [\x_a(t) - \bx_a] [\x^a(t) - \bx^a] \leq \omega + \theta < \infty.
    \end{equation}
\end{remark}

\vspace{8pt}
\begin{theorem} \label{theorem: critical-point convergence}
    If the initial conditions are such that $\smash{\y^i(t_0)>0}$ and if the critical points are such that $\smash{(\by^i)^2 \geq 0}$ for all $i$-directions, then one has
    \begin{equation} \label{eq.: critical-point convergence}
        \lim_{t \to \infty} \x^a(t) = \bx^a.
    \end{equation}
\end{theorem}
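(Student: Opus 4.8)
The plan is to first establish convergence of $\x^a(t)$ to the proper critical point $\smash{\bx^a}$ in an integrated ($L^1$) sense, and then to upgrade this to pointwise convergence using that the trajectory has uniformly bounded velocity.

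Since $\smash{\mathrm{rank}\,c_{ia}=m}$, lemma~\ref{lemma: (X,Y)-plane critical points (rank=m)} supplies a unique proper critical point $\smash{(\bx^a,\by^i)}$, satisfying eqs.~(\ref{eq.: cbar{X}-identity},~\ref{eq.: cbar{Y}-identity}); in the case of interest it is non-degenerate, $\smash{(\by)^2>0}$. From $\smash{\y^i(t_0)>0}$ and lemma~\ref{lemma: Y>0} one has $\smash{\y^i(t)>0}$ for all $t\ge t_0$, so the hypothesis $\smash{(\by^i)^2\ge 0}$ lets lemma~\ref{lemma: Y^2-equality + X^2 integral bound} apply, giving the uniform bound $\smash{\int_{t_0}^t \de t'\,[[\x(t')]^2-\bx_a\x^a(t')]\le\omega}$. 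The same input feeds lemma~\ref{lemma: varphi-bound}, which gives $\smash{\varphi[s(t)]\ge t\delta}$ for all large $t$; combined with corollary~\ref{corollary: Xbar{X}-bound} this yields the exponential estimate $\smash{|\bx_a\x^a(t)-(\bx)^2|\le|\bx_a\x^a(t_0)-(\bx)^2|\,\e^{-t\delta}}$, hence by remark~\ref{remark: XbarX integral finiteness} the finite integral $\smash{\theta=\int_{t_0}^\infty\de t\,|\bx_a\x^a(t)-(\bx)^2|<\infty}$. Adding the two estimates as in remark~\ref{remark: (X-barX)^2 integral bouund} then gives
\begin{equation*}
    \int_{t_0}^\infty \de t\;[\x_a(t)-\bx_a][\x^a(t)-\bx^a]\le\omega+\theta<\infty,
\end{equation*}
so the nonnegative function $\smash{g(t)=[\x_a(t)-\bx_a][\x^a(t)-\bx^a]}$ is integrable on $[t_0,\infty)$.

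The second step is to pass from $\smash{\int_{t_0}^\infty g<\infty}$ to $\smash{\lim_{t\to\infty}g(t)=0}$. For this I would use that $\x^a$ solves eq.~(\ref{X-equation}) with a uniformly bounded right-hand side: the constraint~(\ref{(X,Y)-plane sphere-condition}) forces $\smash{|\x^a|\le1}$, $\smash{(\y)^2\le1}$ and $\smash{(\y^i)^2\le1}$, whence $\smash{|\dot{\x}^a|\le 1+\sum_{i=1}^m|{c_i}^a|}$, so each $\x^a$ — and therefore $g$, since $\smash{|\x^a-\bx^a|\le2}$ — is Lipschitz, hence uniformly continuous, on $[t_0,\infty)$. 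A nonnegative uniformly continuous function with finite integral on a half-line necessarily tends to zero (Barbalat's lemma): otherwise there would be $\varepsilon>0$ and a sequence $t_n\to\infty$ with $g(t_n)\ge\varepsilon$, and uniform continuity would give $\eta>0$ with $g>\varepsilon/2$ on each interval $(t_n-\eta,t_n+\eta)$; passing to a subsequence with $t_{n+1}>t_n+2\eta$ makes these intervals disjoint and forces $\smash{\int_{t_0}^\infty g\ge\sum_n\varepsilon\eta=\infty}$, a contradiction. Hence $g(t)\to0$, i.e.\ $\smash{\lim_{t\to\infty}\x^a(t)=\bx^a}$.

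The hard part is precisely this last upgrade: all the preceding lemmas only control $\x^a$ in an averaged (integral) sense, and one genuinely needs the extra regularity input — boundedness of $\smash{\dot{\x}^a}$ from the sphere constraint — before the convergence can be made pointwise. Everything else is an assembly of the quoted results; I would expect no difficulty there, and in fact a Grönwall-type estimate run directly on $g(t)$ via eqs.~(\ref{X-equation},~\ref{eq.: cbar{X}-identity},~\ref{eq.: cbar{Y}-identity}) together with $\smash{\varphi[s(t)]\ge t\delta}$ should even yield an exponential convergence rate, though the Barbalat argument already suffices for the stated result.
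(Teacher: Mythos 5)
Your proposal is correct and follows essentially the same route as the paper: the same chain of lemmas and remarks gives $\int_{t_0}^\infty [\x_a(t)-\bx_a][\x^a(t)-\bx^a]\,\de t<\infty$, and the paper then concludes exactly as you do, from the fact that this nonnegative function is Lipschitz (the paper leaves the Barbalat-type step and the boundedness of $\dot{\x}^a$ implicit, which you spell out).
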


\begin{proof}
    This result is obvious because the integral over an infinite interval of the positive-definite and Lipschitz function $\smash{\nu(t) = [\x_a(t) - \bx_a] [\x^a(t) - \bx^a]}$ is finite.
\end{proof}

\vspace{8pt}
\begin{remark}
    Because $f(t)>0$ at all times, it is obvious that the same convergence results apply equally for the original variables $\smash{x^a(t)}$.
\end{remark}

\vspace{8pt}
\begin{remark}
    Proper critical points $\smash{(\bx^a, \by^i)}$ are not necessarily late-time attractors. A simple counter-example is the following: if $[\y^i(t_0)]^2 > 0$ and $(\by^i)^2<0$, the latter can never be an attractor because, by lemma \ref{lemma: Y>0}, each $[\y^i(t)]^2$-term has the same sign as the initial condition at all times. In fact, one can see easily that the proof of lemma \ref{lemma: Y^2-equality + X^2 integral bound} immediately breaks down if some of the $\by^i$-terms are negative.
\end{remark}

\newpage

\section{String compactifications and running moduli}
In this appendix we review for completeness the scalar potentials for the dilaton and for the compactification volume of type-II and heterotic string theories in the presence of non-trivial internal curvature, NSNS-fluxes, RR-fluxes, D-branes and O-planes, and/or a non-zero Casimir-energy.

\subsection{String dimensional reductions} \label{app: string dimensional reductions}

Let the field theory describing the low-energy limit of a string-theory model be compactified over an $s_1$-dimensional space $\smash{\mathrm{K}_{s_1}}$ and an $s_2$-dimensional space $\smash{\mathrm{K}_{s_2}}$, leaving only a $d$-dimensional non-compact space $\smash{\mathrm{X}_{1,d-1}}$, with $d+s_1+s_2=10$. In particular, the coordinates $x^\mu$, with $\mu = 0, 1, \dots, d-1$ run in the non-compact spacetime, while the coordinates $y^{m_1}$ and $y^{m_2}$, with $m_1=1,\dots,s_1$ and $m_2=1,\dots,s_2$, label the compact directions. In the simpler case of an isotropic compactification, one can simply set $s_1=10-d$ and $s_2 = 0$; in the more complicated case of an even more anisotropic compactifcation, a generalization is instead trivial.

Let $l_s = 2 \pi \sqrt{\alpha'}$ be the string length, where $T=1/(2\pi\alpha')$ defines the string tension. It is possible to define the $d$-dimensional Einstein-frame metric $\smash{\tilde{g}_{\mu \nu}}$, which determines a line element expressed as $\smash{d \tilde{s}_{1,d-1}^2 = \tilde{g}_{\mu \nu} \, \de x^\mu \de x^\nu}$, by parameterizing the string-frame metric $G_{MN}$, defined through the line element $d s_{10}^2 = G_{MN} \de x^M \de x^N$, as
\begin{equation} \label{d-dim. Einstein-frame metric (phi,omega1,omega2)}
    ds_{1,9}^2 = \dfrac{\vevc \, \e^{\frac{\sdil}{2}} \, \tilde{g}_{\mu \nu} \, \de x^\mu \de x^\nu}{\e^{\frac{2 s_1}{d-2} \omega_1 + \frac{2 s_2}{d-2} \omega_2}} + \e^{\frac{\sdil}{2} + 2 \omega_1} \, \breve{g}_{m_1 n_1} \, \de y^{m_1} \de y^{n_1} + \e^{\frac{\sdil}{2} + 2 \omega_2} \, \breve{g}_{m_2 n_2} \, \de y^{m_2} \de y^{n_2}.
\end{equation}
Here, the internal metrics are assumed to be normalized in such a way that $\smash{\breve{\mathrm{vol}} \, \mathrm{K}_{s_i} = \int_{\mathrm{K}_{s_i}} \de^{s_i} y \, \sqrt{\breve{g}_{s_i}} = l_s^{s_i}}$. Along with the shifted 10-dimensional dilaton $\sdil = \Phi - \mathrm{ln} \, g_s$, where $g_s = \e^{\langle \Phi \rangle}$ is the string-coupling vacuum expectation value, there are two more moduli, the radions $\omega_1$ and $\omega_2$, which control the Einstein-frame volume scales. An arbitrary constant $\vevc$ has also been inserted: fixing it as $\vevc = \e^{2 s_1 \langle \omega_1 \rangle / (d-2)} \, \e^{2 s_2 \langle \omega_2 \rangle / (d-2)}$ makes the $d$-dimensional metric components correspond in all frames, in the vacuum. A discussion of the case where a vacuum expectation value for the dilaton and/or the radions is not defined is conveniently deferred to later comments.

It can be more convenient to work in terms of the $d$-dimensional dilaton $\delta$ and of the string-frame volume-controlling radions $\sigma_1$ and $\sigma_2$, which can be introduced by parameterizing the metric as\footnote{Here, the $(\sdil, \omega_1, \omega_2)$- and $(\delta, \sigma_1, \sigma_2)$-bases can be related via the linear transformations
\begin{align*}
    \sdil & = \delta + \dfrac{s_1}{2} \, \sigma_1 + \dfrac{s_2}{2} \, \sigma_2, \\
    \omega_1 & = - \dfrac{\delta}{4} - \dfrac{s_1-8}{8} \, \sigma_1 - \dfrac{s_2}{8} \, \sigma_2, \\
    \omega_2 & = - \dfrac{\delta}{4} - \dfrac{s_1}{8} \, \sigma_1 - \dfrac{s_2-8}{8} \, \sigma_2,
\end{align*}
or alternatively their inverse
\begin{align*}
    \delta & = \dfrac{d-2}{8} \, \sdil - \dfrac{s_1}{2} \, \omega_1 - \dfrac{s_2}{2} \, \omega_2, \\
    \sigma_1 & = \dfrac{1}{4} \, \sdil + \omega_1, \\
    \sigma_2 & = \dfrac{1}{4} \, \sdil + \omega_2.
\end{align*}
}
\begin{equation} \label{d-dim. Einstein-frame metric (delta,sigma1,sigma2)}
    \begin{split}
        ds_{1,9}^2 = \e^{\frac{4 \delta}{d-2}} \, \vevc \, \tilde{g}_{\mu \nu} \, \de x^\mu \de x^\nu + \e^{2 \sigma_1} \, \breve{g}_{m_1 n_1} \, \de y^{m_1} \de y^{n_1} + \e^{2 \sigma_2} \, \breve{g}_{m_2 n_2} \, \de y^{m_2} \de y^{n_2}.
    \end{split}
\end{equation}
Although the 10-dimensional dilaton and the Einstein-frame radions are more intuitive quantities to work with, the radions are kinetically-mixed in the $d$-dimensional Einsten frame; the $d$-dimensional dilaton and the string-frame radions are instead always diagonal.\footnote{It is possible to diagonalize the kinetic terms for the Einstein-frame radions by a field rotation in the $(\omega_1, \omega_2)$-subspace: this is standard in type-IIB theories, where one keeps working with the 10-dimensional dilaton; in type-IIA theories it is instead typically convenient to work in terms of the $d$-dimensional dilaton.} In particular, in terms of the canonically-normalized fields\footnote{As already mentioned, in the $\smash{(\sdil, \omega_1, \omega_2)}$-basis, one cannot define a canonical normalization without first rotating the $\smash{(\omega_1, \omega_2)}$-subspace basis. However, for a single radion this issue is not there and one can define canonically-normalized fields as
\begin{align*}
    \tsdil & = \dfrac{1}{\sqrt{2}} \, \dfrac{\sdil}{\kappa_d}, \\
    \tomega & = 2 \sqrt{2} \, \dfrac{\sqrt{10-d}}{\sqrt{d-2}} \, \dfrac{\omega}{\kappa_d}.
\end{align*}}
\begin{align*}
    \tdelta & = \dfrac{2}{\sqrt{d-2}} \, \dfrac{\delta}{\kappa_d}, \\
    \tsigma_1 & = \sqrt{s_1} \, \dfrac{\sigma_1}{\kappa_d}, \\
    \tsigma_2 & = \sqrt{s_2} \, \dfrac{\sigma_2}{\kappa_d},
\end{align*}
the string-frame Einstein-Hilbert action, with a gravitational coupling $\smash{2 \kappa_{10}^2 = l_s^8 / 2 \pi}$, and including the dilaton, is in the end reduced to the $d$-dimensional action
\begin{equation} \label{d-dim. Einstein-frame EH-action (delta,sigma1,sigma2)}
    \begin{split}
        S_{\mathrm{EH}} = \int_{\mathrm{X}_{1,d-1}} \de^{d} x \, \sqrt{-\tilde{g}_{1,d-1}} \; \biggl[ \dfrac{1}{2 \kappa_{d}^2} \tilde{R}_{1,d-1} - \dfrac{1}{2} \, \tilde{g}^{\mu \nu} \der_\mu \tdelta \der_\nu \tdelta - \dfrac{1}{2} \, \tilde{g}^{\mu \nu} \der_\mu \tsigma_1 \der_\nu \tsigma_1 - \dfrac{1}{2} \, \tilde{g}^{\mu \nu} \der_\mu \tsigma_2 \der_\nu \tsigma_2 \biggr],
    \end{split}
\end{equation}
where the $d$-dimensional gravitational coupling is
\begin{equation} \label{d-dim. gravitational coupling}
    2 \kappa_d^2 = \dfrac{1}{2 \pi} \dfrac{g_s^2}{\vevc^{\frac{d-2}{2}} l_s^{2-d}} = \dfrac{1}{2 \pi} \dfrac{g_s^2}{l_s^{2-d}} \, \e^{2 \langle \delta \rangle} = \dfrac{g_s^2 \, l_s^{d-2}}{2 \pi \, \e^{s_1 \langle \omega_1\rangle + s_2 \langle \omega_2 \rangle}}.
\end{equation}
One can express the $d$-dimensional Planck mass $\smash{m_{\p,d}^2 = 1 / l_{\p,d}^2 = 1 / (\kappa_d^2)^{\frac{2}{d-2}}}$ as
\begin{equation} \label{d-dim. Planck mass}
    m_{\p,d}^2 = \Bigl( \dfrac{4 \pi}{g_s^2} \Bigr)^{\frac{2}{d-2}} \, \dfrac{\vevc}{l_s^2} = \dfrac{1}{l_s^2} \, \Bigl( \dfrac{4 \pi}{g_s^2} \Bigr)^{\frac{2}{d-2}} \, \e^{- \frac{2 \, \kappa_d \langle \tdelta \rangle}{\sqrt{d-2}}}.
\end{equation}
In these conventions, the $d$-dimensional Planck mass is constant.

A dimensional reduction of a generic field equation from string- to $d$-dimensional Einstein-frame metric also provides the Kaluza-Klein mass scale, which is
\begin{equation} \label{d-dim. KK-scale}
    m_{\mathrm{KK},d}^2 = \e^{\frac{2 \, \kappa_d \langle \tdelta \rangle}{\sqrt{d-2}} - \frac{2 \, \kappa_d \langle \tsigma \rangle}{\sqrt{10-d}}} \, \Bigl( \dfrac{g_s^2}{4 \pi} \Bigr)^{\frac{2}{d-2}} m_{\p,d}^2 = \e^{\frac{- 4 \sqrt{2} \, \kappa_d \langle \tomega \rangle}{\sqrt{d-2} \sqrt{10-d}}} \, \Bigl( \dfrac{g_s^2}{4 \pi} \Bigr)^{\frac{2}{d-2}} m_{\p,d}^2.
\end{equation}
By a dimensional reduction of the string-frame mass $\smash{m_s^2 = 1 / \alpha'}$ of string excitations, one also gets the $d$-dimensional string mass
\begin{equation} \label{d-dim. string scale}
    m_{s,d}^2 = 4\pi^2 \, \e^{\frac{2 \, \kappa_d \langle \tdelta \rangle}{\sqrt{d-2}}} \, \Bigl( \dfrac{g_s^2}{4 \pi} \Bigr)^{\frac{2}{d-2}} m_{\p,d}^2 = 4 \pi^2 \, \e^{\frac{\kappa_d  \langle \tsdil \rangle}{\sqrt{2}} - \frac{\sqrt{10-d} \, \kappa_d \langle \tomega \rangle}{\sqrt{2} \sqrt{d-2}}} \, \Bigl( \dfrac{g_s^2}{4 \pi} \Bigr)^{\frac{2}{d-2}} m_{\p,d}^2.
\end{equation}
Both of these expressions are referred to an isotropic compactification, which only involves one radion, in order to avoid subtleties with different competing Kaluza-Klein scales. One should consider the scalar-field values and not the vacuum-expectation values in case the fields are evolving over time.

If vacuum expectation values for $\sdil$ and/or $\omega_1$ and $\omega_2$ are not defined, then the above definitions of $g_s$ and $\vevc$ must be reconsidered. In terms of the metrics in eqs. (\ref{d-dim. Einstein-frame metric (phi,omega1,omega2)}, \ref{d-dim. Einstein-frame metric (delta,sigma1,sigma2)}), the constants $g_s$ and $\vevc$ are now just arbitrary numbers, since there is no such thing as string- and Einstein-frame non-compact metric components being identical, here. One can check that relative ratios of physical quantities are independent of such $\chi$, while on the other hand it is convenient to set $g_s = 1$.\footnote{
To clarify these statements, one can consider the Kaluza-Klein or string mass, or the dimensional reduction of a scalar-field theory.

Let $\varphi$ be a scalar of string-frame mass $\smash{m_{10}^2 = \xi^2 / l_s^2}$, which corresponds to the string-frame action
\begin{align*}
    S & = \int_{\mathrm{X}_{1,9}} \de^{10} x \, \sqrt{-G_{10}} \; \biggl[ - \dfrac{1}{2} \, G^{M N} \der_M \varphi \der_N \varphi - \dfrac{1}{2} \, m_{10}^2 \, \varphi^2 \biggr].
\end{align*}
In $d$-dimensional Einstein frame, this action is
\begin{align*}
    S = \int_{\mathrm{X}_{1,d-1}} \!\!\!\!\!\! \de^{d} x \, \sqrt{-\tilde{g}_d} \; \e^{2 \sdil} \, \biggl[ & - \dfrac{1}{2} \, \tilde{g}^{\mu \nu} \der_\mu \tilde{\varphi} \der_\nu \tilde{\varphi} - \dfrac{1}{2} \, m_d^2(\sdil,\omega) \, \tilde{\varphi}^2 \biggr],
\end{align*}
with the field-dependent mass term
\begin{align*}
    m_d^2(\sdil,\omega) = \dfrac{\e^{\frac{\sdil}{2}} \, \xi^2}{\e^{2 \omega \, \frac{10-d}{d-2}}} \, \Bigl( \dfrac{g_s^2}{4 \pi} \Bigr)^{\frac{2}{d-2}} \, m_{\p,d}^2,
\end{align*}
and where the scalar field has been normalized as $\smash{\tilde{\varphi} = (g_s \kappa_{10} / \kappa_d) \, \varphi}$. Although a proper canonical normalization is more complicated, since one needs to absorb the $\smash{\e^{2 \sdil}}$-factor, this is enough here to check relative ratios. Such a canonical normalization however induces a non-trivial coupling with the dilaton field.

If one wants to keep the freedom to choose an arbitrary value for the constant $\vevc$, physical results must of course be independent of such a choice, even if the Einstein-frame volume is not fixed. In fact, it is apparent that the ratio of the $d$-dimensional mass with the squared $d$-dimensional Planck mass is independent of $\vevc$ and only depends on the dynamical fields $\sdil$ and $\omega$.

As far the parameter $g_s$ is concerned, if the 10-dimensional dilaton is not stabilized, one should define it as $g_s = 1$. This is a different prescription than the one for the parameter $\vevc$ because the definition of $g_s$ affects the scale of both non-compact and compact directions in the same way.
} A couple of further comments is in order. \begin{enumerate*}[label=(\roman*)]
    \item It is obvious that $g_s$ is just a label in this notation: of course the physically-meaningful string coupling, which controls the perturbative string-loop expansion, is the non-constant function $g_s(\Phi) = \mathrm{ln} \, \Phi$.
    \item One can still assume that the internal metrics are normalized such that $\smash{\breve{\mathrm{vol}} \, \mathrm{K}_{s_i} = l_s^{s_i}}$, since any constant shifts can be absorbed within the parameter $\vevc$.
\end{enumerate*}

\subsection{Scalar potentials for dilaton and volume moduli} \label{app: string-compactification scalar potentials}
If the internal spaces support non-trivial internal curvature, NSNS-fluxes, RR-fluxes, D-branes and O-planes, and/or a non-zero Casimir-energy, then a non-zero scalar potential is generated for the dilaton and the volume moduli. Here we review them one by one for type-II and heterotic strings.

\begin{itemize}
    \item A non-trivial curvature in the compactification space $\smash{\mathrm{K}_{s_i}}$ induces a scalar potential for the dilaton and radion that reads
    \begin{equation} \label{curvature-induced potential}
        V_{R_i} = \Lambda_{R_i} \, \e^{\frac{2}{\sqrt{d-2}} \, \kappa_d \tdelta - \frac{2}{\sqrt{s_i}} \, \kappa_d \tsigma_i} = \Lambda_{R_i} \, \e^{- \frac{2 (d-2+s_i)}{d-2} \, \omega_i - \frac{2 s_2}{d-2} \, \omega_{i_*}}.
    \end{equation}
   Here, the constant terms read
    \begin{equation} \label{Lambda_R}
        \Lambda_{R_i} = - \dfrac{1}{2 \kappa_d^2} \Bigl( \dfrac{g_s^2}{4 \pi} \Bigr)^{\frac{2}{d-2}} \dfrac{l_s^2 \breve{R}_{s_i}}{l_{\p, d}^2},
    \end{equation}
    where $\smash{\breve{R}_{s_i}}$ is the Ricci scalar associated with the $\smash{\breve{g}_{m_i n_i}}$-metric. As the internal curvature can be either positive or negative, so can be the sign of the scalar-potential term.
    \item Let a background NSNS-flux be quantized as $\smash{H_3 = h^{(3, i)} \, \breve{\alpha}_3^i l_s^2}$, where $\smash{h^{(3, i)} \in \mathbb{Z}}$ is an arbitrary integer and $\smash{\breve{\alpha}_3^i}$ is a harmonic 3-form in the internal space $\smash{\mathrm{K}_{s_i}}$,\footnote{Here, the harmonic 3-form is an element of an orthonormal set $\smash{\lbrace (\breve{\alpha}^{i, r})_3, (\breve{\beta}^i_s)_{s_2-3} \rbrace_{r,s}}$ of harmonic forms with respect to the $\smash{\breve{g}_{m_i n_i}}$-metric such that $\smash{\int_{\mathrm{K}_{s_i}} (\breve{\alpha}^{i, r})_3 \wedge (\breve{\beta}^i_s)_{s_i-3} = \delta^r_s}$, where $\smash{(\breve{\beta}^i_r)_{s_i-3}}$ is the harmonic $(s_i-3)$-form associated to the 3-cycle $A^{i,r}$ that is Poincaré-dual to the 3-form $(\breve{\alpha}^{i,r})_3$, and vice-versa. In this way, one finds the quantization condition $\smash{\int_{A} H_3 = \int_{\mathrm{K}_{s_i}} H_3 \wedge \breve{\beta}_{s_i-3} = h^{(3,i)} \, l_s^2}$.} with $s_i \geq 3$. This induces a scalar potential\footnote{Here and below, we do not canonically normalize the Einstein-frame radions because they are kinetically mixed. One should first rotate their basis into a diagonal one and then canonically normalize the new fields.}
    \begin{equation} \label{NSNS-flux-induced potential}
        V_{H_{3,i}} = \Lambda_{H_{3,i}} \, \e^{\frac{2}{\sqrt{d-2}} \, \kappa_d \tdelta - \frac{6}{\sqrt{s_i}} \, \kappa_d \tsigma_i} =  \Lambda_{H_{3,i}} \, \e^{- \sdil - \frac{2 (3d-6+s_i)}{d-2} \, \omega_i - \frac{2 s_2}{d-2} \, \omega_{i_*}},
    \end{equation}
    with $i_*$ denoting the subspace other than $\smash{\mathrm{K}_{s_i}}$, where the positive-definite prefactor is
    \begin{equation} \label{Lambda_H}
        \Lambda_{H_{3,i}} = \dfrac{1}{2} \, \dfrac{2 \pi}{g_s^2} \Bigl( \dfrac{g_s^2}{4 \pi \kappa_d^2} \Bigr)^{\frac{d}{d-2}} \, \dfrac{1}{3!} \, \bigl[ h^{(3, i)} \bigr]^2 \, \breve{\alpha}_{m_i n_i p_i} \breve{\alpha}_{m_i' n_i' p_i'} l_s^6 \, \breve{g}^{m_i m_i'} \breve{g}^{n_i n_i'} \breve{g}^{p_i p_i'}.
    \end{equation}
    \item For heterotic strings, let the background YM-flux be quantized as $\smash{F_{\mathrm{YM}} = f_{\mathrm{YM}}^{(i)} \, \breve{\alpha}_2^i \, t}$, where $\smash{f_{\mathrm{YM}}^{(i)} \in \mathbb{Z}}$ is an arbitrary integer, $\smash{\breve{\alpha}_2^i}$ is a harmonic 2-form in the internal space $\smash{\mathrm{K}_{s_i}}$, with $s_i \geq 2$, and $t$ is a gauge group generator. Then, the induced scalar potential is
    \begin{equation} \label{YM-flux-induced potential}
        V_{F_{\mathrm{YM},i}} = \Lambda_{F_{\mathrm{YM},i}} \, \e^{\frac{2}{\sqrt{d-2}} \, \kappa_d \tdelta - \frac{4}{\sqrt{s_i}} \, \kappa_d \tsigma_i} = \Lambda_{F_{\mathrm{YM},i}} \, \e^{- \frac{1}{2} \sdil - \frac{2 (2d-4+s_i)}{d-2} \, \omega_i - \frac{2 s_2}{d-2} \, \omega_{i_*}},
    \end{equation}
    where the positive-definite prefactor is
    \begin{equation} \label{Lambda_FYM}
        \Lambda_{F_{\mathrm{YM},i}} = \dfrac{1}{8 \pi g_s^2} \Bigl( \dfrac{g_s^2}{4 \pi \kappa_d^2} \Bigr)^{\frac{d}{d-2}} \, \dfrac{1}{2!} \, \bigl[ f_{\mathrm{YM}}^{(i)} \bigr]^2 \, \breve{\alpha}_{m_2 n_2} \breve{\alpha}_{m_2' n_2'} l_s^4 \, \breve{g}^{m_2 m_2'} \breve{g}^{n_2 n_2'} \, \mathrm{tr} \, t^2.
    \end{equation}
    \item Given the $(q-1)$-RR-form $C_{q-1}$, let the associated background RR-flux be quantized as $\smash{F_q = f^{(q, i)} \breve{\alpha}_q^i l_s^{q-1}}$, where $\smash{f^{(q, i)} \in \mathbb{Z}}$ is an arbitrary integer and $\smash{\breve{\alpha}_q^i}$ is a harmonic $q$-form in the internal space $\smash{\mathrm{K}_{s_i}}$, which means that there exists a $q$-cycle $A^{i, q}$ such that $\smash{\int_{A^{q, i}} F_q = f^{(q, i)} l_s^{q-1}}$, with $q \leq s_i$. Then, the induced scalar potential is
    \begin{equation} \label{RR-flux-induced potential}
        V_{F_q, i} = \Lambda_{F_q, i} \, \e^{\frac{d}{\sqrt{d-2}} \, \kappa_d \tdelta + \sqrt{s_{i_*}} \, \kappa_d \tsigma_{i_*} + \frac{s_i - 2q}{\sqrt{s_i}} \, \kappa_d \tsigma_i} = \Lambda_{F_q, i} \, \e^{\frac{q-5}{2} \, \sdil - \frac{2 s_{i_*}}{d-2} \, \omega_{i_*} - (2 q + \frac{2 s_i}{d-2}) \, \omega_i},
    \end{equation}
    where the positive-definite constant term is
    \begin{equation} \label{Lambda_F}
        \begin{split}
            \Lambda_{F_q, i} = \dfrac{1}{2} \, 2 \pi \, & \Bigl( \dfrac{g_s^2}{4 \pi \kappa_d^2} \Bigr)^{\frac{d}{d-2}} \, \dfrac{1}{q!} \, \bigl[f^{(q,i)}\bigr]^2 \breve{\alpha}_{m_{i,1} m_{i,2} \dots m_{i,q}} \breve{\alpha}_{n_{i,1} n_{i,2} \dots n_{i,q}} l_s^{2q} \, \breve{g}^{m_{i,1} n_{i,1}} \breve{g}^{m_{i,2} n_{i,2}} \dots \breve{g}^{m_{i,q} n_{i,q}}.
        \end{split}
    \end{equation}
    \item Let a D$p$-brane or an O$p$-plane wrap the non-compact $d$-spacetime and a $(p+1-d)$-dimensional cycle within the internal space $\smash{\mathrm{K}_{s_i}}$, with $p+1-d \leq s_i$. Then, the scalar potential induced for the dilaton and the radions reads
    \begin{equation} \label{Dp-brane/Op-plane-induced potential}
        \begin{split}
            V_{\mathrm{D}p/\mathrm{O}p, i} & = \Lambda_{\mathrm{D}p/\mathrm{O}p} \, \e^{\frac{d+2}{2 \sqrt{d-2}} \, \kappa_d \tdelta - \frac{\sqrt{s_{i_*}}}{2} \kappa_d \tsigma_{i_*} - \frac{s_i - 2 (p+1-d)}{2 \sqrt{s_i}} \kappa_d \tsigma_i} \\
            & = \Lambda_{\mathrm{D}p/\mathrm{O}p} \, \e^{\frac{p-3}{4} \, \sdil - \frac{s_{i_*} d}{d-2} \, \omega_{i_*} + \bigl[ \frac{s_i d}{d-2} - (p + 1 - d) \bigr] \, \omega_i}.
        \end{split}
    \end{equation}
    Here, the constant term is
    \begin{equation} \label{Lambda_Dp/Op}
        \Lambda_{\mathrm{D}p/\mathrm{O}p, i} = \dfrac{2 \pi}{g_s} \, \Bigl( \dfrac{g_s^2}{4 \pi \kappa_d^2} \Bigr)^{\frac{d}{d-2}} \, m(\epsilon, p).
    \end{equation}
    Here, the additional factor is $m(\epsilon, p) = 1$ for D$p$-branes, whereas for O$p$-planes it takes the form $m(\epsilon, p) = - \epsilon \, 2^{p-4}$, with $\epsilon=\pm1$ representing negative- and positive-tension O-planes, respectively. Whilst for D$p$-branes the scalar potential term is positive-definite, for O$p$-planes it can in principle of both signs.
\end{itemize}

In principle, one can add other terms in the scalar potential, sourced for instance by non-geometric fluxes or NS5-branes: all such terms are of exponential form but they have quite a different coefficient structure.

A general contribution that appears in non-supersymmetric theories is the quantum-generated vacuum energy, which is proportional to the string-theoretic $l$-loop partition function. Below, we consider an isotropic compactification for simplicity. For a number $l$ of string loops, the generic string-frame quantum effective action can be written as
\begin{align*}
    S_{C_l} = - \dfrac{1}{2 \kappa_{10}^2} \int_{\mathrm{X}_{1,9}} \de^{10} x \, \sqrt{-G_{10}} \; \dfrac{1}{\e^{(10-d) \sigma}} \, \biggl[ \dfrac{1}{l_s^2} \, g_s^{2(l-1)} \, \e^{2(l-1) \sdil} \, \e^{n \sigma} I_l \biggr].
\end{align*}
Here, $\Lambda_l = \bigl[ I_l / (2 \kappa_{10}^2 l_s^2) \bigr] \, \e^{n \sigma}$ is the $l$-loop cosmological constant, assumed to have a generic power-law string-frame radius $\rho = \e^\sigma$ dependence at large volume, and the extra $\e^{(10-d) \sigma}$-factor is a normalization due to the fact that the compactification volume is already accounted for by the KK- and winding-states in the partition function \cite{Ginsparg:1986wr}. In the end, the generic $d$-dimensional Einstein-frame closed-string quantum scalar potential takes the form
\begin{equation} \label{closed-string quantum potential}
    V_{C_l} = \Lambda_{C_l} \, \e^{[\frac{2}{\sqrt{d-2}} + l \sqrt{d-2}] \, \kappa_d \tdelta + [ (l-1) \sqrt{10-d} + \frac{n}{\sqrt{10-d}}] \, \kappa_d \tsigma},
\end{equation}
where the constant term is
\begin{equation} \label{Lambda_C_closed}
    \Lambda_{C_l} = 2 \pi I_l \, g_s^{2(l-1)} \, \Bigl( \dfrac{g_s^2}{4 \pi \kappa_d^2} \Bigr)^{\frac{d}{d-2}}.
\end{equation}
This term has a sign fixed by the constant $I_l$: at $l=1$, its sign is determined as by the difference of massless bosons and fermions as $\mathrm{sgn} \, I_1 = - \mathrm{sgn} \, (n_0^b - n_0^f)$. For an open-string sector associated to a D$p$-brane, one can write
\begin{align*}
    S_{C_{l, p}} = - T_{\mathrm{D}p} \int_{\mathrm{W}_{1,p}} \de^{p+1} \xi \, \sqrt{-(\varphi_* G)_{p+1}} \; \dfrac{1}{\e^{(p+1-d) \sigma}} \, \biggl[ g_s^{l-1} \, \e^{(l-1) \sdil} \, \e^{n \sigma} I_{l,p} \biggr].
\end{align*}
In the static gauge and neglecting open-string scalars, one thus finds a potential
\begin{equation} \label{open-string quantum potential}
    V_{C_{l,p}} = \Lambda_{C_{l,p}} \, \e^{[\frac{d+2}{2 \sqrt{d-2}} + l \frac{\sqrt{d-2}}{2}] \, \kappa_d \tdelta + [ \frac{2n - (10-d)}{2 \sqrt{10-d}} + l \frac{\sqrt{10-d}}{2}] \, \kappa_d \tsigma},
\end{equation}
with the definition
\begin{equation} \label{Lambda_C_open}
    \Lambda_{C_{l,p}} = 2 \pi I_{l,p} \, g_s^{l-1} \, \Bigl( \dfrac{g_s^2}{4 \pi \kappa_d^2} \Bigr)^{\frac{d}{d-2}}.
\end{equation}

\newpage
\begin{acknowledgments}
\subsection*{Acknowledgments}
GS and FT would like to thank B.V. Bento, M. Cicoli, S. Cremonini, M. Rajaguru, C. Vafa, V. Van Hemelryck and T. Wrase for compelling discussions and valuable comments. GS and FT are supported in part by the DOE grant DE-SC0017647. HVT is supported in part by the NSF CAREER grant DMS-1843320 and a Vilas Faculty Early-Career Investigator Award.
\end{acknowledgments}

\bibliographystyle{apsrev4-1}
\bibliography{report.bib}

\end{document}